\providecommand{\tabularnewline}{\\}
\newenvironment{proof}{\noindent\textit{Proof:}}{\hfill$\square$}
\newcommand*{\indep}{%
	\rotatebox[origin=c]{90}{$\models$}
}
\newcolumntype{.}{D{.}{.}{-1}}
\newcolumntype{d}[1]{D{.}{.}{#1}}
\theoremstyle{definition}
\newtheorem{assumption}{Assumption}\newtheorem{example}{Example}\newtheorem{theorem}{Theorem}\newtheorem{remark}{Remark}\newtheorem{thm}{Theorem}\newtheorem{lemma}{Lemma}
\newcommand{\spacingset}[1]{\renewcommand{\baselinestretch}%
{#1}\small\normalsize}
\newcommand{\pr}{\mathbb{P}}
\newcommand{\E}{\mathbb{E}}
\newcommand{\V}{\mathbb{V}}
\newcommand{\T}{\text{T}}
\newcommand{\I}{\mathbf{I}}
\begin{document}
\title{\textbf{Robust analyses for longitudinal clinical trials with missing
and non-normal continuous outcomes}}

\author{Siyi Liu$^1$, Yilong Zhang$^2$, Gregory T Golm$^{2}$, Guanghan (Frank) Liu$^{2,3}$, Shu Yang$^1$} 

\date{\vspace{-5ex}}

\maketitle
\begin{center}
$^{1}$Department of Statistics, North Carolina State University, Raleigh, NC, USA \\
$^{2}$Merck \& Co., Inc., Kenilworth, NJ, USA \\
$^{3}$Posthumous
\end{center}



\spacingset{1.5} 
\begin{abstract}
Missing data is unavoidable in longitudinal clinical trials, and outcomes
are not always normally distributed. In the presence of outliers or
heavy-tailed distributions, the conventional multiple imputation with
the mixed model with repeated measures analysis of the average treatment
effect (ATE) based on the multivariate normal assumption may produce
bias and power loss. Control-based imputation (CBI) is an approach
for evaluating the treatment effect under the assumption that participants
in both the test and control groups with missing outcome data have
a similar outcome profile as those with an identical history in the
control group. We develop a general robust framework to handle non-normal
outcomes under CBI without imposing any parametric modeling assumptions.
Under the proposed framework, sequential weighted robust regressions
are applied to protect the constructed imputation model against non-normality
in both the covariates and the response variables. Accompanied by
the subsequent mean imputation and robust model analysis, the resulting
ATE estimator has good theoretical properties in terms of consistency
and asymptotic normality. Moreover, our proposed method guarantees
the analysis model robustness of the ATE estimation, in the sense
that its asymptotic results remain intact even when the analysis model
is misspecified. The superiority of the proposed robust method is
demonstrated by comprehensive simulation studies and an AIDS clinical
trial data application.

\noindent \textbf{keywords:} Longitudinal clinical trial; missing
data; multiple imputation; robust regression; sensitivity analysis.
\end{abstract}
\newpage{}

\section{Introduction}

\subsection{Missing data in clinical trials}

Analysis of longitudinal clinical trials often presents difficulties
as inevitably some participants do not complete the study, thereby
creating missing outcome data. Additionally, some outcome data among
participants who complete the study may not be of interest on account
of intercurrent events such as initiation of rescue therapy prior
to the analysis time point. With the primary interest focusing on
evaluating the treatment effect in longitudinal clinical trials, the
approach to handling missingness plays an essential role and has gained
substantial attention from the US Food and Drug Administration (FDA)
and National Research Council \citep{little2012prevention}. The ICH
E9(R1) addendum provides a detailed framework of defining estimands
to target the major clinical question in a population-level summary
with the consideration of intercurrent events that may cause additional
missingness \citep{international2019addendum}. 

The missing at random (MAR; \citealp{rubin1976inference}) mechanism
is often invoked in analyses that seek to evaluate the treatment efficacy.
However, MAR is unverifiable and may not be practical in some clinical
trials. Further, if the response at the primary time point is of interest
regardless of whether participants have complied with the test or comparator
treatments through the primary time point (corresponding to a \textquoteleft treatment
policy\textquoteright{} intercurrent event strategy), an analysis
based on the MAR assumption would not be appropriate, because such
an analysis would assume that responses in those who drop out would
follow the same trajectory as responses in those who remain in treatment.
A more plausible assumption would be that the treatment effect may
quickly fade away, leading to a missing not at random (MNAR) assumption
that responses among those who fail to complete treatments in both
treatment groups behave similarly to the responses among those in
the control group with identical historical covariates. Drawn on the
idea of the zero-dose model in \citet{little1996intent}, \citet{carpenter2013analysis}
refer to this scenario as the control-based imputation (CBI). Since
the CBI represents a deviation from MAR, it is widely used in sensitivity
analyses to explore the robustness of the study results against the
untestable MAR assumption (e.g., \citealp{carpenter2013analysis,cro2016reference}).
Furthermore, an increasing number of clinical studies have applied
this approach to primary analyses \citep{tan2021review}. Throughout
the paper, we focus on jump-to-reference (J2R) as one favorable scenario
of the CBI used in the FDA statistical review and evaluation reports
(e.g., \citealp{cr2016tresiba}), which assumes that the missing outcomes
in the treatment group will have the same outcome mean profile as
those with identical historical information in the control group.
Our goal is to assess the average treatment effect (ATE) under J2R.

\subsection{Multiple imputation}

Multiple imputation (MI; \citealp{rubin2004multiple}) followed by
a mixed-model with repeated measures (MMRM) analysis acts as a standard
approach to analyze longitudinal clinical trial data under J2R. The
main idea of MI applied in longitudinal trials is to use MMRM to impute
the missing components and then conduct full-data analysis on each
imputed dataset. The simple implementation and high flexibility of
MI underlie the recommendation of this approach by the FDA and National
Research Council \citep{little2012prevention}. 

However, this approach relies heavily on the parametric modeling assumptions
in the construction of both the imputation and the analysis model,
where a normal distribution is typically assumed. In reality, the
distribution of the outcomes may suffer from extreme outliers or a
heavy tail, which contradicts the normality assumption. A motivating
CD4 count dataset in Section \ref{sec:moti_exmp} further addresses
that a simple transformation such as the log transformation sometimes
cannot fix the non-normality issue \citep{mehrotra2012analysis}.
In the presence of outliers or heavy tails, applying the methods that
rely on the normal distribution may produce bias and power loss. To
tackle the issue in longitudinal clinical trials under MAR, \citet{mogg2007analysis}
and \citet{mehrotra2012analysis} suggest substituting the conventional
analysis of covariance model in the full-data analysis step of MI
with th\textcolor{black}{e rank-based reg}ression \citep{jaeckel1972estimating}
or Huber robust regression \citep{huber1973robust} to down-weight
the impact of non-normal response values. When the missingness mechanism
is MNAR, a gap exists in the extension of the robust method to handle
the MNAR-related scenarios. 

\subsection{Our contribution: a robust framework}

We develop a general robust framework to evaluate the ATE for non-normal
longitudinal outcomes with missingness under the scenario where missing
response data in both the test and reference groups are assumed to
follow the same trajectory as the complete data in the reference group.
We propose applying robust regression in conjunction with mean imputation
to relax the parametric modeling assumption required by MI in both
the imputation and analysis stages. Inspired by the sequential linear
regression model involved in many longitudinal studies, where the
current outcomes are regressed recursively on the historical information
\citep{tang2017efficient}, we replace the least squares (LS) estimator
with the estimator obtained by minimizing the robust loss function
such as the Huber loss, the absolute loss \citep{huber2004robust},
and the $\varepsilon$-insensitive loss \citep{smola2004tutorial},
to mitigate the impact of non-normality in the response variable.
While the robust regression lacks the protection against outliers
in the covariates \citep{chang2018robust}, a weighted sequential
robust regression model is put forward using the idea in \citet{carroll1993robustness}
to down-weight the influential covariates by a robust Mahalanobis
distance. Followed by mean imputation and a robust analysis step,
the estimator from our proposed method has solid theoretical guarantees
in terms of consistency and asymptotic normality. 

\citet{rosenblum2009using} establish a test robustness result for randomized clinical trials with complete data; i.e., for a wide range of analysis models, testing the existence of the non-zero ATE has an asymptotically correct type-1 error even under model misspecification. However, they focus only on the LS model estimators when no ATE exists; and the property remains unclear when the model is estimated via the robust loss function under any arbitrary ATE value. To uncover the ambiguity, we extend the test robustness property to our proposed method for  ATE estimation in the context of missing data. We formally show that the ATE estimator obtained from the various non-LS loss functions, including the Huber loss, the absolute loss, and the $\epsilon$-insensitive loss is analysis model-robust, in the sense that its asymptotic properties remain the same even when the analysis model is incorrectly specified. Although the paper mainly focuses on the J2R scenario, the established method and the desired theoretical properties are extendable to robust estimators under other MNAR-related conditions.

The rest of the paper is organized as follows. Section \ref{sec:moti_exmp}
addresses a real-data example to motivate the demand for the robust
method. Section \ref{sec:setup} introduces notations, assumptions
under J2R, and an overview of the existing methods to handle missingness
along with their drawbacks in the presence of non-normal data. Section
\ref{sec:robust} presents our proposed robust method and its detailed
implementation steps. Section \ref{sec:theo} provides the asymptotic
results of the ATE estimator and discusses the analysis model robustness
property. Section \ref{sec:simu} conducts comprehensive simulation
studies to validate the proposed method. Section \ref{sec:app} returns
to the motivating example to illustrate the performance of the robust
method in practice. Section \ref{sec:conclusion} draws the conclusion.

\section{A motivating application \label{sec:moti_exmp}}

Study 193A conducted by the AIDS Clinical Trial Group compares the
effects of dual or triple combinations of the HIV-1 reverse transcriptase
inhibitors \citep{henry1998randomized}. The data consists of the
longitudinal outcomes of the CD4 count data at baseline and during
the first 40 weeks of follow-up, with the fully-observed baseline
covariates as age and gender. In the trial, the participants are
randomly assigned among the four treatments regarding dual or triple
therapies. We focus on the treatment comparison between arm 1 (zidovudine
alternating monthly with 400 mg didanosine) and arm 2 (zidovudine
plus 400mg of didanosine plus 400mg of nevirapine). As arm 1 involves
fewer combinations of inhibitors than arm 2, we view it as the reference
group. Among individuals in these two arms, we delete the ones with
missing baseline CD4 counts, partition the time into discrete intervals
$(0,12]$, $(12,20]$, $(20,28]$, $(28,36]$ and $(36,40]$, and
create a dataset with a monotone missingness pattern.

Since the original CD4 counts are highly skewed, we conduct a log
transformation to get the transformed CD4 counts as $\log(\text{CD4}+1)$
and use them as the outcomes of interest. Figure \ref{fig:sp_real}
presents the spaghetti plots of the transformed CD4 counts. Although
there are no outstanding outliers, severe missingness is evident in
the data, with only 34 of 320 participants in arm 1 and 46 of 330
participants in arm 2 completing the trial. The high dropout rates
in this data reflect a typical missing data issue in longitudinal
clinical trials, leading to the demand of conducting imputation for
the missing components to prevent the substantial information loss
if we focus on only the complete data. 

\begin{figure}
\caption{Spaghetti plots of the log-transformed CD4 count data separated by
the two treatments. \label{fig:sp_real}}

\centering{}\includegraphics[scale=0.4]{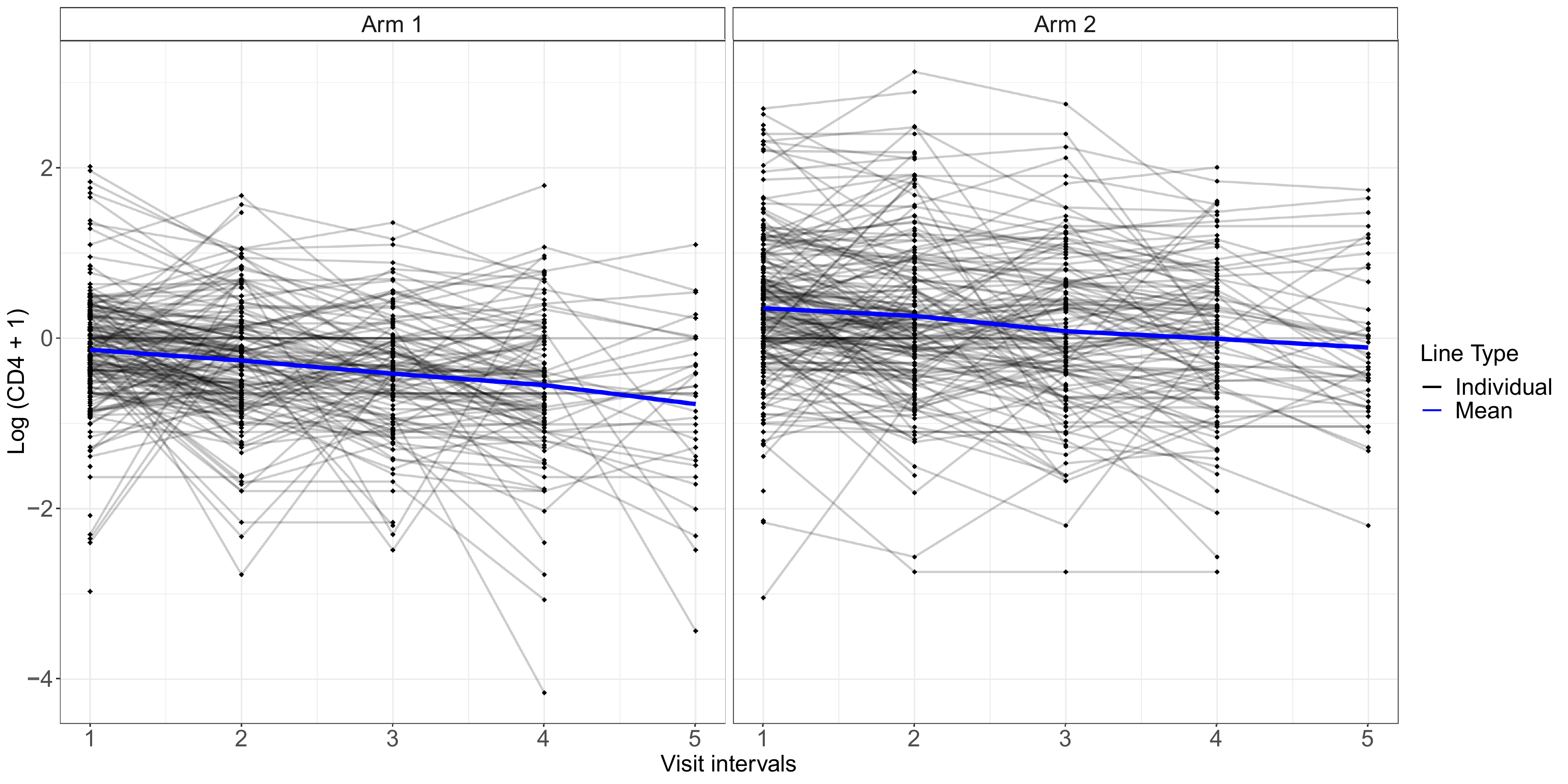}
\end{figure}

We check the normality of the data by fitting sequential linear
regressions on the current outcomes against all historical information
and examining the conditional residuals at each visit point for model
diagnosis. An assessment of the normality of the responses via the
Shapiro-Wilk test and the normal QQ plots are presented in Figure \ref{fig:diagnosis}.
Each normality test indicates a violation of the normal assumption, and the normal QQ plots reveal
that the CD4 counts remain heavy-tailed even after the log transformation.
Under this circumstance, potentially biased and inefficient treatment
effect estimates may occur when applying the conventional MI along
with the MMRM analysis. It motivates the development of a robust method
to assess the treatment effect precisely under non-normality.



\begin{figure}
\caption{Diagnosis of the conditional residuals at each visit.\label{fig:diagnosis}}

\centering{}\includegraphics[scale=0.6]{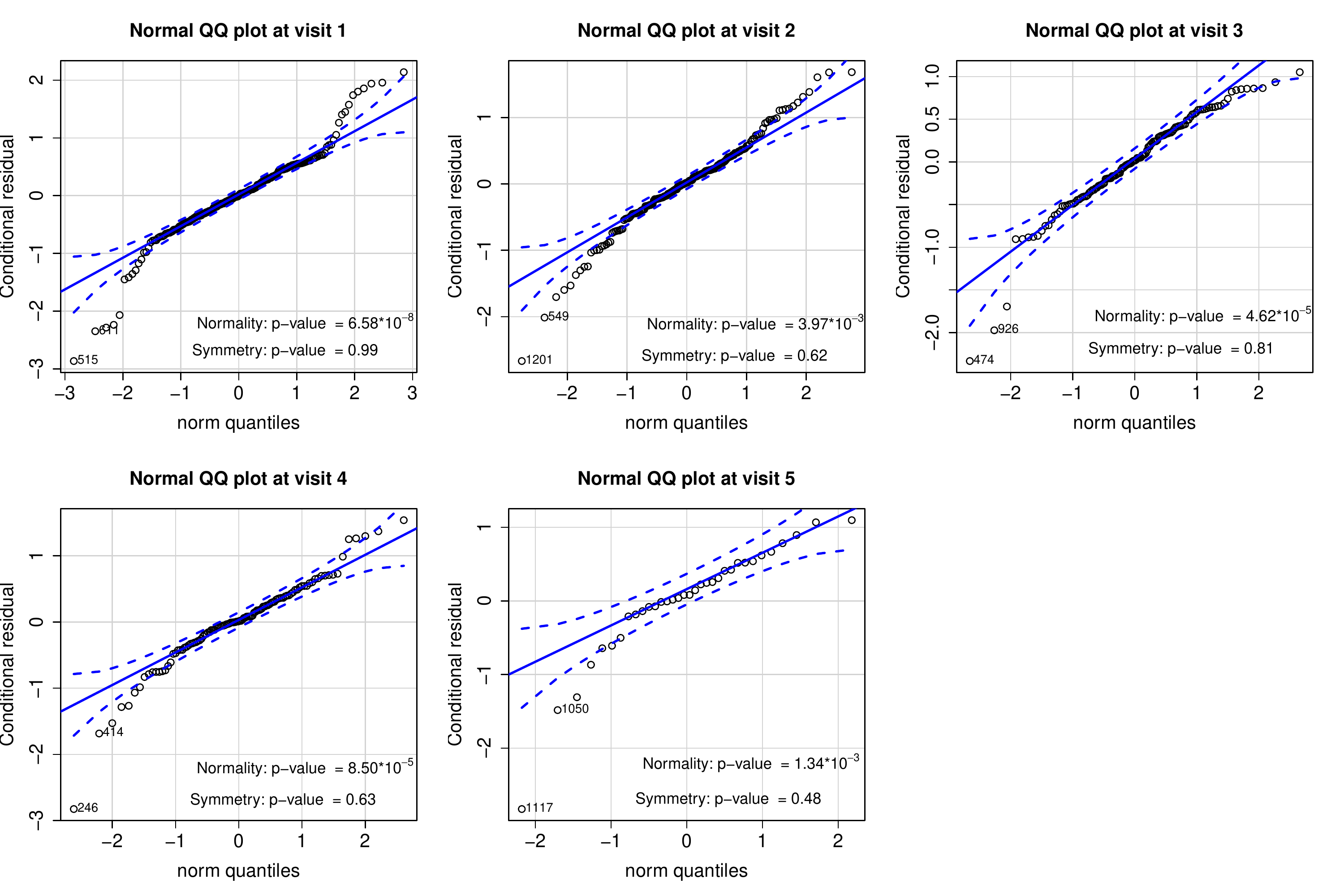}
\end{figure}

\section{Basic setup \label{sec:setup}}

Consider a longitudinal clinical trial with $n$ participants and
$t$ follow-up visits. Let $A_{i}$ be the binary treatment without
loss of generality, $X_{i}$ be the $p$-dimensional fully-observed
baseline covariates including the intercept term with a full-column
rank, $Y_{is}$ be the continuous outcome of interest at visit $s$,
where $i=1,\cdots,n$, and $s=1,\cdots,t$. In longitudinal clinical
trials, participants are randomly assigned to different treatment
groups with non-zero probabilities. When missingness is involved,
denote the observed indicator at visit $s$ as $R_{is}$, where $R_{is}=1$
if $Y_{is}$ is observed and $R_{is}=0$ otherwise. We assume a monotone
missingness pattern throughout the paper, i.e., if the missingness begins
at visit $s$, we have $R_{is'}=1$ for $s'<s$ and $R_{is'}=0$ for
$s'\geq s$. Denote $H_{is}=(X_{i}^{\T},Y_{i1},\cdots,Y_{is})^{\T}$
as the history up to visit $s$, with $H_{i0}=X_{i}$. Since
the outliers in the baseline covariates can be identified and removed
by data inspection before further analysis, throughout we assume
that no outliers exist in the baseline covariates. However, outliers
may exist in the longitudinal outcomes due to data-collection error
in the long period of study.

In most longitudinal clinical trials with continuous outcomes, the
endpoint of interest is the mean difference of the outcomes at the
last visit point between the two treatments. We utilize the pattern-mixture
model (PMM; \citealp{little1993pattern}) framework to express the
ATE as a weighted average over the missing patterns, i.e., $\tau=\mathbb{E}(Y_{it}\mid A_{i}=1)-\mathbb{E}(Y_{it}\mid A_{i}=0)$
where $\E(Y_{it}\mid A=a)=\sum_{s=1}^{t+1}\E(Y_{it}\mid R_{is-1}=1,R_{is}=0,A_{i}=a)\pr(R_{is-1}=1,R_{is}=0\mid A_{i}=a)$
if we let $R_{i0}=1$ and $R_{it+1}=0$ for each individual. The assumed
condition regarding the missing components is formed for the identification
of the pattern-specific expectation $\E(Y_{it}\mid R_{is-1}=1,R_{is}=0,A_{i}=a)$.
We describe one scenario based on the CBI model proposed by \citet{carpenter2013analysis}
for illustration.

\subsection{Jump-to-reference imputation model}

The CBI model \citep{carpenter2013analysis} provides a scenario to
model missingness in longitudinal clinical trials. We focus on one
specific CBI model as J2R, whose plausibility reveals if the investigators
believe that participants who discontinue the treatment have the same
outcome mean performance as the ones in the control group with the
same covariates. The following assumptions illustrate the J2R imputation
model for the ATE identification.

\begin{assumption}[Partial ignorability of missingness]\label{assump:miss}
$R_{is}\indep Y_{is'}\mid(H_{is-1},A_{i}=0)$ for $s'\geq s$.

\end{assumption}

Assumption \ref{assump:miss} characterizes the MNAR missing mechanism
under J2R. The conventional MAR assumption is only required for the
missing data in the control group. We do not impose any missing assumptions
in the treatment group.

\begin{assumption}[J2R outcome mean model]\label{assump:j2r-mean}
For individuals who receive treatment $a$ with historical information
$H_{is-1}$ and drop out at visit $s$, $\E(Y_{it}\mid H_{is-1},R_{is-1}=1,R_{is}=0,A_{i}=a)=\E(Y_{it}\mid H_{is-1},A_{i}=0).$

\end{assumption}

Assumption \ref{assump:j2r-mean} offers a strategy to model the conditional mean
of the missing component under J2R. Given the same historical information,
the outcome mean will ``jump'' to the same conditional mean in the
control group no matter the prior treatment. Combining with Assumption
\ref{assump:miss}, the conditional expectation $\E(Y_{it}\mid H_{is-1},A_{i}=0)=\E\big\{\cdots\E(Y_{it}\mid H_{it-1},R_{it}=1,A_{i}=0)\cdots\mid H_{is-1},R_{is}=1,A_{i}=0\big\}$
is identified through a series of sequential regressions on the current
outcome against the available historical information. 

Throughout the paper, we assume a linear relationship between the
outcomes and the historical covariates in the J2R imputation model
for simplicity.\textcolor{black}{{} Extensions to nonlinear relationships
are manageable, if the sequential regressions of the observed data
are fitted in backward order, i.e., we start from the available data
at the last visit point and use the predicted value as the outcome
to regress on the previous history recursively to construct the imputation
model. The elaboration of the sequential fitting procedure is provided
in Section \ref{sec:supp_seqreg} in the supplementary material. }

\subsection{Overview of the existing methods and the drawbacks \label{subsec:overview}}

MI proposed by \citet{rubin2004multiple} provides a fully parametric
approach to handle missingness under MNAR. Normality is often assumed
for its simplicity and robustness against moderate model misspecification
in the implementation of MI \citep{mehrotra2012analysis}. One common
MI procedure in longitudinal clinical trials under J2R is summarized
in the following steps: 

\begin{enumerate}
\setlength{\itemindent}{1.5em}

\item[\textbf{Step 1}.] For the control group, fit the sequential
regression for the observed data at each visit point against the available
history. Denote the estimated model parameter as $\hat{\theta}_{s-1}$
for $s=1,\cdots,t$. 

\item[\textbf{Step 2}.] Impute missing data sequentially to form
$M$ imputed datasets: For individuals who have missing values at
visit $s$, impute $Y_{is}^{(m)}$ from the conditional distribution
$f(Y_{is}\mid H_{is-1}^{*(m)},A_{i}=0;\hat{\theta}_{s-1})$ estimated
in Step 1, where $H_{is-1}^{*(m)}=(X_{i}^{\T},Y_{i1}^{*(m)},\cdots,Y_{is-1}^{*(m)})^{\T}$
and $Y_{is}^{*(m)}=R_{is}Y_{is}+(1-R_{is})Y_{is}^{(m)}$ for $m=1,\cdots,M$.

\item[\textbf{Step 3}.] For each imputed dataset, perform the complete
data analysis by fitting the imputed outcomes at the last visit point
on a working analysis model. Denote $\hat{\tau}^{(m)}$ as the ATE
estimator of the $m$th imputed dataset.

\item[\textbf{Step 4}.] Combine the estimation results from $M$
imputed datasets and obtain the MI estimator as $\hat{\tau}_{\text{MI}}=M^{-1}\sum_{m=1}^{M}\hat{\tau}^{(m)}$,
with the variance estimator by Rubin's rule as 
\[
\mathbb{\hat{V}}(\hat{\tau}_{\text{MI}})=\frac{1}{M}\sum_{m=1}^{M}\mathbb{\hat{V}}(\hat{\tau}^{(m)})+(1+\frac{1}{M})B_{\text{M}},
\]
where $B_{\text{M}}=(M-1)^{-1}\sum_{m=1}^{M}(\hat{\tau}^{(m)}-\hat{\tau}_{\text{MI}})^{2}$
is the between-imputation variance. 

\end{enumerate}

Traditionally, the imputation model in Step 1 and the analysis model
in Step 3 are obtained from the MMRM analysis,
where we assume an underlying normal distribution for both the observed
and the imputed data. However, as illustrated in the motivating CD4
count dataset in Section \ref{sec:moti_exmp}, normality may be violated,
leading to a biased estimate of the target ATE parameter. With the
consideration of non-normality, \citet{mogg2007analysis} and \citet{mehrotra2012analysis}
modify the analysis model in Step 3 by replacing the LS estimator
with the estimator obtained from the robust loss function.

One drawback of MI is that it is fully parametric. The consistency
of the MI estimator relies heavily on the correct specification of
the imputation distribution, i.e., the conditional distribution given
the observed data, which is often assumed to be normal. When a severe
deviation from the assumed imputation distribution is detected in
the data, the estimation may not be reliable. The possible misspecification
of the imputation distribution also exists in the ``robust'' approaches
proposed by \citet{mogg2007analysis} and \citet{mehrotra2012analysis},
where the imputation model still depends on the normality assumption
as required by MI. Moreover, the MI estimator is not efficient in
general. The inefficiency becomes more serious when it comes to interval
estimation. The variance estimation using Rubin's combining rule may
produce an inconsistent variance estimate even when the imputation
and analysis models are the same correctly specified models \citep{wang1998large,robins2000inference}.
Under the MNAR assumption, the overestimation issue raised from Rubin's
variance estimator is more pronounced (e.g., \citealp{lu2014analytic,liu2016analysis,yang2016note,guan2019unified,yang2020smim,di2022}).
One can resort to the bootstrap variance estimation to obtain a consistent
variance estimator, which however exaggerates the computational cost. 

The unsatisfying performance of MI under non-normality motivates us
to develop a robust approach to accommodate the possible model misspecification
resulting from outliers or heavy-tailed errors without the reliance
on parametric models. In the following sections, a weighted robust
regression model in conjunction with mean imputation is proposed to
overcome the issues in MI. 

\section{Proposed robust method \label{sec:robust}}

We propose a mean imputation procedure based on robust regression
in both the imputation and the analysis models to obtain valid inferences
under J2R when the data suffers from a heavy tail or extreme outliers.
To relax the strong parametric modeling assumption required by MI,
mean imputation is preferred. \citet{mehrotra2012analysis} shed light
on the possibility of incorporating the robust regression in the analysis
step of MI to handle non-normality. Based on this idea, we further
suggest using the sequential robust regression model in the imputation
step to protect against deviations from normality for the observed
data.

Throughout this section, we focus on the robust estimators obtained
from minimizing the robust loss functions such as the Huber loss,
the absolute loss \citep{huber2004robust}, and the $\varepsilon$-insensitive loss \citep{smola2004tutorial} to account
for the impact of outliers or heavy-tailed data. To obtain a valid
mean-type estimator, a symmetric error distribution assumption is
imposed whenever a robust regression is applied.

Motivated by the sequential linear regression model under normality,
where we regress the current outcomes on the historical information
at each visit point to produce the sequential inferences, we develop
a sequential robust regression procedure for the observed data to
obtain valid inferences that are less likely to be influenced by non-normality.
For the longitudinal data with a monotone missingness pattern, a robust
regression is fitted on the observed data at each visit point, incorporating
the observed historical information. Specifically, for the available
data at visit $s$ for $s=1,\cdots,t$, the imputation model parameter
estimate $\hat{\alpha}_{s-1}$ minimizes the loss function
\[
\sum_{i=1}^{n}(1-A_{i})R_{is}\rho(Y_{is}-H_{is-1}^{\T}\alpha_{s-1}).
\]
Here, $\rho(x)$ is the robust loss function. For example, the Huber
loss function is defined as $\rho(x)=0.5x^{2}\mathbb{I}(|x|\ensuremath{<l})+\left\{ l|x|-0.5l^{2}\right\} \mathbb{I}(|x|\ensuremath{\geq l})$,
where the constant $l>0$ controls the influence of the non-normal
data points and $\mathbf{\mathbb{I}}(\cdot)$ is an indicator function
\citep{huber1973robust}. When $l\rightarrow\infty$, the Huber-type
robust estimator is equivalent to the conventional LS estimator. We
also provide the definitions of the absolute loss and the $\varepsilon$-insensitive
loss in Section \ref{subsec:supp_cons} in the supplementary material. 

\begin{remark}[Tuning constant $l$ in the Huber loss function] 

The tuning constant $l$ in the Huber loss function mitigates the
impact of extreme values and heavy-tailed errors in the data. \citet{kelly1992robust}
argues the existence of trade-offs between the bias and variance in
the selection of the tuning constant. A small value of $l$ provides
more protections against non-normal values, yet suffers from the loss
of efficiency if the data is indeed normal. In practice, a common
recommendation of the tuning constant is $l=1.345\sigma$, where $\sigma$
is the standard deviation of the errors \citep{fox2002robust}. We
use the Huber loss function with this tuning parameter to get the
robust estimators throughout the simulation studies and real data
application. 

\end{remark}

While the estimator from the robust loss function provides protection
against extreme outliers in the response variables, it is not robust
against outliers in the covariates \citep{chang2018robust}, leading
to an imprecise estimation and a loss of efficiency. In longitudinal
data with the use of sequential regressions, the issue becomes more
profound, where the outcome is treated both as the response variable
in the current regression and as the covariate in the subsequent regression.
To deal with the outliers in the covariates, we utilize the idea in
\citet{carroll1993robustness} to down-weight the high leverage point
in the covariates via a robust Mahalanobis distance. Specifically,
for a $p$-dimensional covariate $X$, we calculate the robust Mahalanobis
distance as $d=(X-\mu)^{\T}V^{-1}(X-\mu)$, where $\mu$ is a robust
estimate of the center and $V$ is a robust estimate of the covariance
matrix. The trisquared redescending function is applied to form the
assigned weights as $w(u;\nu)=u\left\{ 1-(u/\nu)^{2}\right\} ^{3}\mathbb{I}(|u|\le\nu)$,
where $u=(d/\nu)^{1/2}$ and $\nu$ is a tuning parameter to control
the down-weight level. Therefore, in the sequential weighted robust
regression model, the robust estimate $\hat{\alpha}_{s-1}^{w}$ minimizes
the weighted loss
\begin{align}
\sum_{i=1}^{n}(1-A_{i})R_{is}w(H_{is-1};\nu_{s-1})\rho(Y_{is}-H_{is-1}^{\T}\alpha_{s-1}).\label{eq:seqrrw}
\end{align}

\begin{remark}[Tuning constants in the trisquared redescending function] 

When selecting the tuning parameter, \citet{carroll1993robustness}
used a fixed constant $\nu=8$ to illustrate a specific down-weight
behavior for cross-sectional studies. In longitudinal clinical trials
involving multiple weighted sequential regressions, the tuning parameter
$\nu_{s-1}$ can be selected via cross-validation at each visit point,
for $s=1,\cdots,t$. The main idea is to conduct a $K$-fold cross-validation
for the observed data at each visit point and determine the optimal
tuning parameter $\nu_{s-1}$ which minimizes the squared errors.
Specifically, we first partition the observed data at visit $s$ into
$K$ parts denoted as $P_{1},\cdots,P_{K}$. The part $P_{j}$ is
then left for the test, and the remaining $(K-1)$ folds are utilized
to learn the robust estimator $\hat{\alpha}_{s-1,-j}^{w}$, for $j=1,\cdots,K$.
The optimal $\nu_{s-1}$ minimizes the cross-validation sum of the
squared errors $\sum_{j=1}^{K}\sum_{i\in P_{j}}(Y_{is}-H_{is-1}^{\T}\hat{\alpha}_{s-1,-j}^{w})^{2}$. 

\end{remark}

After obtaining the robust estimates of the imputation model parameters,
we impute the missing components by their conditional outcome means
sequentially based on Assumptions \ref{assump:miss} and \ref{assump:j2r-mean}
and construct the imputed data $Y_{is}^{*}=R_{is}Y_{is}+(1-R_{is})H_{is-1}^{*\T}\hat{\alpha}_{s-1}^{w}$,
where $H_{is-1}^{*}=(X_{i}^{\T},Y_{i1}^{*},\cdots,Y_{is-1}^{*})^{\T}$.
Complete data analysis is then conducted on the imputed data, where
we again minimize the robust loss function to mitigate the impact
of outliers in the response variable. Note that since we assume that
there are no outliers in the baseline covariates, assigning the weights
to the loss function becomes unnecessary. Consider a general form
of the working model in the analysis step as
\begin{equation}
\mu(A,X\mid\gamma)=Ag(X;\gamma^{(0)})+h(X;\gamma^{(1)}),\label{eq:model_form}
\end{equation}
where $g(X;\gamma^{(0)})$ and $h(X;\gamma^{(1)})$ are integrable
functions bounded on compact sets, and $\gamma=(\gamma^{(0)\T},\gamma^{(1)\T})^{\T}$.
The robust estimator $\hat{\gamma}=(\hat{\gamma}^{(0)\T},\hat{\gamma}^{(1)\T})^{\T}$
can be found by minimizing the loss
\begin{equation}
\sum_{i=1}^{n}\rho\left\{ Y_{it}^{*}-\mu(A_{i},X_{i}\mid\gamma)\right\} ,\label{eq:analysis-1}
\end{equation}
and the resulting ATE estimator $\hat{\tau}$ is estimated by the
mean differences between the two groups as $\hat{\tau}=n^{-1}\sum_{i=1}^{n}{\color{black}g(X_{i};\hat{\gamma}^{(0)})}$.

The modeling form \eqref{eq:model_form} is commonly satisfied in
randomized trials when constructing the working model for analysis.
For example, the standard analysis model without the interaction term
between the treatment and the baseline covariates as $\mu(A,X\mid\gamma)=\gamma^{(0)}A+\gamma^{(1)\T}X$
gratifies this form when $g(X;\gamma^{(0)})=\gamma^{(0)}$ and $h(X;\gamma^{(1)})=\gamma^{(1)\T}X$;
a similar logic applies to the interaction model $\mu(A,X\mid\gamma)=\gamma^{(0)\T}AX+\gamma^{(1)\T}X$.
As we will elaborate in the next section, the ATE estimator $\hat{\tau}$
is analysis model-robust, in the sense that its asymptotic results
stay intact regardless of the specification of the analysis model.
The implementation of the proposed mean imputation-based robust method
is as follows.

\begin{enumerate}
\setlength{\itemindent}{1.5em}
\item[\textbf{Step 1}.] For the observed data in the control group,
fit the sequential weighted robust regression at each visit point
and get the sequential model parameter estimates $\hat{\alpha}_{s-1}^{w}$
by minimizing the weighted loss \eqref{eq:seqrrw} for $s=1,\cdots,t$.

\item[\textbf{Step 2}.] Impute missing data sequentially by the conditional
outcome mean according to Assumptions \ref{assump:miss} and \ref{assump:j2r-mean}
and obtain the imputed data $Y_{is}^{*}=R_{is}Y_{is}+(1-R_{is})H_{is-1}^{*\T}\hat{\alpha}_{s-1}^{w}$,
where $H_{is-1}^{*}=(X_{i}^{\T},Y_{i1}^{*},\cdots,Y_{is-1}^{*})^{\T}$
for $s=1,\cdots,t$.

\item[\textbf{Step 3}.] Set up an appropriate working model $\mu(A,X\mid\gamma)$
in the form \eqref{eq:model_form}, perform the complete data analysis
and get the ATE estimator $\hat{\tau}$ by minimizing the loss function
\eqref{eq:analysis-1}.

\end{enumerate}

The good theoretical properties of the ATE estimator along with a
linearization-based variance estimator are provided in the next section.

\section{Theoretical properties and analysis model robustness \label{sec:theo}}

We present the asymptotic theory of the ATE estimator in terms of
consistency and asymptotic normality along with a variance estimator
based on three robust loss functions as the Huber loss, the absolute
loss, and the $\varepsilon$-insensitive loss. To illustrate the theorems
in a straightforward way, we introduce additional notations. Denote
$\varphi(H_{is},\alpha_{s-1})=(1-A_{i})R_{is}w(H_{is-1};\nu_{s-1})\psi(Y_{is}-H_{is-1}^{\T}\alpha_{s-1})H_{is-1}$
as the function derived from minimizing the weighted loss function
\eqref{eq:seqrrw} in the imputation model, where $\psi(x)=\partial\rho(x)/\partial x$
is the derivative of the robust loss function, and $\alpha_{s-1,0}$
as the true parameter such that $\E\left\{ \varphi(H_{is},\alpha_{s-1})\mid H_{is-1}\right\} =0$.
Let $\hat{\alpha}^{w}=(\hat{\alpha}_{0}^{w\T},\cdots,\hat{\alpha}_{t-1}^{w\T})^{\T}$
be the combination of the model estimators from $t$ sequential regression
models in the imputation, and $\mathbb{\alpha}_{0}=(\alpha_{0,0}^{\T},\cdots,\alpha_{t-1,0}^{\T})^{\T}$
be the corresponding true model parameters. In terms of the components
in the analysis model, denote $\varphi_{a}(Z_{i},\gamma)=\psi\left\{ Y_{it}^{*}-\mu(A_{i},X_{i}\mid\gamma)\right\} \partial\mu(A_{i},X_{i}\mid\gamma)/\partial\gamma^{\T}$,
where $Z_{i}^{*}=(A_{i},X_{i}^{\T},Y_{it}^{*})^{\T}$ represents the
imputed data in the model, $\gamma_{0}$ is the true parameter such
that $\E\left\{ \varphi_{a}(Z_{i},\gamma)\right\} =0$, and $\tau_{0}$
is the true ATE such that $\tau_{0}=\E(Y_{it}\mid A=1)-\E(Y_{it}\mid A=0)$.
Suppose $\gamma^{(0)}$ is a $d_{0}$-dimensional vector, and $\gamma^{(1)}$
is a $d_{1}$-dimensional vector.

\begin{theorem}\label{thm:cons}

Under the regularity conditions listed in Section \ref{subsec:supp_cons}
in the supplementary material, the ATE estimator $\hat{\tau}\xrightarrow{\pr}\tau_{0}$
as the sample size $n\rightarrow\infty$, for $s=1,\cdots,t$.

\end{theorem}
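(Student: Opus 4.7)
The plan is to chain together three M-estimator consistency arguments in the order the procedure produces them: (i) the sequential weighted imputation estimators $\hat{\alpha}_{s-1}^{w}$, (ii) the analysis estimator $\hat{\gamma}$ computed from the imputed data, (iii) the sample mean defining $\hat{\tau}$. Because Steps 1 and 2 of the procedure condition on the output of the previous step, I would carry this out inductively in $s$.

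First I would handle the imputation model. For each fixed $s$, the estimator $\hat{\alpha}_{s-1}^{w}$ is the Z-estimator solving the empirical equation $n^{-1}\sum_i \varphi(H_{is},\alpha_{s-1})=0$. The regularity conditions in Section \ref{subsec:supp_cons} should supply: a compact parameter space, convexity (or at least a single zero) of the expected criterion at $\alpha_{s-1,0}$, and a dominating envelope so that a uniform law of large numbers gives $\sup_{\alpha}\|n^{-1}\sum_i \varphi(H_{is},\alpha)-\E\{\varphi(H_{is},\alpha)\}\|\to 0$ in probability. Combined with identifiability (the unique zero at $\alpha_{s-1,0}$), the standard M-estimator consistency argument (e.g.\ Theorem 5.7 of van der Vaart) yields $\hat{\alpha}_{s-1}^{w}\xrightarrow{\pr}\alpha_{s-1,0}$. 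Crucially, the weights $w(H_{is-1};\nu_{s-1})$ depend only on the baseline and previously observed outcomes, which are data (not estimated), so they enter just like fixed covariates.

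Second, I would propagate this through the sequential mean imputation. Assume inductively that $\hat{\alpha}_{r-1}^{w}\xrightarrow{\pr}\alpha_{r-1,0}$ for $r\le s-1$, so that the imputed histories $H_{is-1}^{*}$ satisfy $H_{is-1}^{*}=H_{is-1,0}^{*}+o_{\pr}(1)$ uniformly (where $H_{is-1,0}^{*}$ uses the true $\alpha$'s). Then the imputed $Y_{it}^{*}=R_{it}Y_{it}+(1-R_{it})H_{it-1}^{*\T}\hat{\alpha}_{t-1}^{w}$ differs from its population analog by $o_{\pr}(1)$. By Assumptions \ref{assump:miss}--\ref{assump:j2r-mean} together with the symmetric error condition (which aligns the robust regression limit with the conditional mean), the population analog has expectation equal to the pattern-specific mean $\E(Y_{it}\mid H_{is-1},A_i=0)$, so the PMM decomposition of $\tau$ is respected in the limit.

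Third, I would apply the same M-estimator argument to $\hat{\gamma}$. The criterion $n^{-1}\sum_i \rho\{Y_{it}^{*}-\mu(A_i,X_i\mid\gamma)\}$ differs from the infeasible criterion using $Y_{it,0}^{*}$ by $o_{\pr}(1)$ uniformly in $\gamma$, because $\rho$ is Lipschitz (or locally so, with an integrable envelope supplied by the regularity conditions) and $Y_{it}^{*}-Y_{it,0}^{*}=o_{\pr}(1)$. A uniform law of large numbers plus identifiability of $\gamma_0$ yields $\hat{\gamma}\xrightarrow{\pr}\gamma_0$. A final WLLN and continuous mapping then give $\hat{\tau}=n^{-1}\sum_i g(X_i;\hat{\gamma}^{(0)})\xrightarrow{\pr}\E\{g(X;\gamma_0^{(0)})\}$, which equals $\tau_0$ under the form \eqref{eq:model_form} because $g(X;\gamma_0^{(0)})$ is exactly the imputation-corrected treatment contrast at $X$.

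The main obstacle will be the non-smoothness of $\psi$ for the absolute and $\varepsilon$-insensitive losses, which prevents a direct Taylor expansion. I would handle this by staying at the level of uniform convergence of the \emph{criteria} rather than estimating equations, exploiting the convexity of these losses so that pointwise convergence of convex functions upgrades to uniform convergence on compacta (Andersen--Gill / Pollard's convexity lemma). A secondary technical point is checking that the down-weighting $w(\cdot;\nu_{s-1})$ does not shrink the effective support enough to break identifiability of $\alpha_{s-1,0}$; this will be where the regularity conditions must be verified with some care.
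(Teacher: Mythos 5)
Your first two steps track the paper's Lemmas S1 and S2 closely: consistency of each $\hat{\alpha}_{s-1}^{w}$ via a uniform law of large numbers plus the symmetry/identifiability condition, and propagation of that consistency through the sequential mean imputation so that the feasible criterion for $\hat{\gamma}$ is asymptotically equivalent to the infeasible one using the true imputation parameters. Your convexity-lemma device for the non-smooth losses is also a legitimate route (the paper likewise leans on strong convexity before invoking the argmax continuous mapping theorem).

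The genuine gap is in your third step, where you invoke ``a uniform law of large numbers plus identifiability of $\gamma_0$'' to conclude $\hat{\gamma}\xrightarrow{\pr}\gamma_0$. The theorem is explicitly an \emph{analysis-model-robustness} result: $\mu(A,X\mid\gamma)$ is allowed to be misspecified, so there is no reason the population robust-loss criterion should have a minimizer whose $g$-component satisfies $\E\{g(X;\gamma^{(0)})\}=\tau_0$ --- that is precisely what must be proved, and it is where almost all of the paper's effort goes. The paper's argument reparametrizes the working model as $\mu(A,X\mid\gamma)=(A-\pi)g(X;\gamma^{(0)})+\tilde{h}(X;\zeta)$, decomposes the empirical criterion as $\{L_n(\gamma^{(0)},\zeta)-L_n(\gamma_0^{(0)},\zeta)\}+\{L_n(\gamma_0^{(0)},\zeta)-L_n(\gamma_0^{(0)},\zeta')\}$, and then shows by a case analysis on the Huber (and absolute, $\varepsilon$-insensitive) loss that the limit $G_1(\gamma^{(0)},\zeta)$ of the first bracket is strictly positive for every $\gamma^{(0)}\neq\gamma_0^{(0)}$ \emph{uniformly over the nuisance $\zeta$}. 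The key mechanism is that all terms linear in $K_2=(A-\pi)\{g(X;\gamma^{(0)})-g(X;\gamma_0^{(0)})\}$ have expectation zero because of randomization ($A\indep X$, $\E(A-\pi)=0$) together with condition C3 ($\varepsilon\indep A\mid X$), leaving only a nonnegative quadratic term whose positivity is guaranteed by C6--C7. Without this orthogonality argument, the separation of the $\gamma^{(0)}$-identification from the (possibly inconsistent) estimation of $\tilde{h}$ does not follow, and your final appeal to ``$g(X;\gamma_0^{(0)})$ is exactly the imputation-corrected treatment contrast at $X$'' is not available --- under misspecification $g(X;\gamma_0^{(0)})$ need not equal the conditional contrast pointwise; only its expectation is pinned to $\tau_0$, and showing the robust-loss minimizer lands there is the substance of the proof you have skipped.
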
 

\begin{theorem}\label{thm:norm}

Under the regularity conditions listed in Section \ref{subsec:supp_norm}
in the supplementary material, as the sample size $n\rightarrow\infty$,
\[
\sqrt{n}(\hat{\tau}-\tau_{0})\xrightarrow{d}\mathcal{N}\Big(0,\mathbb{V}\left\{ V_{\tau,i}(\alpha_{0},\gamma_{0})\right\} \Big),
\]
where $V_{\tau,i}(\alpha_{0},\gamma_{0})=\left\{ \partial g(X_{i};\gamma_{0}^{(0)})/\partial\gamma^{\T}\right\} c^{\T}V_{\gamma,i}(\alpha_{0},\gamma_{0})$,
\begin{align*}
V_{\gamma,i}(\alpha_{0},\gamma_{0}) & =D_{\varphi}^{-1}\bigg[\varphi_{a}\left\{ Z_{i}^{*}(\beta_{t}),\gamma_{0}\right\} +\sum_{s=1}^{t}\E\left\{ R_{s-1}(1-R_{s})\frac{\partial\mu(A,X\mid\gamma_{0})}{\partial\gamma^{\T}}\frac{\partial\psi(e)}{\partial e}H_{s-1}^{\T}\right\} U_{t,s-1,i}(\alpha_{0})\bigg],
\end{align*}
$U_{t,s-1,i}(\alpha_{0})=\left(\I_{p+s-2},\alpha_{s-1,0}\right)U_{t,s,i}(\alpha_{0})+\left(\mathbf{0}_{p+s-2}^{\T},1\right)\beta_{t,s}q(H_{is},\alpha_{s-1,0})$
for $s<t$, $U_{t,t-1,i}(\alpha_{0})=q(H_{it},\alpha_{t-1,0})$, and
$q(H_{is},\alpha_{s-1,0})=\left[-\partial\E\left\{ \varphi(H_{is},\alpha_{s-1,0})H_{is-1}^{\T}\mid H_{is-1}\right\} /\partial\alpha_{s-1}^{\T}\right]^{-1}\varphi(H_{is},\alpha_{s-1,0}).$
Here, $c^{\T}=(\mathbf{I}_{d_{0}},\mathbf{0}_{d_{0}\times d_{1}})$
is a matrix where $\I_{d_{0}}$ is a $(d_{0}\times d_{0})$-dimensional
identity matrix and $\mathbf{0}_{d_{0}\times d_{1}}$ is a $(d_{0}\times d_{1})$-dimensional
zero matrix, $\mathbf{0}_{p+s-2}$ is a $(p+s-2)$-dimensional zero
vector, $D_{\varphi}=\partial\E\left[\varphi_{a}\left\{ Z_{i}^{*}(\beta_{t}),\gamma_{0}\right\} \right]/\partial\gamma^{\T}$
where $Z_{i}^{*}(\beta_{t})=\left(A_{i},X_{i}^{\T},Y_{it}^{*}(\beta_{t})\right){}^{\T}$
and $Y_{it}^{*}(\beta_{t})$ refers to the imputed value $Y_{it}^{*}$
based on the true imputation parameters $\beta_{t}=(\beta_{t,0}^{\T},\cdots\beta_{t,t-1}^{\T})^{\T}$
which satisfy
\[\begin{cases}
\beta_{t,t-1}=\alpha_{t-1,0} & \text{if \ensuremath{s=t}},\\
\beta_{t,s-1}=(\I_{p+s-2},\alpha_{s-1,0})(\I_{p+s-1},\alpha_{s,0})\cdots(\I_{p+t-3},\alpha_{t-2,0})\alpha_{t-1,0} & \text{if \ensuremath{s<t}},
\end{cases}\]
and $e_{i}=Y_{it}^{*}(\beta_{t})-\mu(A_{i},X_{i}\mid\gamma_{0})$.

\end{theorem}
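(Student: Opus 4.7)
The plan is to derive the asymptotic expansion of $\hat{\tau}$ via a three-stage linearization that tracks errors through (i) the $t$ sequential imputation M-estimators, (ii) the propagation of imputation error into the imputed responses $Y_{it}^{*}$, and (iii) the analysis-stage M-estimator, before invoking the central limit theorem.

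First, I would handle each imputation-stage estimator $\hat{\alpha}_{s-1}^{w}$ individually. It solves the weighted Z-equation $n^{-1}\sum_{i}\varphi(H_{is},\alpha_{s-1})=0$, so standard M-estimation theory under the regularity conditions in Section \ref{subsec:supp_norm} gives the Bahadur-type expansion $\sqrt{n}(\hat{\alpha}_{s-1}^{w}-\alpha_{s-1,0})=n^{-1/2}\sum_{i=1}^{n}q(H_{is},\alpha_{s-1,0})+o_{p}(1)$, with $q(\cdot)$ exactly the quantity appearing in the statement. Stacking the $t$ expansions yields a joint linearization of $\hat{\alpha}^{w}$. Next, since $Y_{is}^{*}=R_{is}Y_{is}+(1-R_{is})H_{is-1}^{*\T}\hat{\alpha}_{s-1}^{w}$ is a recursive function of $\hat{\alpha}^{w}$, a Taylor expansion of $Y_{it}^{*}$ around $\alpha_{0}$ produces a sum over $s$ of chain-rule contributions. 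The reparameterization through $\beta_{t}$ in the statement collects the population-level composition of the sequential models at visit $t$, and the backward recursion defining $U_{t,s-1,i}(\alpha_{0})$ is precisely the symbolic derivative $\partial Y_{it}^{*}/\partial\alpha_{s-1}$ evaluated at truth, multiplied by the influence contribution $q(H_{is},\alpha_{s-1,0})$.

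Third, I would linearize the analysis estimator $\hat{\gamma}$, which solves $n^{-1}\sum_{i}\varphi_{a}(Z_{i}^{*},\gamma)=0$ with $Z_{i}^{*}$ depending on $\hat{\alpha}^{w}$. A joint Taylor expansion of the estimating function in both $\gamma$ (around $\gamma_{0}$) and $Y_{it}^{*}$ (around $Y_{it}^{*}(\beta_{t})$), combined with the indicator $R_{is-1}(1-R_{is})$ that isolates the dropout pattern on which imputation actually acts, produces
\[
\sqrt{n}(\hat{\gamma}-\gamma_{0})=n^{-1/2}\sum_{i=1}^{n}V_{\gamma,i}(\alpha_{0},\gamma_{0})+o_{p}(1).
\]
The delta method applied to $\hat{\tau}=n^{-1}\sum_{i}g(X_{i};\hat{\gamma}^{(0)})$, combined with the identity $\E g(X;\gamma_{0}^{(0)})=\tau_{0}$ guaranteed by the analysis-model robustness property underlying Theorem~\ref{thm:cons}, then yields the stated influence function $V_{\tau,i}$, and the Lindeberg CLT closes the argument.

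The main obstacle will be the analysis step when the robust loss $\rho$ is non-smooth, as is the case for the absolute loss and the $\varepsilon$-insensitive loss. Pointwise differentiation of $\psi$ is unavailable, so the formal Taylor expansion must be replaced by a stochastic equicontinuity argument (for example, the Z-estimator master theorem of van der Vaart and Wellner, or the Pakes--Pollard framework for non-smooth criterion functions), and the matrix $\partial\psi/\partial e$ inside $V_{\gamma,i}$ must be reinterpreted as the derivative of the smooth population moment $\E\varphi_{a}$ with respect to its argument. A secondary technical point is verifying that the recursive accumulation of influence functions across the $t$ visits preserves $\sqrt{n}$-negligibility of the remainder terms, which should follow from the uniform boundedness of the weights $w(H_{is-1};\nu_{s-1})$ and the moment conditions bundled with the stated regularity assumptions.
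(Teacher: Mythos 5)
Your proposal follows essentially the same route as the paper's proof: a Bahadur expansion of each sequential imputation M-estimator yielding the influence terms $q(H_{is},\alpha_{s-1,0})$, a recursive propagation of that error through the composed imputation parameters $\hat{\beta}_{t,s-1}$ giving the $U_{t,s-1,i}$ terms (the paper does this by induction in its Lemma on $\hat{\beta}_{t,s-1}$), a Taylor expansion of the analysis-stage estimating equation jointly in $\gamma$ and $\hat{\beta}_{t}$ producing $V_{\gamma,i}$, and finally the delta method plus the CLT. Your closing caveat about non-smooth losses is well taken and is precisely how the paper resolves it: the regularity conditions impose differentiability on the population moments $\E\{\psi(\cdot)\mid H_{s-1}\}$ and $\E[\varphi_{a}\{Z^{*}(\beta_{t}),\gamma\}]$ rather than on $\psi$ pointwise, so $\partial\psi/\partial e$ in $V_{\gamma,i}$ is to be read through those smoothed expectations.
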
 

The asymptotic variance in Theorem \ref{thm:norm} motivates us to
obtain a linearization-based variance estimator by plugging in the
estimated values as
\[
\hat{\V}(\hat{\tau})=\frac{1}{n^{2}}\sum_{i=1}^{n}\left\{ V_{\tau,i}(\hat{\alpha}^{w},\hat{\gamma})-\bar{V}_{\tau}(\hat{\alpha}^{w},\hat{\gamma})\right\} ^{2},
\]
where $\bar{V}_{\tau}(\hat{\alpha}^{w},\hat{\gamma})=n^{-1}\sum_{i=1}^{n}V_{\tau,i}(\hat{\alpha}^{w},\hat{\gamma})$,
$V_{\tau,i}(\hat{\alpha}^{w},\hat{\gamma})=\left\{ \partial g(X_{i};\hat{\gamma}^{(0)})/\partial\gamma^{\T}\right\} c^{\T}V_{\gamma,i}(\hat{\alpha}^{w},\hat{\gamma})$,
\begin{align*}
V_{\gamma,i}(\hat{\alpha}^{w},\hat{\gamma}) & =D_{\varphi}^{-1}\bigg[\varphi_{a}\left(Z_{i}^{*},\hat{\gamma}\right)+\sum_{s=1}^{t}\left\{ \frac{1}{n}\sum_{i=1}^{n}R_{is-1}(1-R_{is})\frac{\partial\mu(A_{i},X_{i}\mid\hat{\gamma})}{\partial\gamma^{\T}}\frac{\partial\psi(\hat{e}_{i})}{\partial e_{i}}H_{is-1}^{\T}\right\} U_{t,s-1,i}(\hat{\alpha}^{w})\bigg],
\end{align*}
$U_{t,s-1,i}(\hat{\alpha}^{w})=\left(\I_{p+s-2},\hat{\alpha}_{s-1}^{w}\right)U_{t,s,i}(\hat{\alpha}^{w})+\left(\mathbf{0}_{p+s-2}^{\T},1\right)\hat{\beta}_{t,s}\hat{q}(H_{is},\hat{\alpha}_{s-1}^{w})$
for $s<t$, $U_{t,t-1,i}(\hat{\alpha}^{w})=\hat{q}(H_{it},\hat{\alpha}_{t-1}^{w})$,
and $\hat{q}(H_{is},\hat{\alpha}_{s-1}^{w})=\left[-n^{-1}\sum_{i=1}^{n}\left\{ \partial\varphi(H_{is},\hat{\alpha}_{s-1}^{w})/\partial\alpha_{s-1}^{\T}\right\} H_{is-1}^{\T}\right]^{-1}\varphi(H_{is},\hat{\alpha}_{s-1}^{w})$.
Also, $\hat{e}_{i}=Y_{it}^{*}-\mu(A_{i},X_{i}\mid\hat{\gamma})$,
$\hat{D}_{\varphi}=n^{-1}\sum_{i=1}^{n}\partial\varphi_{a}\left(Z_{i}^{*},\hat{\gamma}\right)/\partial\gamma^{\T}$,
and 
\[
\begin{cases}
\hat{\beta}_{t,t-1}=\hat{\alpha}_{t-1}^{w} & \text{if \ensuremath{s=t}},\\
\hat{\beta}_{t,s-1}=(\I_{p+s-2},\hat{\alpha}_{s-1}^{w})(\I_{p+s-1},\hat{\alpha}_{s}^{w})\cdots(\I_{p+t-3},\hat{\alpha}_{t-2}^{w})\hat{\alpha}_{t-1}^{w} & \text{if \ensuremath{s<t}},
\end{cases}
\]
for $s=1,\cdots,t$. Since the ATE estimator is asymptotically linear,
we can also use bootstrap to obtain a replication-based variance estimator.

We consider a specific working model as the interaction model for
analysis and present the asymptotic theories of the ATE estimator
in Section \ref{subsec:supp_interaction} in the supplementary material.
The interaction model is one of the most common models in the clinical
trials suggested in \citet{international2019addendum}, which is also
used in the simulation studies and real data application in the paper.

Theorems \ref{thm:cons} and \ref{thm:norm} extend the test robustness
\citep{rosenblum2009using} to the analysis model robustness in two
aspects. First, the robustness expands its plausibility from the hypothesis
test to the ATE estimation. Second, the robust estimator obtained
from minimizing the loss function further broadens the types of the
model estimator used in the analysis model. The resulting ATE estimator
via the robust loss remains consistent and has the identical asymptotic
normality even when the analysis model is misspecified. 

\section{Simulations \label{sec:simu}}

We conduct simulation studies to validate the finite-sample performance
of the proposed robust method. Consider a longitudinal clinical trial
with two treatment groups and five visits. Set the sample size for
each group as 500 and generate the data separately for each treatment.
The baseline covariates $X\in\mathbb{R}^{2}$ are a combination of
a continuous variable generated from the standard normal distribution
and a binary variable generated from a Bernoulli distribution with
the success probability of $0.3$. The longitudinal outcomes are generated
in a sequential manner, regressing on the historical information separately
for each group based on some specific distributions. The group-specific
data generating parameters are given in Section \ref{subsec:supp_simuset}
in the supplementary material. 

The missingness mechanism is set to be MAR with a monotone missingness
pattern. For the visit point $s$, if $R_{is'-1}=0$, then $R_{is'}=0$
for $s'=s,\cdots,t$; otherwise, let $R_{is}\mid\left(H_{is-1},A_{i}=a\right)\sim\text{Bernoulli}\left\{ \pi_{s}(a,H_{is-1})\right\} $.
We model the observed probability $\pi_{s}(a,H_{is-1})$ at visit
$s>1$ as a function of the observed information as $\text{logit}\left\{ \pi_{s}(a,H_{is-1})\right\} =\phi_{1a}+\phi_{2a}Y_{is-1}$,
where $\phi_{1a}$ and $\phi_{2a}$ are the tuning parameters for the observed
probabilities. The parameters are tuned to achieve the observed probability
around $0.8$ in each group.

We select the Huber loss function to obtain robust estimators for
its prevalence. Table \ref{table:sim}(a) summarizes the three methods
we aim to compare in the simulation studies. We apply distinct estimation
approaches for each method in the imputation and analysis models, along
with different imputation methods, where MI stands for the conventional
method used in longitudinal clinical trials and Robust stands for
our proposed method. LSE can be viewed as a transition from the conventional
MI method to the proposed robust method. In terms of the variance
estimation, Rubin's and bootstrap methods are compared for the MI
estimator while the linearization-based and bootstrap variance estimates
are compared for the mean imputation estimators. 

The simulation results are based on 10,000 Monte Carlo (MC) simulations
under $H_{0}:\tau=0$ and 1000 MC simulations under one specific alternative
hypothesis $H_{1}:\tau=\tau_{0}$, with the number of bootstrap replicates
$B=100$ and the imputation size $M=10$ for MI. The tuning parameter
of Huber robust regression is $l=1.345\sigma$, and the tuning parameters
of the sequential weighted robust models are $\nu_{s-1}=10$ for $s=1,\cdots,5$.
The imputation size $M$ and the tuning parameters $\nu_{s-1}$ do
not have a strong impact on the inferences (results are not shown). We assess
the estimators using the point estimate (Point est), the MC variance
(True var), the variance estimate (Var est), the relative bias of
the variance estimate computed by $\Big[\mathbb{E}\big\{\mathbb{\hat{V}}(\hat{\tau})\big\}-\mathbb{V}(\hat{\tau})\Big]/\mathbb{V}(\hat{\tau})$,
the coverage rate of $95\%$ confidence interval (CI), the type-1
error under $H_{0}$, the power under $H_{1}$, and the root mean
squared error (RMSE). We choose the $95\%$ Wald-type CI estimated
by $\big(\hat{\tau}-1.96\mathbb{\hat{V}}^{1/2}(\hat{\tau}),\hat{\tau}+1.96\mathbb{\hat{V}}^{1/2}(\hat{\tau})\big)$.

\subsection{Data with extreme outliers}

We first focus on the settings when the outcomes are generated sequentially
from the normal distribution with or without severe outliers. To produce
the outliers in the longitudinal outcomes, we randomly select 10
individuals from the top 30 completers with the highest outcomes at
the last visit point per group and multiply the original values
by three for all post-baseline outcomes. We also consider adding
extreme values only to one specific group and present the results
in Section \ref{subsec:supp_simtab} in the supplementary material. 

Table \ref{table:sim}(b) and the first two rows of Figure \ref{fig:sim} illustrate the
simulation results of the original data and the data with extreme
outliers under the normal distribution. Without the presence of outliers,
all methods produce unbiased point estimates. The robust method is
slightly less efficient compared to MI and LSE, as it has a larger
MC variance and a smaller power. For MI, Rubin's variance estimate
is conservative and inefficient, causing the coverage rate to be
far away from the empirical value and the power to be smaller, which
matches the observations detected in previous literature regarding
J2R in longitudinal clinical trials (e.g., \citealp{liu2016analysis,di2022}).
However, using bootstrap can fix the overestimation issue and produce
a reasonable coverage rate and power. When outliers exist, only
the robust method produces an unbiased point estimate, a well-controlled
type-1 error under $H_{0}$, and a satisfying coverage rate under
$H_{1}$ with a smaller RMSE.

\subsection{Data from a heavy-tailed distribution}

To assess the performance of the estimator from our proposed robust
method in heavy-tailed distributions, we generate the longitudinal
outcomes sequentially from a t-distribution with the degrees of freedom
as 5 in time order. The detailed setup of the data-generating process
is also given in Section \ref{subsec:supp_simuset} in the supplementary material. 

Table \ref{table:sim}(c) and the last row of Figure \ref{fig:sim} show the simulation
results. All the methods result in unbiased point
estimates. The robust method produces the ATE estimator with the smallest
MC variance, indicating the superiority of Huber robust regression
under a heavy-tailed distribution. The linearization-based variance
estimates behave similarly to the bootstrap variance estimates for
the two mean imputation-based methods, with comparable coverage rates
and powers.

\begin{table}
\caption{Summary of the simulation methods and results.}\label{table:sim}
 \centering \subfloat[\large{Different estimation and imputation approach in the four methods used
for comparison.}\label{tab:sum_method-1}]{
\centering{}%
\begin{tabular}{cccc}
\hline 
Method & Imputation model & Imputation method & Analysis model\tabularnewline
\hline 
MI & LS & MI & LS\tabularnewline
LSE & Weighted Huber regression & Mean imputation & LS\tabularnewline
Robust & Weighted Huber regression & Mean imputation & Huber regression\tabularnewline
\hline 
\end{tabular}}
\vspace{3ex}
\subfloat[\large{Simulation results under the normal distributions without or with
extreme outliers. Here the true value $\tau=71.18\%$.}\label{table:extreme-1}]{
\centering{}\scalebox{1}{ \resizebox{\textwidth}{!}{%
\begin{tabular}{>{\raggedright}p{0.1\textwidth}>{\centering}p{0.1\textwidth}cccccccccccccc}
\hline 
 &  & Point est & True var & \multicolumn{2}{c}{Var est} &  & \multicolumn{2}{c}{Relative bias} &  & \multicolumn{2}{c}{Coverage rate} &  & \multicolumn{2}{c}{Power} & RMSE\tabularnewline
Case & \multicolumn{1}{c}{Method} & ($\times10^{-2}$) & ($\times10^{-2}$) & \multicolumn{2}{c}{($\times10^{-2}$)} &  & \multicolumn{2}{c}{($\%$)} &  & \multicolumn{2}{c}{($\%$)} &  & \multicolumn{2}{c}{($\%$)} & ($\times10^{-2}$)\tabularnewline
 &  &  &  & $\hat{V}_{1}$ & $\hat{V}_{\text{Boot}}$ &  & $\hat{V}_{1}$ & $\hat{V}_{\text{Boot}}$ &  & $\hat{V}_{1}$ & $\hat{V}_{\text{Boot}}$ &  & $\hat{V}_{1}$ & $\hat{V}_{\text{Boot}}$ & \tabularnewline
\hline 
\multirow{3}{0.1\textwidth}{No outliers} & \multicolumn{1}{c}{MI} & 70.89 & 3.02 & 5.35 & 3.22 &  & 77.00 & 6.39 &  & 98.90 & 95.00 &  & 92.80 & 98.00 & 17.38\tabularnewline
 & LSE & 70.93 & 3.03 & 3.25 & 3.15 &  & 7.14 & 4.04 &  & 95.40 & 94.80 &  & 97.70 & 97.80 & 17.40\tabularnewline
 & Robust & \multirow{1}{*}{70.09} & \multirow{1}{*}{3.26} & \multirow{1}{*}{3.41} & \multirow{1}{*}{3.38} &  & \multirow{1}{*}{4.75} & \multirow{1}{*}{3.73} &  & \multirow{1}{*}{95.00} & \multirow{1}{*}{94.20} &  & \multirow{1}{*}{96.70} & \multirow{1}{*}{96.70} & \multirow{1}{*}{18.07}\tabularnewline
\hline 
\multirow{3}{0.1\textwidth}{Outliers in both groups} & \multicolumn{1}{c}{MI} & 77.42 & 3.92 & 12.42 & 8.97 &  & 216.36 & 128.51 &  & 99.70 & 99.10 &  & 65.90 & 83.60 & 20.76\tabularnewline
 & LSE & 74.02 & 4.18 & 6.18 & 5.94 &  & 47.85 & 42.18 &  & 98.30 & 97.80 &  & 89.80 & 91.40 & 20.62\tabularnewline
 & Robust & \multirow{1}{*}{72.34} & \multirow{1}{*}{3.44} & \multirow{1}{*}{3.53} & \multirow{1}{*}{3.49} &  & \multirow{1}{*}{2.75} & \multirow{1}{*}{1.40} &  & \multirow{1}{*}{95.00} & \multirow{1}{*}{94.60} &  & \multirow{1}{*}{97.00} & \multirow{1}{*}{96.60} & \multirow{1}{*}{18.57}\tabularnewline
\hline 
\end{tabular}} }}
\vspace{3ex}
\subfloat[\large{Simulation results under the t-distribution. Here the true value $\tau=68.09\%$.}\label{table:mvt h1-1}]{\centering{} \scalebox{1}{ \resizebox{\textwidth}{!}{%
\begin{tabular}{>{\centering}p{0.1\textwidth}cccccccccccccc}
\toprule 
 & Point est & True var & \multicolumn{2}{c}{Var est} &  & \multicolumn{2}{c}{Relative bias} &  & \multicolumn{2}{c}{Coverage rate} &  & \multicolumn{2}{c}{Power} & RMSE\tabularnewline
Method & ($\times10^{-2}$) & ($\times10^{-2}$) & \multicolumn{2}{c}{($\times10^{-2}$)} &  & \multicolumn{2}{c}{($\%$)} &  & \multicolumn{2}{c}{($\%$)} &  & \multicolumn{2}{c}{($\%$)} & ($\times10^{-2}$)\tabularnewline
 &  &  & $\hat{V}_{1}$ & $\hat{V}_{\text{Boot}}$ &  & $\hat{V}_{1}$ & $\hat{V}_{\text{Boot}}$ &  & $\hat{V}_{1}$ & $\hat{V}_{\text{Boot}}$ &  & $\hat{V}_{1}$ & $\hat{V}_{\text{Boot}}$ & \tabularnewline
\midrule
\multicolumn{1}{c}{MI} & 70.42 & 3.00 & 5.35 & 3.16 &  & 78.07 & 5.27 &  & 99.30 & 95.40 &  & 90.90 & 97.60 & 17.47\tabularnewline
LSE & 70.54 & 2.90 & 3.21 & 3.06 &  & 10.49 & 5.43 &  & 96.30 & 95.30 &  & 98.00 & 98.10 & 17.20\tabularnewline
Robust & \multirow{1}{*}{69.81} & \multirow{1}{*}{2.72} & \multirow{1}{*}{2.90} & \multirow{1}{*}{2.81} &  & \multirow{1}{*}{6.38} & \multirow{1}{*}{3.11} &  & \multirow{1}{*}{94.90} & \multirow{1}{*}{94.80} &  & \multirow{1}{*}{98.30} & \multirow{1}{*}{98.40} & \multirow{1}{*}{16.58}\tabularnewline
\bottomrule
\end{tabular}} }}

\noindent\begin{minipage}[t]{1\columnwidth}%
\vspace{-1ex}

{\footnotesize{}{\raggedright }$\hat{V}_{1}${\footnotesize{} denotes
the variance estimate obtained by Rubin's rule in MI and linearization
in mean imputation-based methods; }$\hat{V}_{\text{Boot}}${\footnotesize{}
denotes the bootstrap variance estimates.}{\footnotesize\par}

{\footnotesize{}}}{\footnotesize\par}%
\end{minipage}
\end{table}

\begin{figure}
\caption{Plot for the simulation results under different distributions.\label{fig:sim}}

\centering{}\includegraphics[scale=0.5]{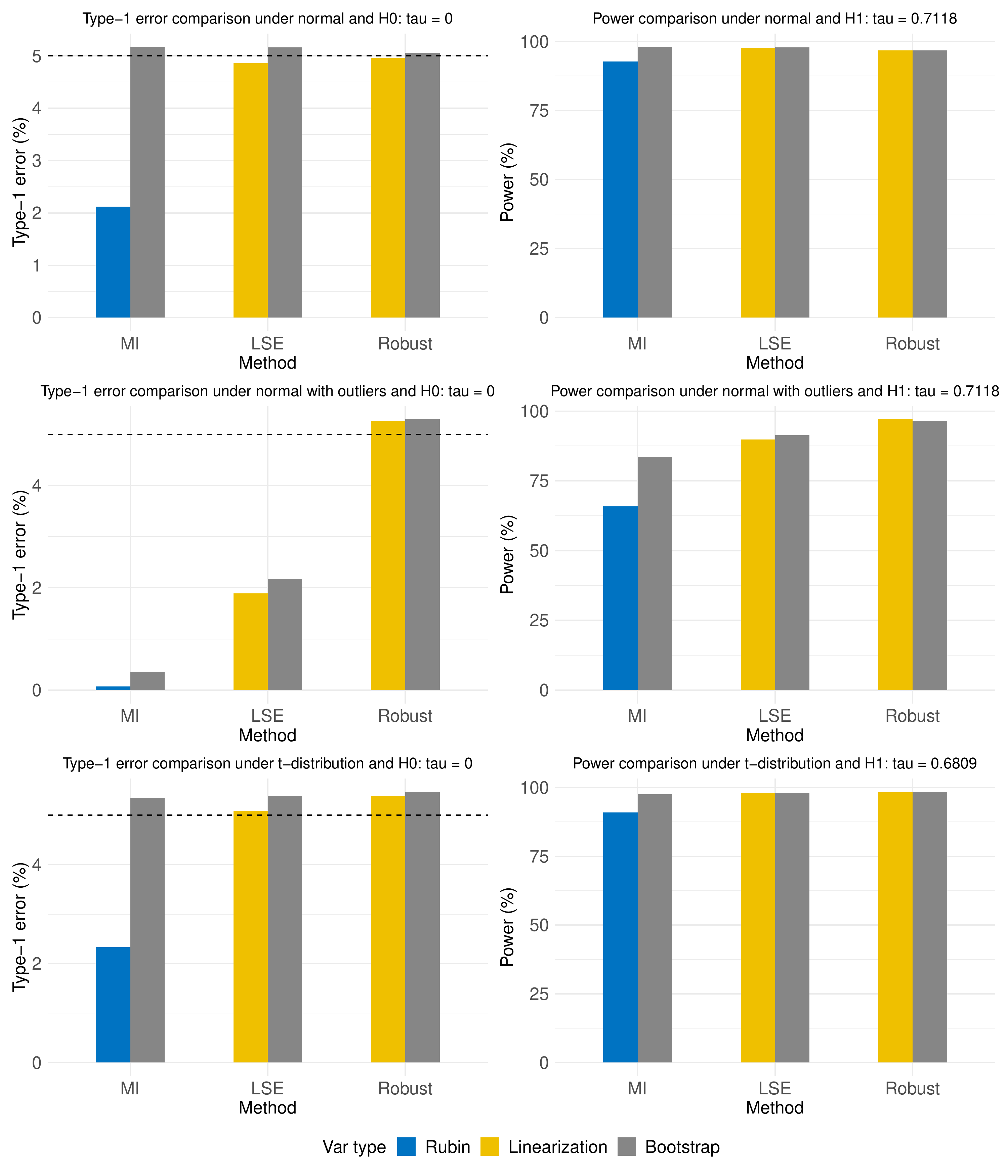}
\end{figure}

The overall simulation results indicate a recommendation of the proposed
robust approach with the linearization-based variance estimation to
obtain unbiased point estimates and save computation time. The advocated
method works well in terms of consistency, well-controlled type-1
errors, higher powers under $H_{1}$, and smaller RMSEs. Even under
the normality assumption, our proposed method has comparable performance
as the conventional MI method, with only a slight loss in the power.
When encountering a heavy-tailed distribution or extreme outliers,
the proposed method outperforms with more reasonable coverage rates
and higher powers. Similar interpretations apply to the simulation
results under $H_{0}$ given in Section \ref{subsec:supp_simtab}
in the supplementary material.

\section{Estimating effects of HIV-1 reverse transcriptase inhibitors \label{sec:app}}

We now apply our proposed robust method to the motivating example
introduced in Section \ref{sec:moti_exmp}. The primary goal is to
assess the ATE between the two arms at the study endpoint under the
J2R condition. The results of the normality test and the symmetry test proposed by
\citet{miao2006new} in Figure \ref{fig:diagnosis}
indicate that the data are symmetrically distributed without severe
outliers, yet suffer from a heavy tail that deviates from normality.
MI, mean imputation with LS estimators, and the proposed robust method
using the Huber loss function are compared with respect to the point
estimation, the variance estimation based on Rubin's variance estimator
or the linearization-based variance estimator, Wald-type $95\%$ CI
and CI length. For MI, the imputation size is $M=100$. The tuning
parameters for the weights in the robust method are selected via cross-validation,
with the procedure described in Section \ref{sec:supp_real} in the
supplementary material.

Table \ref{table:real} shows the analysis results of the group means
and the ATE under J2R. MI uses the sequential linear regressions estimated
by the LS estimators for the imputation model, resulting in different
point estimates compared to other mean imputation-based methods, where
the imputation model is obtained via robust regressions. Using LS
or Huber loss in the analysis model also has a slight difference in
the estimation because of the heavy tail. While the conventional MI
method may contaminate the inference when the data deviates from the
normal distribution, the proposed robust method preserves an unbiased
estimate and a narrower CI, which coincides with the conclusions drawn
from the simulation studies. All the implemented methods show a statistically
significant treatment effect under J2R, uncovering the superiority
of triple therapies. 

\begin{table}[!htbp]
\centering \caption{Analysis of the repeated CD4 count data under J2R.}
\label{table:real} \scalebox{1}{
\centering{}%
\begin{tabular}{>{\centering}p{0.12\textwidth}ccccc}
\toprule 
Variable & \multicolumn{1}{c}{Method} & Point est & \multicolumn{1}{c}{$95\%$ CI} &  & \multicolumn{1}{c}{CI length}\tabularnewline
\midrule 
\multirow{3}{0.12\textwidth}{\centering{Mean arm 1}} & \multicolumn{1}{c}{MI} & -0.68 & (-0.84, -0.53) &  & 0.31\tabularnewline
 & LSE & -0.54 & (-0.67, -0.42) &  & 0.25\tabularnewline
 & Robust & -0.53 & (-0.64, -0.41) &  & 0.23\tabularnewline
\midrule 
\multirow{3}{0.12\textwidth}{\centering{Mean arm 2}} & MI & -0.39 & (-0.55, -0.24) &  & 0.31\tabularnewline
 & \multicolumn{1}{c}{LSE} & -0.23 & (-0.35, -0.11) &  & 0.24\tabularnewline
 & Robust & -0.27 & (-0.38, -0.16) &  & 0.22\tabularnewline
\midrule
\multirow{3}{0.12\textwidth}{\centering{Difference}} & MI & 0.29 & (0.07, 0.51) &  & 0.44\tabularnewline
 & LSE & 0.31 & (0.20, 0.41) &  & 0.21\tabularnewline
 & Robust & 0.26 & (0.16, 0.35) &  & 0.19\tabularnewline
\bottomrule
\end{tabular}} 
\end{table}

\section{Conclusion \label{sec:conclusion}}

The non-normality issue frequently occurs in longitudinal clinical
trials due to extreme outliers or heavy-tailed errors. With growing
attention to evaluating the treatment effect with an MNAR missingness mechanism,
we establish a robust method with the weighted robust regression and
mean imputation under J2R for the longitudinal data, without the reliance
on parametric models. The weighted robust regression provides double-layer
protection against non-normality in both the covariates and the response
variable, therefore ensuring a valid imputation model estimator. Mean
imputation and the subsequent robust analysis model further guarantee
a valid ATE estimator with good theoretical properties. The proposed
method also enjoys the analysis model robustness property, in the
sense that the consistency and asymptotic normality of the ATE estimator
are satisfied even when the analysis model is incorrectly specified.

The symmetry error distribution, which is an essential assumption
in the robust regression using the robust loss, must be satisfied
in order to obtain a grounded inference for the ATE. It may not always
be the case in practice. When encountering skewed distributions with
asymmetric noises, biases and imprecisions may be detected in our
proposed robust method. \citet{takeuchi2002robust} provide a novel
robust regression method motivated by data mining to handle asymmetric
tails and obtain reasonable mean-type estimators. The extension of
the proposed robust method may be plausible by replacing the robust
regression with their proposed regression model.

While we focus solely on a monotone missingness pattern throughout
the development of the robust method, intermittent missingness is
also ubiquitous in longitudinal clinical trials. To handle intermittent
missing data with a non-ignorable missingness mechanism when the outcomes
are non-normal, \citet{elashoff2012robust} suggest incorporating
the Huber loss function in the pseudo-log-likelihood expression to
obtain robust inferences. It is possible to extend our proposed robust
method using their idea. We leave it as a future working direction.

\section*{{Acknowledgements}}
Yang is partially supported by the NSF grant DMS 1811245, NIA grant 1R01AG066883, and NIEHS grant 1R01ES031651. 

\section*{Supplementary material}

The supplementary material contains the proofs and more technical details.

\bibliographystyle{Chicago}
\bibliography{DR_PSACE_ref}

\newpage{} 
\begin{center}
\textbf{\Large{}Supplementary material for "Robust analyses for longitudinal clinical trials with missing
and non-normal continuous outcomes" by Liu et al.}{\Large{} }{\Large\par}
\par\end{center}

\pagenumbering{arabic} 
\renewcommand*{\thepage}{S\arabic{page}}

\setcounter{lemma}{0} 
\global\long\def\thelemma{\textup{S}\arabic{lemma}}%
\setcounter{equation}{0} 
\global\long\def\theequation{S\arabic{equation}}%
\setcounter{section}{0} 
\global\long\def\thesection{S\arabic{section}}%
\global\long\def\thesubsection{S\arabic{section}.\arabic{subsection}}%
\setcounter{table}{0} 
\global\long\def\thetable{\textup{S}\arabic{table}}%
\setcounter{figure}{0} 
\global\long\def\thefigure{\textup{S}\arabic{figure}}%
\setcounter{thm}{0}
\global\long\def\thethm{\textup{S}\arabic{thm}}%
\setcounter{corollary}{0}
\global\long\def\thecorollary{\textup{S}\arabic{corollary}}%

The supplementary material contains technical details, additional
simulations, and real-data application results. Section \ref{sec:supp_infer}
gives the regularity conditions and the proof of the model-robust
ATE estimator obtained from the proposed robust method in terms of
consistency and asymptotic normality, and provides an example of the working
model for illustration and extensions to other robust loss functions.
Section \ref{sec:supp_seqreg} provides the sequential regression procedure.
Section \ref{sec:supp_sim} shows additional simulation results when
the data is incorporated from outliers or different data-generating
distributions. Section \ref{sec:supp_real} adds additional notes
on the real data.

\section{Asymptotic results for the ATE estimator \label{sec:supp_infer}}

In this section, we present the asymptotic properties of the ATE estimator
$\hat{\tau}$ obtained from the proposed robust method in terms of
consistency and asymptotic normality. To begin with, we explore the
asymptotic properties of $\hat{\alpha}_{s-1}^{w}$ based on the observed
data at visit $s$ in the control group that minimizes the weighted
loss function (1) in the main text. Since the Huber loss is strongly
convex, minimizing the loss function is equivalent to find the root
of the first derivative
\begin{align*}
\sum_{i=1}^{n}(1-A_{i})R_{is}w(H_{is-1};\nu_{s-1})\psi(Y_{is}-H_{is-1}^{\T}\alpha_{s-1})H_{is-1}=0.
\end{align*}
We give the consistency result for the robust estimator $\hat{\alpha}_{s-1}^{w}$
in the following lemma.

\begin{lemma}\label{lemma:seq_cons}

Assume the following regularity conditions:

\begin{enumerate}

\item[C1.] There exists a unique $\alpha_{s-1,0}$ lying in the interior
of the Euclidean parameter space $\Theta$, such that the distribution
of the observed regression errors $(Y_{s}-H_{s-1}^{\text{\T}}\alpha_{s-1,0})$
is symmetric around 0.

\item[C2.] $\E\left\{ \psi(Y_{s}-H_{s-1}^{\T}\alpha_{s-1})\mid H_{s-1}\right\} $
is dominated by an integrable function $g(H_{s-1})$ for all $H_{s-1}\subset\mathbb{R}^{p+s-1}$
and $\alpha_{s-1}$ with respect to the conditional distribution function
$f(Y_{s}\mid H_{s-1},\alpha_{s-1})$.

\end{enumerate} Then, the estimator $\hat{\alpha}_{s-1}^{w}\xrightarrow{\pr}\alpha_{s-1,0}$
as the sample size $n\rightarrow\infty$, for $s=1,\cdots,t$.

\end{lemma}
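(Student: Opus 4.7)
The plan is to cast $\hat{\alpha}_{s-1}^{w}$ as the root of an estimating equation and appeal to a standard M-estimator consistency argument. Define the empirical estimating function
\[
M_n(\alpha_{s-1}) \;=\; \frac{1}{n}\sum_{i=1}^{n}(1-A_i)R_{is}\, w(H_{is-1};\nu_{s-1})\, \psi(Y_{is}-H_{is-1}^{\T}\alpha_{s-1})\,H_{is-1},
\]
and write $M(\alpha_{s-1})=\E\{M_n(\alpha_{s-1})\}$ for its population counterpart. Because the weighted Huber criterion in \eqref{eq:seqrrw} is convex in $\alpha_{s-1}$, every interior minimizer solves $M_n(\alpha_{s-1})=0$, so I can equivalently treat $\hat{\alpha}_{s-1}^{w}$ as a zero of $M_n$.

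The first step is to verify identification, i.e., $M(\alpha_{s-1,0})=0$. Invoking partial ignorability in the control arm (Assumption \ref{assump:miss}), $R_{is}\ind Y_{is}\mid(H_{is-1},A_i=0)$, so conditioning on $H_{s-1}$ and $A=0$ gives
\[
M(\alpha_{s-1,0}) \;=\; \E\Big[(1-A)R_{s}\,w(H_{s-1};\nu_{s-1})\,H_{s-1}\,\E\{\psi(Y_{s}-H_{s-1}^{\T}\alpha_{s-1,0})\mid H_{s-1},A=0\}\Big].
\]
For each of the three robust losses considered (Huber, absolute, and $\varepsilon$-insensitive), the influence function $\psi$ is odd. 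Combined with the conditional symmetry of $(Y_{s}-H_{s-1}^{\T}\alpha_{s-1,0})\mid H_{s-1}$ imposed by condition C1, the inner conditional expectation vanishes almost surely, yielding $M(\alpha_{s-1,0})=0$.

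The second step is a uniform law of large numbers, $\sup_{\alpha_{s-1}\in\Theta}\|M_n(\alpha_{s-1})-M(\alpha_{s-1})\|\xrightarrow{\pr}0$. Condition C2 supplies an integrable envelope for $\psi$, while the trisquared weight $w(\cdot;\nu_{s-1})$ is bounded and compactly supported in $H_{s-1}$. On the compact parameter space, $\alpha_{s-1}\mapsto\psi(Y_{s}-H_{s-1}^{\T}\alpha_{s-1})H_{s-1}$ is Lipschitz in $\alpha_{s-1}$ for the Huber case (the Huber $\psi$ is $1$-Lipschitz), and is monotone and bounded for the non-smooth absolute and $\varepsilon$-insensitive cases, so the induced function class is Glivenko-Cantelli. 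Combined with uniqueness of the root at $\alpha_{s-1,0}$ from C1, reinforced by strict convexity of the population criterion under the mild non-degeneracy condition that $\E\{(1-A)R_{s}w(H_{s-1};\nu_{s-1})H_{s-1}H_{s-1}^{\T}\}$ is positive definite, a standard Z-estimator consistency theorem delivers $\hat{\alpha}_{s-1}^{w}\xrightarrow{\pr}\alpha_{s-1,0}$.

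The main obstacle will be the uniform convergence step for the non-smooth $\psi$ (absolute and $\varepsilon$-insensitive losses), where the usual differentiation arguments fail. I would circumvent this by exploiting the monotonicity and boundedness of $\psi$ to cover the class by brackets of controlled $L_1$-diameter, with C2 furnishing the dominating envelope. The iteration across $s=1,\ldots,t$ is then routine because the lemma is stated separately for each visit, so no additional bookkeeping on the growing history is required at this stage.
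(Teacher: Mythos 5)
Your proposal is correct and follows essentially the same route as the paper: both arguments establish identification of $\alpha_{s-1,0}$ by factoring the population estimating function via randomization and partial ignorability and then killing the inner conditional expectation through the symmetry of the errors (C1) and the oddness of $\psi$, and both then combine a uniform law of large numbers under the envelope condition C2 (the paper cites Jennrich's Theorem 2, you invoke a Glivenko--Cantelli/bracketing argument) with a standard Z-estimator consistency theorem (the paper cites Theorem 7.1 of Boos and Stefanski). Your additional care with the non-smooth losses and the explicit non-degeneracy condition for uniqueness are refinements the paper glosses over, but they do not change the structure of the argument.
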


\begin{proof} Note that by the definition of Huber function, $\psi(Y_{s}-H_{s-1}^{\T}\alpha_{s-1})$
is a continuous function for $\alpha_{s-1}$ and a measurable function
for $H_{s}$. By the regularity condition C2, it satisfies the conditions
for Theorem 2 in \citet{jennrich1969asymptotic}. Thus 
\[
\psi(Y_{s}-H_{s-1}^{\T}\alpha_{s-1})\xrightarrow{a.s.}\E\big\{\psi(Y_{s}-H_{s-1}^{\T}\alpha_{s-1})\mid H_{k-1}\big\}\text{ uniformly for }\forall\alpha_{s-1}\in\Theta.
\]

In the weighted sequential robust regression, at $s$th visit point,
the true value $\beta_{s-1,0}$ is the unique solution such that $\E\{(1-A)R_{s}w(H_{s-1};q_{s-1})\psi(Y_{s}-H_{s-1}^{\T}\alpha_{s-1})H_{s-1}\mid H_{s-1}\}=0$
since it is a randomized trial and by Assumption 1,
\begin{align*}
 & \E\big\{(1-A)R_{s}w(H_{s-1};\nu_{s-1})\psi(Y_{s}-H_{s-1}^{\T}\alpha_{s-1})H_{s-1}\mid H_{s-1}\big\}\\
= & \E(1-A)\E(R_{s}\mid H_{s-1})\E\big\{ w(H_{s-1};\nu_{s-1})\psi(Y_{s}-H_{s-1}^{\T}\alpha_{s-1})H_{s-1}\mid H_{s-1}\big\}\\
= & \E(1-A)\E(R_{s}\mid H_{s-1})\E\big\{\psi(Y_{s}-H_{s-1}^{\T}\alpha_{s-1})\mid H_{s-1}\big\} w(H_{s-1};\nu_{s-1})H_{s-1}.
\end{align*}
By the regularity condition C1, $(Y_{s}-H_{s-1}^{\T}\alpha_{s-1,0})$
is symmetric around 0 indicates that $\E\big\{\psi(Y_{s}-H_{s-1}^{\T}\alpha_{s-1})\mid H_{s-1}\big\}=0$.
Based on the regularity conditions C1 and C2, we apply Theorem 7.1
in \citet{boos2013essential} and get $\hat{\alpha}_{s-1}^{w}\xrightarrow{\pr}\alpha_{s-1,0}$,
for $s=1,\cdots,t$. \end{proof}

After obtaining the imputation model estimate $\hat{\alpha}_{s-1}^{w}$
for each visit point, we conduct sequential mean imputation to the
missing components and get $Y_{s}^{*}=R_{s}Y_{s}+(1-R_{s})H_{s-1}^{*\T}\hat{\alpha}_{s-1}^{w}$,
where $H_{s-1}^{*}=(X^{\T},Y_{1}^{*},\cdots,Y_{s-1}^{*})^{\T}$ for
$s=1,\cdots,t$. Denote the true imputation model parameter needed
for imputing the outcome at visit $t$ when the individual drops out
at visit $s$ as $\beta_{t,s-1}$, such that $\E(Y_{t}\mid H_{s-1},R_{s-1}=1,R_{s}=0,A=a)=H_{s-1}^{\T}\beta_{t,s-1}$
for $t\geq s$. The following lemma characterizes the relationship
between $\beta_{t,s-1}$ and the sequential imputation model parameters
$\alpha_{s-1},\cdots,\alpha_{t-1}$ .

\begin{lemma}\label{lemma:beta-alpha}

Under the regularity conditions C1 and C2, the parameter $\beta_{t,s-1}$
in formula \eqref{eq:impute_value} relates to the sequential imputation
model parameters $\alpha_{s-1,0},\cdots,\alpha_{t-1,0}$ in the following
way:

\begin{equation}
\begin{cases}
\beta_{t,t-1}=\alpha_{t-1,0} & \text{if \ensuremath{s=t}},\\
\beta_{t,s-1}=(\I_{p+s-2},\alpha_{s-1,0})(\I_{p+s-1},\alpha_{s,0})\cdots(\I_{p+t-3},\alpha_{t-2,0})\alpha_{t-1,0} & \text{if \ensuremath{s<t}}.
\end{cases}\label{eq:beta-alpha}
\end{equation}

\end{lemma}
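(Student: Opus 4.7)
The plan is to prove the identity by backward induction on $s$, running from $s=t$ down to the target index, applying the tower property together with Assumption \ref{assump:miss} and the linearity of each sequential regression that defines $\alpha_{s-1,0}$. The key observation is that the partial ignorability in Assumption \ref{assump:miss} keeps every conditional expectation inside the control arm, so after reducing the target to $\E(Y_t\mid H_{s-1},A=0)$ via J2R, the remaining computation is purely algebraic.

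For the base case $s=t$, Assumption \ref{assump:j2r-mean} gives $\E(Y_t\mid H_{t-1},R_{t-1}=1,R_t=0,A=a)=\E(Y_t\mid H_{t-1},A=0)$, and using Assumption \ref{assump:miss} with $s'=s=t$ to drop $R_t=1$ from $\E(Y_t\mid H_{t-1},R_t=1,A=0)=H_{t-1}^{\T}\alpha_{t-1,0}$ (the population version of the sequential regression at visit $t$) yields $\E(Y_t\mid H_{t-1},A=0)=H_{t-1}^{\T}\alpha_{t-1,0}$. Hence $\beta_{t,t-1}=\alpha_{t-1,0}$.

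For the inductive step, suppose $\E(Y_t\mid H_s,A=0)=H_s^{\T}\beta_{t,s}$. The tower property, combined once more with Assumption \ref{assump:miss} to keep the conditioning set equal to $\{H_{s-1},A=0\}$, gives
\[
\E(Y_t\mid H_{s-1},A=0)=\E\bigl\{H_s^{\T}\beta_{t,s}\,\big|\,H_{s-1},A=0\bigr\}.
\]
Splitting $H_s=(H_{s-1}^{\T},Y_s)^{\T}$ and partitioning $\beta_{t,s}$ accordingly, and substituting $\E(Y_s\mid H_{s-1},A=0)=H_{s-1}^{\T}\alpha_{s-1,0}$ (again from the sequential regression at visit $s$ together with Assumption \ref{assump:miss}), one obtains $\E(Y_t\mid H_{s-1},A=0)=H_{s-1}^{\T}(\I,\alpha_{s-1,0})\beta_{t,s}$. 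This delivers the one-step recursion $\beta_{t,s-1}=(\I,\alpha_{s-1,0})\beta_{t,s}$, and iterating it from $s=t$ downward telescopes into the stated product.

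The only nontrivial point is verifying that Assumption \ref{assump:miss} is strong enough at every stage of the induction to remove the conditioning on $\{R_s=1,\ldots,R_t=1\}$ that naturally appears when one writes each sequential-regression mean, so that the whole chain of iterated expectations can be carried out unconditionally on the missingness indicators within the control arm. Once this partial-ignorability bookkeeping is settled, the remaining steps are routine block-matrix algebra and present no real obstacle.
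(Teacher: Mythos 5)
Your proposal is correct and follows essentially the same route as the paper's proof: the base case via Assumption 2 and C1, then backward induction using the tower property, splitting $H_{s}=(H_{s-1}^{\T},Y_{s})^{\T}$, and substituting $\E(Y_{s}\mid H_{s-1},A=0)=H_{s-1}^{\T}\alpha_{s-1,0}$ to obtain the one-step recursion $\beta_{t,s-1}=(\I,\alpha_{s-1,0})\beta_{t,s}$ that telescopes into the stated product. The partial-ignorability bookkeeping you flag is handled in the paper only implicitly (through the iterated-expectation identification stated in Section 3), so your explicit attention to it is, if anything, slightly more careful than the paper's own treatment.
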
 

\begin{proof} The regularity condition C1 implies that the distribution
of the errors $(Y_{s}-H_{s-1}^{\T}\alpha_{s-1,0})$ is symmetric,
we have $\E(Y_{s}\mid H_{s-1})=H_{s-1}^{\T}\alpha_{s-1,0}$ for $s=1,\cdots,t$.
If the individual in group $a$ drops out at visit $t$, the missing
value at visit $t$ is imputed by
\begin{align*}
H_{t-1}^{\T}\beta_{t,t-1} & =\E(Y_{t}\mid H_{t-1},R_{t-1}=1,R_{t}=0,A=a)\\
 & =\E(Y_{t}\mid H_{t-1},A=0;\alpha_{t-1,0})\text{ (By Assumption 2)}\\
 & =H_{t-1}^{\T}\alpha_{t-1,0}\text{ (By C1)}.
\end{align*}
if using the true imputation model parameter. 

We then prove formula \eqref{eq:beta-alpha} by induction. Suppose
the result holds for the individual who drops out at visit $s$, i.e.,
we impute the value at the last visit point by $\E\left(Y_{t}\mid H_{s-1},A=0\right)=H_{s-1}^{\T}\beta_{t,s-1}$.
Then for the one in group $a$ who drops out at visit $s-1$, the
missing outcome at visit $s$ is imputed by
\begin{align*}
H_{s-2}^{\T}\beta_{t,s-2} & =\E(Y_{t}\mid H_{s-2},R_{s-2}=1,R_{s-1}=0,A=a)\\
 & =\E(Y_{t}\mid H_{s-2},A=0)\text{ (By Assumption 2)}\\
 & =\E\left\{ \E\left(Y_{t}\mid H_{s-1},A=0\right)\mid H_{s-2},A=0\right\} \\
 & =\E\left(H_{s-1}^{\T}\beta_{t,s-1}\mid H_{s-2},A=0\right)\\
 & =\left(H_{s-2}^{\T},\E(Y_{s-1}\mid H_{s-2},A=0)\right)^{\T}\beta_{t,s-1}\\
 & =H_{s-2}^{\T}\left(\I_{p+s-2},\alpha_{s-2,0}\right)\beta_{t,s-1}.
\end{align*}
Then we have 
\begin{align*}
\beta_{t,s-2} & =\left(\I_{p+s-2},\alpha_{s-2,0}\right)\beta_{t,s-1}\\
 & =\left(\I_{p+s-2},\alpha_{s-2,0}\right)(\I_{p+s-2},\alpha_{s-1,0})(\I_{p+s-1},\alpha_{s,0})\cdots(\I_{p+t-3},\alpha_{t-2,0})\alpha_{t-1,0},
\end{align*}
which completes the proof.\end{proof}

Lemma \ref{lemma:beta-alpha} suggests an estimator of $\beta_{t,s-1}$
by plugging in the sequential imputation model parameter estimates
$\hat{\alpha}_{s-1}^{w},\cdots,\hat{\alpha}_{t-1}^{w}$ in formula
\eqref{eq:beta-alpha}. Set $R_{0}=1$, we can rewrite the imputed
value $Y_{t}^{*}$ at the last visit point based on the observed history,
the dropout pattern, and the estimated imputation model parameters
as
\begin{equation}
Y_{t}^{*}=R_{t}Y_{t}+\sum_{s=1}^{t}R_{s-1}(1-R_{s})H_{s-1}^{\T}\hat{\beta}_{t,s-1},\label{eq:impute_value}
\end{equation}
where $\hat{\beta}_{t,s-1}$ is the estimate of $\beta_{t,s-1}$.
We proceed to prove the consistency of the ATE estimator $\hat{\tau}$.

\subsection{Proof of Theorem 1 \label{subsec:supp_cons}}

To illustrate the dependence of the imputed value $Y_{t}^{*}$ with
the imputed parameter estimates $\hat{\beta}_{t}:=(\hat{\beta}_{t,0},\cdots,\hat{\beta}_{t,t-1})^{\T}$,
we rewrite $Y_{t}^{*}$ as $Y_{t}^{*}(\hat{\beta}_{t})$ and $Z^{*}$
as $Z^{*}(\hat{\beta}_{t})$. Denote $\beta_{t}:=(\beta_{t,0},\cdots,\beta_{t,t-1})^{\T}$
as the true value. 

We do not assume the correct model form for the analysis model, instead,
we give a wide range of models of the form (2) in the main text.
When a symmetric error distribution is imposed, we write the model
as
\begin{align}
Y_{t}^{*}(\hat{\beta}_{t}) & =\mu(A,X\mid\gamma_{0})+\varepsilon,\nonumber \\
 & =\left(A-\pi\right)g(X;\gamma_{0}^{(0)})+\tilde{h}(X)+\varepsilon\label{eq:work_model}
\end{align}
where $\varepsilon$ is the error term and is symmetric around $0$,
and $\tilde{h}(X)=\pi g(X;\gamma_{0}^{(0)})+h(X;\gamma_{0}^{(1)})$.
Under the symmetric error assumption, $\gamma_{0}^{(0)}$ satisfies
that $\E\left\{ g(X;\gamma^{(0)})\right\} =\E(Y_{t}\mid A=1)-\E(Y_{t}\mid A=0)=\tau_{0}$.

The robust estimator $\hat{\gamma}=(\hat{\gamma}^{(0)\T},\hat{\gamma}^{(1)\T})^{\T}$
is obtained from
\[
\left(\hat{\gamma}^{(0)},\hat{\gamma}^{(1)}\right)=\arg\min\frac{1}{n}\sum_{i=1}^{n}\rho\left\{ Y_{it}^{*}(\hat{\beta_{t}})-\mu(A,X\mid\gamma)\right\} .
\]
We now want to show that $\hat{\gamma}^{(0)}$ obtained from the robust
loss function satisfies that $\hat{\tau}=\sum_{i=1}^{n}g(X_{i};\allowbreak \hat{\gamma}^{(0)})\xrightarrow{\pr}\tau_{0}$.
Before restating Theorem 1 in the main text with technical details, we
first give the definitions of the two robust loss functions as the
absolute loss and the $\varepsilon$-insensitive loss. The absolute
loss function is defined as $\rho_{a}(x)=|x|$. The $\varepsilon$-insensitive
loss is defined as $\mathcal{L}_{\varepsilon}(x)=\max\left\{ |x|-\varepsilon,0\right\} $,
where the constant $\varepsilon>0$ provides a tolerance margin where
no penalties are given \citep{smola2004tutorial}.

\begin{thm}

Under the regularity conditions C1 and C2, and assume the following
regularity conditions holds for $a=0,1$:

\begin{enumerate}

\item[C3.] Given the baseline covariates, the error term is conditionally
independent with the treatment variable, i.e., $\varepsilon\indep A\mid X$.

\item[C4.] For any $\zeta$, the term $K_{1}:=\tilde{h}(X)+\varepsilon_{i}-\tilde{h}(X;\zeta)$
has an expectation and a finite second moment, i.e., $\E|K_{1}|<\infty$
and $\E(K_{1}^{2})<\infty$, where $\tilde{h}(X)=\pi g(X;\gamma_{0}^{(0)})+h(X)$,
$\tilde{h}(X;\zeta)$ is a parametric model of $\tilde{h}(X)$, and
$\gamma_{0}^{(0)}$ is the unique solution such that $\E\left\{ g(X;\gamma^{(0)})\right\} =\tau_{0}$.

\item[C5.] For any $\gamma^{(0)}$, the term $K_{2}:=\left(A-\pi\right)\left\{ g(X;\gamma^{(0)})-g(X;\gamma_{0}^{(0)})\right\} $
has an expectation and a finite second moment, i.e., $\E|K_{2}|<\infty$
and $\E(K_{2}^{2})<\infty$.

\item[C6.] For any $f_{j}(X)$, $\pr\left\{ X:g(X;\gamma^{(0)})-g(X;\gamma_{0}^{(0)})\neq0\right\} >0$,
for $\forall\gamma^{(0)}\neq0$.

\item[C7.] The error term $\varepsilon\mid X=x$ has a non-zero density
function.

\item[C8.] The function $G_{2}(\zeta):=\E\left[\rho(K_{1})\right]$
has a unique global minimizer $\zeta^{*}$.

\end{enumerate} Then, the ATE estimator $\hat{\tau}\xrightarrow{\pr}\tau_{0}$
as the sample size $n\rightarrow\infty$, for $s=1,\cdots,t$.

\end{thm}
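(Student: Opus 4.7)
\vspace{1em}
\noindent\textbf{Proof proposal.} The plan is to split the argument into two stages that mirror the two-step construction of $\hat{\tau}$: first establish consistency of the imputation parameters (and hence of the imputed outcomes), then invoke an M-estimation argument for the analysis step while exploiting the randomization and the symmetry of $\varepsilon$ to obtain the analysis-model-robust identification of $\gamma_0^{(0)}$.

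First I would apply Lemma \ref{lemma:seq_cons} visit-by-visit to obtain $\hat{\alpha}_{s-1}^{w} \xrightarrow{\pr} \alpha_{s-1,0}$ for each $s=1,\ldots,t$, and then feed these into Lemma \ref{lemma:beta-alpha} together with the continuous mapping theorem to conclude that $\hat{\beta}_{t,s-1} \xrightarrow{\pr} \beta_{t,s-1}$ for each $s$. Substituting into \eqref{eq:impute_value} and using the boundedness of $H_{s-1}$ on compact sets (together with Slutsky) gives $Y_{it}^{*}(\hat{\beta}_t) - Y_{it}^{*}(\beta_t) = o_{\pr}(1)$. So up to a negligible term I may analyze the analysis-stage M-estimator as if it were based on the idealized imputed outcome $Y_{it}^{*}(\beta_t) = \mu(A_i,X_i\mid\gamma_0) + \varepsilon_i$ in the model \eqref{eq:work_model}.

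The core of the proof is then the analysis-model-robustness identification step: I need to show that the population criterion
\begin{equation*}
Q(\gamma^{(0)},\gamma^{(1)}) = \E\,\rho\bigl\{(A-\pi)[g(X;\gamma_0^{(0)}) - g(X;\gamma^{(0)})] + K_1\bigr\},
\end{equation*}
with $K_1 = \tilde{h}(X) + \varepsilon - \tilde{h}(X;\gamma^{(1)})$, is uniquely minimized over $\gamma^{(0)}$ at $\gamma_0^{(0)}$. Conditioning on $X$ and using C3 so that $\varepsilon$ and $A$ are conditionally independent, the inner expectation over $A$ becomes $\pi\,\rho\{(1-\pi)\Delta(X)+K_1\} + (1-\pi)\,\rho\{-\pi\Delta(X)+K_1\}$ with $\Delta(X) = g(X;\gamma_0^{(0)})-g(X;\gamma^{(0)})$. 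Convexity of $\rho$ and the fact that $\pi(1-\pi)\Delta(X)+(1-\pi)(-\pi\Delta(X))=0$ give the Jensen inequality $\pi\rho\{\cdots\}+(1-\pi)\rho\{\cdots\} \geq \rho(K_1)$, with strict inequality whenever $\Delta(X)\neq 0$ on a set of positive probability; conditions C6 and C7 (positive density of $\varepsilon\mid X$) combined with the local strict convexity of $\rho$ (the quadratic piece of the Huber loss, or strict convexity of $|\cdot|$ and $\mathcal{L}_\varepsilon(\cdot)$ away from the kink) propagate this strict inequality through the integral over $\varepsilon$. Thus $\gamma_0^{(0)}$ is the unique $\gamma^{(0)}$-minimizer, and C8 supplies the unique $\gamma^{(1)}$-minimizer $\zeta^{*}$, yielding a unique population minimizer $\gamma_0 = (\gamma_0^{(0)\T},\zeta^{*\T})^\T$. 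Crucially, this identification does \emph{not} require $\tilde{h}(X;\gamma^{(1)})$ to equal $\tilde{h}(X)$, so the analysis model may be misspecified in its nuisance part without disturbing $\gamma_0^{(0)}$.

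Having identified $\gamma_0$, standard M-estimation arguments (uniform convergence of the empirical criterion to $Q(\gamma)$ under C4, C5 giving integrable envelopes, continuity of $\rho$, and compactness/argmax continuity) yield $\hat{\gamma}\xrightarrow{\pr}\gamma_0$; the gap between using $Y_{it}^{*}(\hat{\beta}_t)$ versus $Y_{it}^{*}(\beta_t)$ contributes only a $o_{\pr}(1)$ perturbation to the criterion and hence to its argmin. Finally, by the weak law of large numbers and the continuous mapping theorem,
\begin{equation*}
\hat{\tau} = \frac{1}{n}\sum_{i=1}^{n} g(X_i;\hat{\gamma}^{(0)}) \xrightarrow{\pr} \E\{g(X;\gamma_0^{(0)})\} = \tau_0,
\end{equation*}
where the last equality is precisely the identification property built into $\gamma_0^{(0)}$. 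The main obstacle is the Jensen-based identification step: ensuring that the strict inequality survives after integrating out $\varepsilon$ for every non-quadratic part of $\rho$ (Huber kink, absolute-loss kink, $\varepsilon$-insensitive flat region) is what forces the combination of C6 (non-degeneracy of $g(X;\gamma^{(0)})-g(X;\gamma_0^{(0)})$) and C7 (positive density of $\varepsilon\mid X$) and is the delicate piece that turns randomization plus symmetry into analysis-model robustness.
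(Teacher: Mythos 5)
Your proposal is correct and follows the paper's overall skeleton (consistency of $\hat{\alpha}^w$ and $\hat{\beta}_t$ via the two lemmas, reduction to the criterion evaluated at $Y_t^*(\beta_t)$, the decomposition of the population objective into a $\gamma^{(0)}$-part and a $\zeta$-part handled by C8, then argmax continuity and the WLLN for the final step), but the pivotal identification inequality is proved by a genuinely different and cleaner device. The paper shows $G_1(\gamma^{(0)},\zeta)=\E\{\rho(K_1-K_2)-\rho(K_1)\}\geq 0$ by a loss-specific case analysis: four explicit cases for the Huber loss (and separate two- or three-case analyses for the absolute and $\varepsilon$-insensitive losses), lower-bounding each case by a term linear in $K_2$ whose expectation vanishes by randomization, plus a residual quadratic term that is strictly positive under C6--C7. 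You instead condition on $(X,\varepsilon)$, use C3 and randomization so that $A$ remains Bernoulli$(\pi)$, observe that the two possible arguments of $\rho$ average exactly to $K_1$, and invoke the two-point Jensen inequality for convex $\rho$. This buys a single argument covering all three losses at once, and makes transparent that the mechanism is ``mean-zero perturbation plus convexity'' (the paper's linear lower bounds are precisely the subgradient inequalities underlying Jensen). What the paper's version buys in exchange is an explicit quantitative lower bound $G_1\geq\E\{\tfrac12 K_2^2\,\mathbb{I}(\cdot)\}$, whereas your route must argue strictness qualitatively. One imprecision to fix: the absolute and $\varepsilon$-insensitive losses are piecewise linear, hence not ``strictly convex away from the kink'' as your parenthetical claims; for those losses strict inequality in the two-point Jensen step holds exactly when the interval between the two evaluation points straddles a kink (or, for Huber, meets the quadratic piece), and C7 is what guarantees this happens with positive conditional probability over $\varepsilon$ whenever $\Delta(X)\neq 0$. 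Your closing sentence identifies this correctly, so it is a misstatement rather than a gap, but the final write-up should make the straddling argument explicit rather than appeal to strict convexity.
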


\begin{proof} We begin the proof by rewriting the working model (2).
Note that
\begin{align}
\mu(A,X\mid\gamma) & =Ag(X;\gamma^{(0)})+h(X;\gamma^{(1)})\nonumber \\
 & =\left(A-\pi\right)g(X;\gamma^{(0)})+\pi g(X;\gamma^{(0)})+h(X;\gamma^{(1)})\nonumber \\
 & =\left(A-\pi\right)g(X;\gamma^{(0)})+\tilde{h}(X;\zeta),\label{eq:contrast}
\end{align}
where $\tilde{h}(X;\zeta)=\pi g(X;\gamma^{(0)})+h(X;\gamma^{(1)})$
and $\zeta$ combines the parameters $\gamma^{(0)}$ and $\gamma^{(1)}$.
We are interested in estimating $g(X;\gamma^{(0)})$, as it is the
only part that connects with the ATE estimation. We want to prove
that $\hat{\gamma}^{(0)}$ obtained from minimizing the Huber loss
function satisfies that $\hat{\gamma}^{(0)}\xrightarrow{\pr}\gamma_{0}^{(0)}$,
regardless of the model specification. 

By Lemmas \ref{lemma:seq_cons} and \ref{lemma:beta-alpha}, we have
$\hat{\beta}_{t}\xrightarrow{\pr}\beta_{t}$. Follow the similar proof
in Lemma \ref{lemma:seq_cons}, by continuous mapping theorem, we
have $Y_{t}^{*}(\hat{\beta}_{t})=Y_{t}^{*}(\beta_{t})+o_{\pr}(1)=\left(A-\pi\right)g(X;\gamma_{0}^{(0)})+\tilde{h}(X)+\varepsilon+o_{\pr}(1)$.
Therefore, minimizing the loss function $n^{-1}\sum_{i=1}^{n}\rho\big\{ Y_{i}(\hat{\beta}_{t})-\mu(A_{i},X_{i}\mid\gamma)\big\}$
is asymptotically equivalent to minimizing $n^{-1}\sum_{i=1}^{n}\rho\big\{ Y_{i}(\beta_{t})-\mu(A_{i},X_{i}\mid\gamma)\big\}$.

We then follow the proof in \citet{xiao2019robust} to verify the
consistency of the estimator under $H_{0}$ using the Huber loss function,
the absolute loss function, or the $\varepsilon$-insensitive loss.

\paragraph{(i) For the Huber loss,}

denote $L_{n}(\gamma^{(0)},\zeta)=n^{-1}\sum_{i=1}^{n}\rho\left\{ Y_{it}(\beta_{t})-\mu(A_{i},X_{i}\mid\gamma)\right\} $,
where $\rho(x)=0.5x^{2}\mathbb{I}(|x|\ensuremath{<l})+\left\{ l|x|-0.5l^{2}\right\} \mathbb{I}(|x|\ensuremath{\geq l})$.
Then the estimator based on the Huber loss is
\begin{align}
(\hat{\gamma}^{(0)},\hat{\zeta}) & =\text{argmin}L_{n}(\gamma^{(0)},\zeta)\nonumber \\
 & =\text{argmin}\left\{ L_{n}(\gamma^{(0)},\zeta)-L_{n}(\gamma_{0}^{(0)},\zeta)\right\} +\left\{ L_{n}(\gamma_{0}^{(0)},\zeta)-L_{n}(\gamma_{0}^{(0)},\zeta')\right\} ,\label{eq:loss_part}
\end{align}
where $\zeta'$ is a fixed value. We examine the two terms in the
objective function \eqref{eq:loss_part} separately. 

For the first term in the function \eqref{eq:loss_part}, $L_{n}(\gamma^{(0)},\zeta)-L_{n}(\gamma_{0}^{(0)},\zeta)$
\begin{align*}
 & =\frac{1}{n}\sum_{i=1}^{n}\rho\left[\tilde{h}(X)+\varepsilon_{i}-\tilde{h}(X;\zeta)-\left(A-\pi\right)\left\{ g(X;\gamma^{(0)})-g(X;\gamma_{0}^{(0)})\right\} \right]\\
 & \qquad-\rho\left[\tilde{h}(X)+\varepsilon_{i}-\tilde{h}(X;\zeta)-\left(A-\pi\right)\left\{ g(X;\gamma^{(0)})-g(X;\gamma_{0}^{(0)})\right\} \right]\\
: & =\frac{1}{n}\sum_{i=1}^{n}d_{i1}.
\end{align*}
The regularity conditions C4 and C5 allow us to apply the weak law
of large number (WLLN) to $L_{n}(\gamma^{(0)},\zeta)-L_{n}(\gamma_{0}^{(0)},\zeta)$,
since 
\begin{align*}
|d_{i1}| & \leq\big|\frac{1}{2}\left(K_{1}-K_{2}\right)^{2}-l|K_{1}|+\frac{1}{2}l^{2}\big|\\
 & =\frac{1}{2}\left(K_{1}-l\right)^{2}+|K_{2}||K_{2}-2K_{1}|
\end{align*}
has a finite expectation. Thus, by WLLN, $L_{n}(\gamma^{(0)},\zeta)-L_{n}(\gamma_{0}^{(0)},\zeta)\xrightarrow{\pr}\E(D_{1})=G_{1}(\gamma^{(0)},\zeta)$,
where $D_{1}=\rho(K_{1}-K_{2})-\rho(K_{1})$. We claim that $G_{1}(\gamma^{(0)},\zeta)\geq0$
and reaches $0$ if and only if $\gamma^{(0)}=\gamma_{0}^{(0)}$.

First, note that $G_{1}(\gamma_{0}^{(0)},\zeta)=\rho(K_{1})-\rho(K_{1})=0$.
We proceed to prove that $G_{1}(\gamma^{(0)},\zeta)>0$ for $\forall\gamma^{(0)}\neq\gamma_{0}^{(0)}$
and consider the following four cases:
\begin{enumerate}
\item If $K_{1}>l$, we have 
\[
D_{1}=\rho(K_{1}-K_{2})-\rho(K_{1})\geq l(K_{1}-K_{2})-\frac{1}{2}l^{2}-\left(lK_{1}-\frac{1}{2}l^{2}\right)=-lK_{2}.
\]
\item If $K_{1}<-l$, repeat the step in 1 and we can get $D_{1}\geq lK_{2}$.
\item If $K_{1}\in[-l,l]$ and $K_{1}-K_{2}\in[-l,l]$, then 
\[
D_{1}=\frac{1}{2}(K_{1}-K_{2})^{2}-\frac{1}{2}K_{1}^{2}=-K_{1}K_{2}+\frac{1}{2}K_{2}^{2}.
\]
\item If $K_{1}\in[-l,l]$ and $K_{1}-K_{2}\notin[-l,l]$, then 
\begin{align*}
D_{1} & =l|K_{1}-K_{2}|-\frac{1}{2}l^{2}-\frac{1}{2}K_{1}^{2}\\
 & =\frac{1}{2}(K_{1}-K_{2})^{2}-\frac{1}{2}(|K_{1}-K_{2}|-l)^{2}-\frac{1}{2}K_{1}^{2}\\
 & \geq\frac{1}{2}(K_{1}-K_{2})^{2}-\frac{1}{2}K_{2}^{2}-\frac{1}{2}K_{1}^{2}=-K_{1}K_{2}.
\end{align*}
The inequality holds since by triangle inequality, $0<|K_{1}-K_{2}|-l\leq K_{1}+|K_{2}|-l\leq|K_{2}|$. 
\end{enumerate}
Incorporating the four cases together and taking expectations, we
have
\begin{align*}
G_{1}(\gamma^{(0)},\zeta) & \geq\E\left\{ -lK_{2}\mathbb{I}(K_{1}>l)\right\} +\E\left\{ lK_{2}\mathbb{I}(K_{1}<-l)\right\} \\
 & \qquad+\E\left\{ \left(-K_{1}K_{2}+\frac{1}{2}K_{2}^{2}\right)\mathbb{I}(K_{1}\in[-l,l])\mathbb{I}\left(K_{1}-K_{2}\in[-l,l]\right)\right\} \\
 & \qquad+\E\left\{ -K_{1}K_{2}\mathbb{I}(K_{1}\in[-l,l])\mathbb{I}\left(K_{1}-K_{2}\notin[-l,l]\right)\right\} \\
 & \geq\E\left\{ -lK_{2}\mathbb{I}(K_{1}>l)\right\} +\E\left\{ lK_{2}\mathbb{I}(K_{1}<-l)\right\} +\E\left\{ -K_{1}K_{2}\mathbb{I}(K_{1}\in[-l,l])\right\} \\
 & \qquad+\E\left\{ \frac{1}{2}K_{2}^{2}\mathbb{I}(K_{1}\in[-l,l])\mathbb{I}\left(K_{1}-K_{2}\in[-l,l]\right)\right\} .
\end{align*}
Note that by the regularity condition C3, $\pr(A)=\pi$, and $A\indep X$
(randomized trial), we have
\begin{align*}
\E\left\{ -lK_{2}\mathbb{I}(K_{1}>l)\right\}  & =-l\E\left[\E(K_{2}\mid X)\E\left\{ \mathbb{I}(K_{1}>l)\mid X\right\} \right]\\
 & =-l\E\left(\E\left[\left(A-\pi\right)\left\{ g(X;\gamma^{(0)})-g(X;\gamma_{0}^{(0)})\right\} \mid X\right]\E\left\{ \mathbb{I}(K_{1}>l)\mid X\right\} \right)\\
 & =-l\E\left[\E\left(A_{i}-\pi\right)\left\{ g(X;\gamma^{(0)})-g(X;\gamma_{0}^{(0)})\right\} \E\left\{ \mathbb{I}(K_{1}>l)\mid X\right\} \right]=0.
\end{align*}
Similarly, we have $\E\left\{ lK_{2}\mathbb{I}(K_{1}<-l)\right\} =\E\left\{ -K_{1}K_{2}\mathbb{I}(K_{1}\in[-l,l])\right\} =0$.
Therefore, 
\[
G_{1}(\gamma^{(0)},\zeta)\geq\E\left\{ \frac{1}{2}K_{2}^{2}\mathbb{I}(K_{1}\in[-l,l])\mathbb{I}\left(K_{1}-K_{2}\in[-l,l]\right)\right\} .
\]
By the regularity conditions C6 and C7, we know that $G_{1}(\gamma^{(0)},\zeta)>0$
for $\forall\gamma^{(0)}\neq\gamma_{0}^{(0)}$. 

For the second term in the function \eqref{eq:loss_part}, denote
$L_{n}(\gamma_{0}^{(0)},\zeta)-L_{n}(\gamma_{0}^{(0)},\zeta')=n^{-1}\sum_{i=1}^{n}d_{i2}$.
By the regularity condition C4, WLLN is applied, and we have $L_{n}(\gamma_{0}^{(0)},\zeta)-L_{n}(\gamma_{0}^{(0)},\zeta')\xrightarrow{\pr}\E(D_{2})=G_{2}(\zeta)$.

The results for the first term combined with the regularity condition
C8 implies that $(\gamma_{0}^{(0)},\zeta^{*})$ is the unique minimizer
of $G_{1}(\gamma^{(0)},\zeta)+G_{2}(\zeta)$. Since the Huber loss
function is strongly convex, by the argmax continuous mapping theorem,
we have $\hat{\gamma}^{(0)}\xrightarrow{\pr}\gamma_{0}^{(0)}$. By
continuous mapping theorem, $\hat{\tau}=n^{-1}\sum_{i=1}^{n}g(X_{i};\hat{\gamma}^{(0)})\xrightarrow{\pr}\E\left\{ g(X_{i};\gamma^{(0)})\right\} =\tau_{0}$.

\paragraph{(ii) For the absolute loss,}

denote $L_{a,n}(\gamma^{(0)},\zeta)=n^{-1}\sum_{i=1}^{n}\rho_{a}\left\{ Y_{i}-\mu(A,X\mid\gamma)\right\} $,
where $\rho_{a}(x)=|x|$ is the absolute loss. Then the estimator
based on $\rho_{a}(x)$ is
\begin{align}
(\hat{\gamma}^{(0)},\hat{\zeta}) & =\text{argmin}L_{a,n}(\gamma^{(0)},\zeta)\nonumber \\
 & =\text{argmin}\left\{ L_{a,n}(\gamma^{(0)},\zeta)-L_{a,n}(\gamma_{0}^{(0)},\zeta)\right\} +\left\{ L_{a,n}(\gamma_{0}^{(0)},\zeta)-L_{a,n}(\gamma_{0}^{(0)},\zeta')\right\} ,\label{eq:absloss_part-1}
\end{align}
where $\zeta'$ is a fixed value. We again examine the two terms in
the objective function \eqref{eq:absloss_part-1} separately. 

For the first term in the function \eqref{eq:absloss_part-1}, $L_{a,n}(\gamma^{(0)},\zeta)-L_{a,n}(\gamma_{0}^{(0)},\zeta)$
\begin{align*}
 & =\frac{1}{n}\sum_{i=1}^{n}\bigg(\rho_{a}\left[\tilde{h}(X)+\varepsilon_{i}-\tilde{h}(X;\zeta)-\left(A-\pi\right)\left\{ g(X;\gamma^{(0)})-g(X;\gamma_{0}^{(0)})\right\} \right]\\
 & \qquad-\rho_{a}\left[\tilde{h}(X)+\varepsilon_{i}-\tilde{h}(X;\zeta)-\left(A-\pi\right)\left\{ g(X;\gamma^{(0)})-g(X;\gamma_{0}^{(0)})\right\} \right]\bigg)\\
: & =\frac{1}{n}\sum_{i=1}^{n}d_{i1}.
\end{align*}
The regularity conditions C4 and C5 allow us to apply WLLN to $L_{a,n}(\gamma^{(0)},\zeta)-L_{a,n}(\gamma_{0}^{(0)},\zeta)$,
since $|d_{i1}|\leq\big|K_{1}-K_{2}|+|K_{1}|\leq2|K_{1}|+|K_{2}|$
has a finite expectation. Thus, by WLLN, $L_{a,n}(\gamma^{(0)},\zeta)-L_{a,n}(\gamma_{0}^{(0)},\zeta)\xrightarrow{\pr}\E(D_{1})=G_{1}(\gamma^{(0)},\zeta)$,
where $D_{1}=\rho_{a}(K_{1}-K_{2})-\rho_{a}(K_{1})$. 

First, note that $G_{1}(\gamma_{0}^{(0)},\zeta)=\rho_{a}(K_{1})-\rho_{a}(K_{1})=0$.
We proceed to prove that $G_{1}(\gamma^{(0)},\zeta)>0$ for $\forall\gamma^{(0)}\neq\gamma_{0}^{(0)}$
and consider the following two cases:
\begin{enumerate}
\item If $K_{1}\geq0$, we have $D_{1}=|K_{1}-K_{2}|-|K_{1}|\geq K_{1}-K_{2}-K_{1}=-K_{2}.$
\item If $K_{1}<0$, we have $D_{1}=|K_{1}-K_{2}|-|K_{1}|\geq K_{2}-K_{1}+K_{1}=K_{2}.$
\end{enumerate}
Incorporating the four cases together and taking expectations, we
have $G_{1}(\gamma^{(0)},\zeta)\geq\E\big\{ -K_{2}\mathbb{I}(K_{1}\geq0)\big\} +\E\left\{ K_{2}\mathbb{I}(K_{1}<0)\right\} $.
Follow the same proof, we have $\E\left\{ -K_{2}\mathbb{I}(K_{1}\geq0)\right\} =\E\left\{ K_{2}\mathbb{I}(K_{1}<0)\right\} =0$.
By the regularity conditions C6 and C7, we know that $G_{1}(\gamma^{(0)},\eta)>0$
for $\forall\gamma^{(0)}\neq\gamma_{0}^{(0)}$. The remaining proof
follows similar steps the proof for the Huber loss. 

\paragraph*{(iii) For the $\varepsilon$-insensitive loss,}

denote $L_{\varepsilon,n}(\gamma^{(0)},\zeta)=n^{-1}\sum_{i=1}^{n}\mathcal{L}_{\varepsilon}\left\{ Y_{i}-\mu(A,X\mid\gamma)\right\} $,
where $\mathcal{L}_{\varepsilon}(x)=\max\left\{ |x|-\varepsilon,0\right\} $
is the $\varepsilon$-insensitive loss. Then the estimator based on
$\mathcal{L}_{\varepsilon}(x)$ is
\begin{align}
(\hat{\gamma}^{(0)},\hat{\zeta}) & =\text{argmin}L_{\varepsilon,n}(\gamma^{(0)},\zeta)\nonumber \\
 & =\text{argmin}\left\{ L_{\varepsilon,n}(\gamma^{(0)},\zeta)-L_{\varepsilon,n}(\gamma_{0}^{(0)},\zeta)\right\} +\left\{ L_{\varepsilon,n}(\gamma_{0}^{(0)},\zeta)-L_{\varepsilon,n}(\gamma_{0}^{(0)},\zeta')\right\} ,\label{eq:epsloss_part-1}
\end{align}
where $\zeta'$ is a fixed value. We again examine the two terms in
the objective function \eqref{eq:epsloss_part-1} separately. 

For the first term in the function \eqref{eq:epsloss_part-1}, $L_{\varepsilon,n}(\gamma^{(0)},\zeta)-L_{\varepsilon,n}(\gamma_{0}^{(0)},\zeta)$
\begin{align*}
 & =\frac{1}{n}\sum_{i=1}^{n}\bigg(\mathcal{L}_{\varepsilon}\left[\tilde{h}(X)+\varepsilon_{i}-\tilde{h}(X;\zeta)-\left(A-\pi\right)\left\{ g(X;\gamma^{(0)})-g(X;\gamma_{0}^{(0)})\right\} \right]\\
 & \qquad-\mathcal{L}_{\varepsilon}\left[\tilde{h}(X)+\varepsilon_{i}-\tilde{h}(X;\zeta)-\left(A-\pi\right)\left\{ g(X;\gamma^{(0)})-g(X;\gamma_{0}^{(0)})\right\} \right]\bigg)\\
: & =\frac{1}{n}\sum_{i=1}^{n}d_{i1}.
\end{align*}
The regularity conditions C4 and C5 allow us to apply WLLN to $L_{\varepsilon,n}(\gamma^{(0)},\zeta)-L_{\varepsilon,n}(\gamma_{0}^{(0)},\zeta)$,
since $|d_{i1}|\leq\big|K_{1}-K_{2}|+\varepsilon+|K_{1}|+\varepsilon\leq2|K_{1}|+|K_{2}|+2\varepsilon$
has a finite expectation. Thus, by WLLN, $L_{\varepsilon,n}(\gamma^{(0)},\zeta)-L_{\varepsilon,n}(\gamma_{0}^{(0)},\zeta)\xrightarrow{\pr}\E(D_{1})=G_{1}(\gamma^{(0)},\zeta)$,
where $D_{1}=\mathcal{L}_{\varepsilon}(K_{1}-K_{2})-\mathcal{L}_{\varepsilon}(K_{1})$. 

First, note that $G_{1}(\gamma_{0}^{(0)},\zeta)=\mathcal{L}_{\varepsilon}(K_{1})-\mathcal{L}_{\varepsilon}(K_{1})=0$.
We proceed to prove that $G_{1}(\gamma^{(0)},\zeta)>0$ for $\forall\gamma^{(0)}\neq\gamma_{0}^{(0)}$
and consider the following two cases:
\begin{enumerate}
\item If $K_{1}\geq\varepsilon$, we have $D_{1}=\max\left(|K_{1}-K_{2}|-\varepsilon,0\right)-K_{1}+\varepsilon\geq|K_{1}-K_{2}|-|K_{1}|\geq K_{1}-K_{2}-K_{1}=-K_{2}$.
\item If $K_{1}\leq-\varepsilon$, we have $D_{1}=\max\left(|K_{1}-K_{2}|-\varepsilon,0\right)+K_{1}+\varepsilon\geq|K_{1}-K_{2}|-|K_{1}|\geq K_{2}-K_{1}+K_{1}=K_{2}$.
\item If $|K_{1}|<\varepsilon$, we have $D_{1}=\max\left(|K_{1}-K_{2}|-\varepsilon,0\right)-0\geq0$.
\end{enumerate}
Incorporating the four cases together and taking expectations, we
have $G_{1}(\gamma^{(0)},\zeta)\geq\E\big\{ -K_{2}\mathbb{I}(K_{1}\geq\varepsilon)\big\} +\E\left\{ K_{2}\mathbb{I}(K_{1}\leq-\varepsilon)\right\} $.
Follow the same proof, we have $\E\left\{ -K_{2}\mathbb{I}(K_{1}\geq\varepsilon)\right\} =\E\left\{ K_{2}\mathbb{I}(K_{1}\leq\varepsilon)\right\} =0$.
By the regularity conditions C6 and C7, we know that $G_{1}(\gamma^{(0)},\zeta)>0$
for $\forall\gamma^{(0)}\neq\gamma_{0}^{(0)}$. The remaining proof
follows similar steps the proof for the Huber loss. \end{proof}

\subsection{Proof of Theorem 2 \label{subsec:supp_norm}}

To explore the asymptotic normality of the estimator $\hat{\tau},$
we first focus on the asymptotic normality of $\hat{\alpha}_{s-1}^{w}$
for $s=1,\cdots,t$. Denote $\varphi(H_{s},\alpha_{s-1}):=(1-A)R_{s}w(H_{s-1};\rho_{s-1})\psi(Y_{s}-H_{s-1}^{\T}\alpha_{s-1})H_{s-1}$,
where $\psi(x)=\partial\rho(x)/\partial x$ is the derivative of the
robust loss function. Therefore, $\hat{\alpha}_{s-1}^{w}$ is the
solution to the estimating equations $\sum_{i=1}^{n}\varphi(H_{is},\alpha_{s-1})=0$.

\begin{lemma}\label{lemma:alpha_norm}

Assume the regularity conditions C1 and C2 and the following conditions:

\begin{enumerate}

\item[C9.] The partial derivative $\E\left\{ \psi(Y_{s}-H_{s-1}^{\T}\alpha_{s-1})\mid H_{s-1}\right\} $
with respect to $\alpha_{s-1}$ exists and is continuous around $\alpha_{s-1,0}$
almost everywhere. The second derivative of $\E\left\{ \psi(Y_{s}-H_{s-1}^{\T}\alpha_{s-1})\mid H_{s-1}\right\} $
with respect to $\alpha_{s-1}$ is continuous and dominated by some
integrable functions;

\item[C10.] The partial derivative of $\E\left\{ \varphi(H_{s},\alpha_{s-1})\mid H_{s-1}\right\} $
with respect to $\alpha_{s-1}$ is nonsingular. 

\item[C11.] The variance $\mathbb{V}\{q(H_{s},\alpha_{s-1,0})\}$
is finite, where 
\[
q(H_{s},\alpha_{s-1,0})=\left[-\frac{\partial\E\left\{ \varphi(H_{is},\alpha_{s-1,0})H_{is-1}^{\T}\mid H_{is-1}\right\} }{\partial\alpha_{s-1}^{\T}}\right]^{-1}\varphi(H_{s},\alpha_{s-1,0}).
\]

\end{enumerate} Then, for $s=1,\cdots,t$, as the sample size $n\rightarrow\infty$,
\[
\sqrt{n}(\hat{\alpha}_{s-1}^{w}-\alpha_{s-1,0})\xrightarrow{d}\mathcal{N}\Big(0,\mathbb{V}\left\{ q(H_{s},\alpha_{s-1,0})\right\} \Big).
\]

\end{lemma}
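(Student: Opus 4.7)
The plan is to treat $\hat{\alpha}_{s-1}^{w}$ as an M-estimator (equivalently a Z-estimator) solving $\sum_{i=1}^{n}\varphi(H_{is},\alpha_{s-1})=0$ and derive the standard sandwich expansion. First, I would invoke Lemma S1 to secure the consistency $\hat{\alpha}_{s-1}^{w}\xrightarrow{\pr}\alpha_{s-1,0}$, which localizes every subsequent expansion to a shrinking neighborhood of $\alpha_{s-1,0}$. The same argument used there (symmetry under C1 together with randomization of $A$ and monotone missingness) identifies $\alpha_{s-1,0}$ as the unique zero of $\E\{\varphi(H_{s},\alpha_{s-1})\mid H_{s-1}\}$, so in particular $\E\{\varphi(H_{s},\alpha_{s-1,0})\}=0$. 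This makes $\varphi(H_{is},\alpha_{s-1,0})$ a mean-zero iid sequence, which, combined with the finite variance assumption C11, puts us in position to apply the classical CLT to its standardized sum.

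Second, I would linearize the estimating equation around $\alpha_{s-1,0}$. Because $\psi$ for the Huber loss (and more severely for the absolute and $\varepsilon$-insensitive losses) is only piecewise smooth, I would avoid pointwise differentiation of $\varphi$ and work instead with the smoothed population quantity. Condition C9 ensures that $\alpha_{s-1}\mapsto\E\{\varphi(H_{s},\alpha_{s-1})\}$ is continuously differentiable with Jacobian $D_{s}=\partial\E\{\varphi(H_{s},\alpha_{s-1,0})\}/\partial\alpha_{s-1}^{\T}$, which is nonsingular by C10. A standard stochastic equicontinuity argument then gives the expansion
\[
\frac{1}{\sqrt{n}}\sum_{i=1}^{n}\varphi(H_{is},\hat{\alpha}_{s-1}^{w})=\frac{1}{\sqrt{n}}\sum_{i=1}^{n}\varphi(H_{is},\alpha_{s-1,0})+D_{s}\sqrt{n}(\hat{\alpha}_{s-1}^{w}-\alpha_{s-1,0})+o_{\pr}(1).
\]
Since the left-hand side vanishes at the estimator, solving for the increment produces
\[
\sqrt{n}(\hat{\alpha}_{s-1}^{w}-\alpha_{s-1,0})=-D_{s}^{-1}\frac{1}{\sqrt{n}}\sum_{i=1}^{n}\varphi(H_{is},\alpha_{s-1,0})+o_{\pr}(1)=\frac{1}{\sqrt{n}}\sum_{i=1}^{n}q(H_{is},\alpha_{s-1,0})+o_{\pr}(1),
\]
which is precisely the asymptotic linear representation in terms of the influence function $q$ defined in C11.

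Third, I would apply the classical CLT to the iid, mean-zero sum $n^{-1/2}\sum_{i=1}^{n}q(H_{is},\alpha_{s-1,0})$; its finite variance is exactly C11. Slutsky's theorem then combines this with the $o_{\pr}(1)$ remainder to yield $\sqrt{n}(\hat{\alpha}_{s-1}^{w}-\alpha_{s-1,0})\xrightarrow{d}\mathcal{N}\bigl(0,\V\{q(H_{s},\alpha_{s-1,0})\}\bigr)$, as claimed.

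The main technical obstacle is justifying the linearization in the second step, because $\psi$ is not classically differentiable everywhere: the Huber $\psi$ is merely Lipschitz, while the absolute-loss and $\varepsilon$-insensitive-loss $\psi$'s are discontinuous. The remedy is the empirical-process route: C9 already handles the smoothness of the population drift, so what remains is uniform control of the empirical fluctuation $n^{-1/2}\sum\bigl\{\varphi(H_{is},\alpha)-\varphi(H_{is},\alpha_{s-1,0})-\E[\varphi(H_{s},\alpha)-\varphi(H_{s},\alpha_{s-1,0})]\bigr\}=o_{\pr}(1)$ uniformly over $\alpha$ in a $n^{-1/2}$-neighborhood of $\alpha_{s-1,0}$. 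For the Huber case this follows from the Donsker property of bounded Lipschitz-indexed classes together with moment bounds on $H_{s-1}$; for the non-Lipschitz $\psi$'s it follows from a convexity lemma exploiting that $\rho$ is convex in $\alpha_{s-1}$, so pointwise convergence of convex criterion functions can be upgraded to uniform convergence on compacta.
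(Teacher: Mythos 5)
Your proposal follows essentially the same route as the paper: establish the asymptotically linear (sandwich) representation $\sqrt{n}(\hat{\alpha}_{s-1}^{w}-\alpha_{s-1,0})=n^{-1/2}\sum_{i=1}^{n}q(H_{is},\alpha_{s-1,0})+o_{\pr}(1)$ by linearizing the estimating equation around $\alpha_{s-1,0}$ under C9--C10, then apply the CLT under C11. The only difference is that the paper justifies the linearization by a bare appeal to a Taylor expansion of $\varphi$, whereas you more carefully route the argument through the smooth population drift plus stochastic equicontinuity (or convexity) to handle the non-differentiability of $\psi$ --- a refinement of, not a departure from, the paper's argument.
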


\begin{proof} Consider a Taylor expansion of the function $R_{s}\varphi(H_{s},\hat{\alpha}_{s-1}^{w})$
with respect to $\hat{\alpha}_{s-1}^{w}$ around $\alpha_{s-1,0}$
for $s=1,\cdots,t$, under the regularity conditions C1, C9 and C10,
we have the linearization form of $\hat{\alpha}_{s-1}^{w}$ as
\begin{align*}
\hat{\alpha}_{s-1}^{w}-\alpha_{s-1,0} & =\frac{1}{n}\sum_{i=1}^{n}\left[-\frac{\partial\E\left\{ \varphi(H_{is},\alpha_{s-1,0})H_{is-1}^{\T}\mid H_{is-1}\right\} }{\partial\alpha_{s-1}^{\T}}\right]^{-1}\varphi(H_{is},\alpha_{s-1,0})+o_{\pr}(n^{-1/2})\\
 & =\frac{1}{n}\sum_{i=1}^{n}q(H_{is},\alpha_{s-1,0})+o_{\pr}(n^{-1/2}).
\end{align*}
Under the regularity condition C11, we apply the central limit theorem
and get the asymptotic distribution of $\hat{\alpha}_{s-1}^{w}$.
\end{proof}

For simplicity of the notations, denote $\alpha_{0}=(\alpha_{0,0}^{\T},\cdots,\alpha_{t-1,0}^{\T})^{\T}$
as the true model parameters from $t$ sequential regression models.
Based on Lemmas \ref{lemma:beta-alpha} and \ref{lemma:alpha_norm},
we can further obtain the asymptotic normality of $\hat{\beta}_{t,s}$
for $s=1,\cdots,t$ in the following lemma.

\begin{lemma}\label{lemma:beta_norm}

Assume the regularity conditions C1--C11 and the following conditions:

\begin{enumerate}

\item[C12.] The variance $\mathbb{V}\{U_{t,s-1,i}(\alpha_{0})\}$
is finite, where $U_{t,s-1,i}(\alpha)$ is the linearization form
produced by $\hat{\beta}_{t,s-1}$, i.e., $\hat{\beta}_{t,s-1}-\beta_{t,s-1}=n^{-1}\sum_{i=1}^{n}U_{t,s-1,i}(\alpha_{0})+o_{\pr}(n^{-1/2})$.
Specifically, $U_{t,s,i}(\alpha_{0})=\left(\I_{p+s-2},\alpha_{s-1,0}\right)U_{t,s+1,i}(\alpha_{0})+\left(\mathbf{0}_{p+s-2}^{\T},1\right)\beta_{t,s}q(H_{is},\alpha_{s-1,0})$
and $U_{t,t-1,i}(\alpha_{0})=q(H_{it},\alpha_{t-1,0})$.

\end{enumerate}

Then, as the sample size $n\rightarrow\infty$, for $s=1,\cdots,t$,
\[
\sqrt{n}(\hat{\beta}_{t,s-1}-\beta_{t,s-1})\xrightarrow{d}\mathcal{N}\Big(0,\mathbb{V}\left\{ U_{t,s-1,i}(\alpha_{0})\right\} \Big).
\]

\end{lemma}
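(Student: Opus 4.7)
The plan is to prove Lemma S3 by descending induction on $s$, starting from $s=t$ and working down to $s=1$, exploiting the recursive relation for $\beta_{t,s-1}$ in terms of $\alpha_{s-1,0}$ and $\beta_{t,s}$ established in Lemma S2 and the linearization of $\hat\alpha_{s-1}^w$ from Lemma S1 (asymptotic normality of $\hat\alpha_{s-1}^w$).

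For the base case $s=t$, Lemma S2 gives $\hat\beta_{t,t-1}=\hat\alpha_{t-1}^w$, so the linearization stated in Lemma S1 immediately yields $\hat\beta_{t,t-1}-\beta_{t,t-1}=n^{-1}\sum_i q(H_{it},\alpha_{t-1,0})+o_\pr(n^{-1/2})=n^{-1}\sum_i U_{t,t-1,i}(\alpha_0)+o_\pr(n^{-1/2})$, matching the definition of $U_{t,t-1,i}$ in condition C12. For the inductive step with $s<t$, I would plug in $\hat\beta_{t,s-1}=(\I,\hat\alpha_{s-1}^w)\hat\beta_{t,s}$ and $\beta_{t,s-1}=(\I,\alpha_{s-1,0})\beta_{t,s}$ from Lemma S2, then split
\[
\hat\beta_{t,s-1}-\beta_{t,s-1}=(\I,\hat\alpha_{s-1}^w)(\hat\beta_{t,s}-\beta_{t,s})+(\mathbf 0,\hat\alpha_{s-1}^w-\alpha_{s-1,0})\beta_{t,s}.
\]
The second summand equals the scalar $(\mathbf 0^\T,1)\beta_{t,s}$ times the vector $\hat\alpha_{s-1}^w-\alpha_{s-1,0}$, which by Lemma S1 contributes $n^{-1}\sum_i (\mathbf 0^\T,1)\beta_{t,s}\,q(H_{is},\alpha_{s-1,0})+o_\pr(n^{-1/2})$. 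The first summand, using the inductive hypothesis on $\hat\beta_{t,s}-\beta_{t,s}$, the consistency $(\I,\hat\alpha_{s-1}^w)\xrightarrow{\pr}(\I,\alpha_{s-1,0})$ from Lemma 1 of the main text, and Slutsky's theorem, contributes $(\I,\alpha_{s-1,0})\,n^{-1}\sum_i U_{t,s,i}(\alpha_0)+o_\pr(n^{-1/2})$. Summing the two pieces produces exactly the recursive form of $U_{t,s-1,i}(\alpha_0)$ stated in Theorem 2 and condition C12.

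Once the linearization $\hat\beta_{t,s-1}-\beta_{t,s-1}=n^{-1}\sum_i U_{t,s-1,i}(\alpha_0)+o_\pr(n^{-1/2})$ is established for every $s$, the asymptotic normality with variance $\V\{U_{t,s-1,i}(\alpha_0)\}$ follows from the classical Lindeberg--Lévy central limit theorem applied to the iid summands $U_{t,s-1,i}(\alpha_0)$, whose finite variance is guaranteed by condition C12.

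The main obstacle I anticipate is the bookkeeping: verifying that the cross-product term $(\I,\hat\alpha_{s-1}^w-\alpha_{s-1,0})(\hat\beta_{t,s}-\beta_{t,s})$ arising in the split above is indeed $o_\pr(n^{-1/2})$ rather than merely $O_\pr(n^{-1})$, which requires that both factors are $O_\pr(n^{-1/2})$ (provided by Lemma S1 and the inductive hypothesis), and that the propagation of the $o_\pr(n^{-1/2})$ remainder through the linear operator $(\I,\alpha_{s-1,0})$ preserves that rate---straightforward but notationally delicate given the growing matrix dimensions $(p+s-2)\times(p+s-1)$ as $s$ varies.
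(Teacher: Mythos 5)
Your proposal is correct and follows essentially the same route as the paper's proof: descending induction on $s$ with base case $\hat{\beta}_{t,t-1}=\hat{\alpha}_{t-1}^{w}$, the recursion from Lemma \ref{lemma:beta-alpha}, the linearization of $\hat{\alpha}_{s-1}^{w}$ from Lemma \ref{lemma:alpha_norm}, and a final application of the central limit theorem under C12. The only difference is cosmetic: the paper linearizes the bilinear term $\hat{\alpha}_{s-1}^{w}\left(\mathbf{0}^{\T},1\right)\hat{\beta}_{t,s}$ via the formal delta method, whereas you expand directly and absorb the $O_{\pr}(n^{-1})$ cross term into $o_{\pr}(n^{-1/2})$, which is equally valid.
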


\begin{proof} Lemma \ref{lemma:beta-alpha} indicates that $\hat{\beta}_{t,t-1}=\hat{\alpha}_{t-1}^{w}$,
thus $\hat{\beta}_{t,t-1}$ shares the same linearization form as
$\hat{\alpha}_{t-1}^{w}$, i.e., 
\begin{align*}
\hat{\beta}_{t,t-1}-\beta_{t,t-1} & =\frac{1}{n}\sum_{i=1}^{n}q(H_{it},\alpha_{t-1,0})+o_{\pr}(n^{-1/2})\\
 & =\frac{1}{n}\sum_{i=1}^{n}U_{t,t-1,i}(\alpha_{0})+o_{\pr}(n^{-1/2}).
\end{align*}

For the individuals who drop out at visit $t-1$, the corresponding
imputation parameter estimate $\hat{\beta}_{t,t-2}$ can be expressed
as
\begin{align}
\hat{\beta}_{t,t-2} & =\left(\I_{p+t-3},\hat{\alpha}_{t-2,0}^{w}\right)\hat{\alpha}_{t-1}^{w}\nonumber \\
 & =\left(\I_{p+t-3},\hat{\alpha}_{t-2,0}^{w}\right)\hat{\beta}_{t,t-1}\nonumber \\
 & =\left(\I_{p+t-3},\mathbf{0}_{p+t-3}^{\T}\right)\hat{\beta}_{t,t-1}+\hat{\alpha}_{t-2}^{w}\left(\mathbf{0}_{p+t-3}^{\T},1\right)\hat{\beta}_{t,t-1}.\label{eq:beta1}
\end{align}
The linearization form of the first term in formula \eqref{eq:beta1}
can be obtained directly via delta-method as
\[
\left(\I_{p+t-3},\mathbf{0}_{p+t-3}^{\T}\right)\hat{\beta}_{t,t-1}-\left(\I_{p+t-3},\mathbf{0}_{p+t-3}^{\T}\right)\beta_{t,t-1}=\frac{1}{n}\sum_{i=1}^{n}\left(\I_{p+t-3},\mathbf{0}_{p+t-3}^{\T}\right)q(H_{it},\alpha_{t-1,0})+o_{\pr}(n^{-1/2}).
\]
For the second term, let $g_{t-2}(\alpha_{t-2},\beta_{t,t-1}):=\alpha_{t-2}\left(\mathbf{0}_{p+t-3}^{\T},1\right)\beta_{t,t-1}$.
Then we have $\nabla g_{t-2}\Big|_{(\alpha_{t-2,0},\beta_{t,t-1})}=\left\{ \left(\mathbf{0}_{p+t-3}^{\T},1\right)\beta_{t,t-1},\alpha_{t-2,0}\left(\mathbf{0}_{p+t-3}^{\T},1\right)\right\} ^{\T}$.
Under the regularity condition C9 by Theorem 5.27 in \citet{boos2013essential},
we have
\begin{align*}
g_{t-2}(\hat{\alpha}_{t-2}^{w},\hat{\beta}_{t,t-1})-g_{t-2}(\alpha_{t-2,0},\beta_{t,t-1}) & =\frac{1}{n}\sum_{i=1}^{n}\nabla g_{t-2}^{\T}\Big|_{(\alpha_{t-2,0},\beta_{t,t-1})}\begin{pmatrix}q(H_{it-1},\alpha_{t-2,0})\\
U_{t,t-1,i}(\alpha_{0})
\end{pmatrix}+o_{\pr}(n^{-1/2})\\
 & =\frac{1}{n}\sum_{i=1}^{n}\left(\mathbf{0}_{p+t-3}^{\T},1\right)\beta_{t,t-1}q(H_{it-1},\alpha_{t-2,0})\\
 & \qquad\qquad+\alpha_{t-2,0}\left(\mathbf{0}_{p+t-3}^{\T},1\right)U_{t,t-1,i}(\alpha_{0})+o_{\pr}(n^{-1/2}).
\end{align*}
Combine the two terms together, we have 
\begin{align*}
\hat{\beta}_{t,t-2}-\beta_{t,t-2} & =\frac{1}{n}\sum_{i=1}^{n}\left(\mathbf{0}_{p+t-3}^{\T},1\right)\beta_{t,t-1}q(H_{it-1},\alpha_{t-2,0})\\
 & \qquad\qquad+\left\{ \left(\I_{p+t-3},\mathbf{0}_{p+t-3}^{\T}\right)+\alpha_{t-2,0}\left(\mathbf{0}_{p+t-3}^{\T},1\right)\right\} U_{t,t-1,i}(\alpha_{0})+o_{\pr}(n^{-1/2})\\
 & =\frac{1}{n}\sum_{i=1}^{n}\left(\I_{p+t-3},\alpha_{t-2,0}\right)U_{t,t-1,i}(\alpha_{0})+\left(\mathbf{0}_{p+t-3}^{\T},1\right)\beta_{t,t-1}q(H_{it-1},\alpha_{t-2,0})+o_{\pr}(n^{-1/2})\\
 & =\frac{1}{n}\sum_{i=1}^{n}U_{t,t-2,i}(\alpha_{0})+o_{\pr}(n^{-1/2}),
\end{align*}
which matches the result in the lemma when $s=t-2$.

We then prove the lemma by induction. Suppose the result holds for
the individual who drops out at visit $s+1$, i.e., 
\begin{align*}
\hat{\beta}_{t,s}-\beta_{t,s} & =\frac{1}{n}\sum_{i=1}^{n}U_{t,s,i}(\alpha_{0})+o_{\pr}(n^{-1/2})\\
 & =\frac{1}{n}\sum_{i=1}^{n}\left(\I_{p+s-1},\alpha_{s,0}\right)U_{t,s,i}(\alpha_{0})+\left(\mathbf{0}_{p+s-1}^{\T},1\right)\beta_{t,s+1}q(H_{is+1},\alpha_{s,0})+o_{\pr}(n^{-1/2}).
\end{align*}
Then for individuals in group $a$ who drop out at visit $s$, the
corresponding imputation parameter estimate $\hat{\beta}_{t,s-1}$
can be expressed as
\begin{align}
\hat{\beta}_{t,s-1} & =(\I_{p+s-2},\hat{\alpha}_{s-1,0}^{w})\hat{\beta}_{t,s}\nonumber \\
 & =\left(\I_{p+s-2},\mathbf{0}_{p+s-2}^{\T}\right)\hat{\beta}_{t,s}+\hat{\alpha}_{s-1,0}^{w}\left(\mathbf{0}_{p+s-2}^{\T},1\right)\hat{\beta}_{t,s}.\label{eq:beta2}
\end{align}
Similarly, the linearization form of the first term in formula \eqref{eq:beta2}
can be obtained directly via delta-method as
\[
\left(\I_{p+s-2},\mathbf{0}_{p+s-2}^{\T}\right)\hat{\beta}_{t,s}-\left(\I_{p+s-2},\mathbf{0}_{p+s-2}^{\T}\right)\beta_{t,s}=\frac{1}{n}\sum_{i=1}^{n}\left(\I_{p+s-2},\mathbf{0}_{p+s-2}^{\T}\right)U_{t,s,i}(\alpha_{0})+o_{\pr}(n^{-1/2}).
\]
For the second term, let $g_{s-1}(\alpha_{s-1},\beta_{t,s}):=\alpha_{s-1}\left(\mathbf{0}_{p+s-2}^{\T},1\right)\beta_{t,s}$.
Then we have $\nabla g_{s-1}\Big|_{(\alpha_{s-1,0},\beta_{t,s})}=\left\{ \left(\mathbf{0}_{p+s-2}^{\T},1\right)\beta_{t,s},\alpha_{s-1,0}\left(\mathbf{0}_{p+s-2}^{\T},1\right)\right\} ^{\T}$.
Under the regularity condition C9, we have
\begin{align*}
g_{s-1}(\hat{\alpha}_{s-1}^{w},\hat{\beta}_{t,s})-g_{t-2}(\alpha_{s-1,0},\beta_{t,s}) & =\frac{1}{n}\sum_{i=1}^{n}\nabla g_{s-1}^{\T}\Big|_{(\alpha_{s-1,0},\beta_{t,s})}\begin{pmatrix}q(H_{is},\alpha_{s-1,0})\\
U_{t,s,i}(\alpha_{0})
\end{pmatrix}+o_{\pr}(n^{-1/2})\\
 & =\frac{1}{n}\sum_{i=1}^{n}\left(\mathbf{0}_{p+s-2}^{\T},1\right)\beta_{t,s}q(H_{is},\alpha_{s-1,0})\\
 & \qquad\qquad+\alpha_{s-1,0}\left(\mathbf{0}_{p+s-2}^{\T},1\right)U_{t,s,i}(\alpha_{0})+o_{\pr}(n^{-1/2}).
\end{align*}
Combine the two terms, the linearization form of $\hat{\beta}_{t,s-1}$
is
\begin{align*}
\hat{\beta}_{t,s-1}-\beta_{t,s-1} & =\frac{1}{n}\sum_{i=1}^{n}\left(\mathbf{0}_{p+s-2}^{\T},1\right)\beta_{t,s}q(H_{is},\alpha_{s-1,0})\\
 & \qquad\qquad+\left\{ \left(\I_{p+s-2},\mathbf{0}_{p+s-2}^{\T}\right)+\alpha_{s-1,0}\left(\mathbf{0}_{p+s-2}^{\T},1\right)\right\} U_{t,s,i}(\alpha_{0})+o_{\pr}(n^{-1/2})\\
 & =\frac{1}{n}\sum_{i=1}^{n}\left(\I_{p+s-2},\alpha_{s-1,0}\right)U_{t,s,i}(\alpha_{0})+\left(\mathbf{0}_{p+s-2}^{\T},1\right)\beta_{t,s}q(H_{is},\alpha_{s-1,0})+o_{\pr}(n^{-1/2})\\
 & =\frac{1}{n}\sum_{i=1}^{n}U_{t,s-1,i}(\alpha_{0})+o_{\pr}(n^{-1/2}).
\end{align*}
Apply the central limit theorem based on the regularity condition
C12 and we complete the proof. \end{proof}

We restate Theorem 2 in the main text below with technical details. 

\begin{thm}

Under the regularity conditions C1--C12, and assume the following
regularity conditions:

\begin{enumerate}

\item[C13.] The partial derivatives $\varphi_{a}\left\{ Z^{*}(\beta_{t}),\gamma\right\} $
with respect to $\gamma$ and $\beta_{t}$ exist and are continuous
around $\gamma_{0}$ and $\beta_{t}$ almost everywhere. The second
derivatives of $\varphi_{a}\left\{ Z^{*}(\beta_{t}),\gamma\right\} $
with respect to $\gamma_{0}$ and $\beta_{t}$ are continuous and
dominated by some integrable functions.

\item[C14.] The partial derivative of $\E\left[\varphi_{a}\left\{ Z^{*}(\beta_{t}),\gamma\right\} \right]$
with respect to $\gamma$ at $\gamma=\gamma_{0}$, i.e., $D_{\varphi} = \partial\E\Big[\varphi_{a}\big\{ Z^{*}(\beta_{t}),\allowbreak \gamma_{0}\big\}\Big] /\partial\gamma^{\T}$,
is nonsingular. 

\item[C15.] The partial derivative $g(X,\gamma^{(0)})$ with respect
to $\gamma^{(0)}$ exists and is continuous around $\gamma_{0}^{(0)}$
almost everywhere. The second derivative of $g(X,\gamma^{(0)})$ with
respect to $\gamma_{0}^{(0)}$ is continuous and dominated by some
integrable functions.

\item[C16.] The variance $\mathbb{V}\left\{ V_{\tau,i}(\alpha_{0},\gamma_{0})\right\} $
is finite, where $V_{\tau,i}(\alpha_{0},\gamma_{0})=\left\{ \partial g(X_{i};\gamma_{0}^{(0)})/\partial\gamma^{\T}\right\} c^{\T}V_{\gamma,i}(\alpha_{0},\gamma_{0})$,
\begin{align*}
V_{\gamma,i}(\alpha_{0},\gamma_{0}) & =D_{\varphi}^{-1}\bigg[\varphi_{a}\left\{ Z_{i}^{*}(\beta_{t}),\gamma_{0}\right\} +\sum_{s=1}^{t}\E\left\{ R_{s-1}(1-R_{s})\frac{\partial\mu(A,X\mid\gamma_{0})}{\partial\gamma^{\T}}\frac{\partial\psi(e)}{\partial e}H_{s-1}^{\T}\right\} U_{t,s-1,i}(\alpha_{0})\bigg],
\end{align*}
$e_{i}=Y_{it}^{*}(\beta_{t})-\mu(A,X\mid\gamma_{0})$, and $c^{\T}=(\mathbf{I}_{d_{0}},\mathbf{0}_{d_{0}\times d_{1}})$.
Here, $\I_{d_{0}}$ is a $(d_{0}\times d_{0})$-dimensional identity
matrix, $\mathbf{0}_{d_{0}\times d_{1}}$ is a $(d_{0}\times d_{1})$-dimensional
zero matrix. 

\end{enumerate} Then, as the sample size $n\rightarrow\infty$,
\[
\sqrt{n}(\hat{\tau}-\tau_{0})\xrightarrow{d}\mathcal{N}\Big(0,\mathbb{V}\left\{ V_{\tau,i}(\alpha_{0},\gamma_{0})\right\} \Big).
\]

\end{thm}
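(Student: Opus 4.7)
The plan is to treat $\hat{\gamma}$ as a Z-estimator whose estimating equation depends on a first-stage nuisance estimator $\hat{\beta}_{t}$, derive an asymptotic linearization of $\hat{\gamma}-\gamma_{0}$ that absorbs both the sampling variability in the analysis step and the propagated variability from Lemmas \ref{lemma:beta_norm}, and then obtain the linearization of $\hat{\tau}-\tau_{0}$ by the delta method. Finally, the central limit theorem is applied to the i.i.d. score-like quantity $V_{\tau,i}(\alpha_{0},\gamma_{0})$.

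\paragraph{Step 1: two-variable Taylor expansion of the analysis score.} Starting from the defining equation $n^{-1}\sum_{i}\varphi_{a}\{Z_{i}^{*}(\hat{\beta}_{t}),\hat{\gamma}\}=0$, I expand jointly in $(\gamma,\beta_{t})$ around $(\gamma_{0},\beta_{t})$. Under conditions C13--C14, this yields
\[
\hat{\gamma}-\gamma_{0}=-D_{\varphi}^{-1}\left[\frac{1}{n}\sum_{i=1}^{n}\varphi_{a}\{Z_{i}^{*}(\beta_{t}),\gamma_{0}\}+\Gamma\,(\hat{\beta}_{t}-\beta_{t})\right]+o_{\pr}(n^{-1/2}),
\]
where $\Gamma=\partial\E[\varphi_{a}\{Z^{*}(\beta_{t}),\gamma_{0}\}]/\partial\beta_{t}^{\T}$. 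The consistency of $\hat{\gamma}$ and $\hat{\beta}_{t}$ needed to justify dropping quadratic remainders is inherited from Theorem \ref{thm:cons} and Lemmas \ref{lemma:seq_cons}--\ref{lemma:beta-alpha}.

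\paragraph{Step 2: identifying $\Gamma$.} Because $Y_{t}^{*}(\beta_{t})=R_{t}Y_{t}+\sum_{s=1}^{t}R_{s-1}(1-R_{s})H_{s-1}^{\T}\beta_{t,s-1}$, the chain rule gives $\partial\varphi_{a}/\partial\beta_{t,s-1}^{\T}=\{\partial\mu(A,X\mid\gamma_{0})/\partial\gamma^{\T}\}\{\partial\psi(e)/\partial e\}R_{s-1}(1-R_{s})H_{s-1}^{\T}$. Taking expectations identifies the $s$-th block of $\Gamma$ as the matrix displayed in Theorem \ref{thm:norm}. Plugging in the linearization $\hat{\beta}_{t,s-1}-\beta_{t,s-1}=n^{-1}\sum_{i}U_{t,s-1,i}(\alpha_{0})+o_{\pr}(n^{-1/2})$ from Lemma \ref{lemma:beta_norm} and collecting the sums produces
\[
\hat{\gamma}-\gamma_{0}=\frac{1}{n}\sum_{i=1}^{n}V_{\gamma,i}(\alpha_{0},\gamma_{0})+o_{\pr}(n^{-1/2}).
\]

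\paragraph{Step 3: delta method for $\hat{\tau}$ and the CLT.} Writing $\hat{\tau}-\tau_{0}=n^{-1}\sum_{i}\{g(X_{i};\hat{\gamma}^{(0)})-g(X_{i};\gamma_{0}^{(0)})\}+n^{-1}\sum_{i}\{g(X_{i};\gamma_{0}^{(0)})-\tau_{0}\}$ and Taylor-expanding the first sum under C15, the projection matrix $c^{\T}$ selects the $\gamma^{(0)}$-block from $V_{\gamma,i}$, so the overall linearization becomes $\hat{\tau}-\tau_{0}=n^{-1}\sum_{i}V_{\tau,i}(\alpha_{0},\gamma_{0})+o_{\pr}(n^{-1/2})$, with any mean-zero residual from the $g(X_{i};\gamma_{0}^{(0)})-\tau_{0}$ piece folded into $V_{\tau,i}$ via the definition of $\gamma_{0}^{(0)}$. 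The finite-variance assumption C16 then justifies applying the Lindeberg--L\'evy CLT to complete the proof.

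\paragraph{Main obstacle.} The technical crux is the differentiation of $\psi$ in Step 2. For all three loss functions of interest, $\psi$ is non-smooth: Huber's $\psi$ is only Lipschitz with a derivative that has jumps at $\pm l$, while the absolute and $\varepsilon$-insensitive scores are piecewise-constant sign functions whose pointwise derivative is zero almost everywhere. Classical Taylor expansion is therefore invalid pointwise, and $\partial\psi(e)/\partial e$ must be interpreted in the sense of the conditional density of $e$ at the breakpoints, in the spirit of Huber's original argument. Rigorous justification requires a stochastic-equicontinuity or empirical-process argument to show that the expectation $\E[\varphi_{a}\{Z^{*}(\beta_{t}),\gamma\}]$ remains smooth even when the summand is not, so that the Jacobian $\Gamma$ is well defined and the Taylor remainder in Step 1 is indeed $o_{\pr}(n^{-1/2})$. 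Aside from this non-smoothness, the remaining steps follow standard M-estimation arguments.
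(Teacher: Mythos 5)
Your proposal follows essentially the same route as the paper's proof: an M-estimation linearization of $\hat{\gamma}$ (the paper performs two sequential one-variable Taylor expansions, first in $\gamma$ and then in $\beta_{t}$, where you do a single joint expansion, but the resulting influence function $V_{\gamma,i}(\alpha_{0},\gamma_{0})$ is identical), identification of the nuisance Jacobian $\Gamma$ by the chain rule applied to $Y_{t}^{*}(\beta_{t})$, substitution of the linearization of $\hat{\beta}_{t,s-1}$ from Lemma \ref{lemma:beta_norm}, the delta method for $\hat{\tau}=n^{-1}\sum_{i}g(X_{i};\hat{\gamma}^{(0)})$ under C15, and the central limit theorem under C16. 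Your closing remark about the non-smoothness of $\psi$ is well taken---the paper sidesteps this by directly assuming the required differentiability and domination in C13 rather than giving an empirical-process argument---but this does not alter the structure of the proof.
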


\begin{proof} Consider a Taylor expansion of the function $\sum_{i=1}^{n}\varphi_{a}\left\{ Z_{i}^{*}(\hat{\beta}_{t}),\gamma\right\} $
with respect to $\gamma$ around $\gamma_{0}$, under the regularity
conditions C13 and C14, we have the linearization form of $\hat{\gamma}$
as
\begin{align*}
\hat{\gamma}-\gamma_{0} & =\frac{1}{n}\sum_{i=1}^{n}\left[-\frac{1}{n}\sum_{i=1}^{n}\frac{\partial\varphi_{a}\left\{ Z_{i}^{*}(\hat{\beta}_{t}),\gamma_{0}\right\} }{\partial\gamma^{\T}}\right]^{-1}\varphi_{a}\left\{ Z_{i}^{*}(\hat{\beta}_{t}),\gamma_{0}\right\} +o_{\pr}(n^{-1/2})\\
 & =\left(-\frac{\partial\E\left[\varphi_{a}\left\{ Z(\beta_{t}),\gamma_{0}\right\} \right]}{\partial\gamma^{\T}}\right)^{-1}\frac{1}{n}\sum_{i=1}^{n}\varphi_{a}\left\{ Z_{i}^{*}(\hat{\beta}_{t}),\gamma_{0}\right\} +o_{\pr}(n^{-1/2})\\
 & =D_{\varphi}^{-1}\frac{1}{n}\sum_{i=1}^{n}\varphi_{a}\left\{ Z_{i}^{*}(\hat{\beta}_{t}),\gamma_{0}\right\} +o_{\pr}(n^{-1/2}).
\end{align*}
Therefore, 
\begin{align}
\sqrt{n}(\hat{\gamma}-\gamma_{0}) & =D_{\varphi}^{-1}n^{-1/2}\sum_{i=1}^{n}\varphi_{a}\left\{ Z_{i}^{*}(\hat{\beta}_{t}),\gamma_{0}\right\} +o_{\pr}(1)\nonumber \\
 & =D_{\varphi}^{-1}\Big[n^{-1/2}\sum_{i=1}^{n}\varphi_{a}\left\{ Z_{i}^{*}(\beta_{t}),\gamma_{0}\right\} \nonumber \\
 & \qquad\quad+n^{-1/2}\sum_{i=1}^{n}\varphi_{a}\left\{ Z_{i}^{*}(\hat{\beta}_{t}),\gamma_{0}\right\} -n^{-1/2}\sum_{i=1}^{n}\varphi_{a}\left\{ Z_{i}^{*}(\beta_{t}),\gamma_{0}\right\} \Big]+o_{\pr}(1).\label{eq:mean_imp}
\end{align}
The first term in formula \eqref{eq:mean_imp} is the sum of i.i.d.
components with $\E\left[\varphi_{a}\left\{ Z_{i}^{*}(\beta_{t}),\gamma_{0}\right\} \right]=0$.
Then by the central limit theorem, the first term converges to a normal
distribution with the mean $0$ and the variance $\V\left[\varphi_{a}\left\{ Z_{i}^{*}(\beta_{t}),\gamma_{0}\right\} \right]$.

For the term $n^{-1/2}\sum_{i=1}^{n}\varphi_{a}\left\{ Z_{i}^{*}(\hat{\beta}_{t}),\gamma_{0}\right\} -n^{-1/2}\sum_{i=1}^{n}\varphi_{a}\left\{ Z_{i}^{*}(\beta_{t}),\gamma_{0}\right\} $
in formula \eqref{eq:mean_imp}, consider a Taylor expansion of $n^{-1}\sum_{i=1}^{n}\varphi_{a}\left\{ Z_{i}^{*}(\hat{\beta}_{t}),\gamma_{0}\right\} $
with respect to $\hat{\beta_{t}}$ around $\beta_{t}$, again by the
regularity conditions C13 and C14, we have
\begin{align*}
\frac{1}{n}\sum_{i=1}^{n}\varphi_{a}\left\{ Z_{i}^{*}(\hat{\beta}_{t}),\gamma_{0}\right\}  & =\frac{1}{n}\sum_{i=1}^{n}\varphi_{a}\left\{ Z_{i}^{*}(\beta_{t}),\gamma_{0}\right\} +\frac{1}{n}\sum_{i=1}^{n}\frac{\partial\varphi_{a}\left\{ Z_{i}^{*}(\beta_{t}),\gamma_{0}\right\} }{\partial\beta_{t}^{\T}}(\hat{\beta}_{t}-\beta_{t})+o_{\pr}(n^{-1/2})\\
 & =\frac{1}{n}\sum_{i=1}^{n}\varphi_{a}\left\{ Z_{i}^{*}(\beta_{t}),\gamma_{0}\right\} +\frac{\partial\E\left[\varphi_{a}\left\{ Z_{i}^{*}(\beta_{t}),\gamma_{0}\right\} \right]}{\partial\beta_{t}^{\T}}(\hat{\beta}_{t}-\beta_{t})+o_{\pr}(n^{-1/2}).
\end{align*}

Note that 
\[
\frac{\partial\E\left[\varphi_{a}\left\{ Z_{i}^{*}(\beta_{t}),\gamma_{0}\right\} \right]}{\partial\beta_{t}^{\T}}=\left(\frac{\partial\E\left[\varphi_{a}\left\{ Z_{i}^{*}(\beta_{t}),\gamma_{0}\right\} \right]}{\partial\beta_{t,0}^{\T}},\cdots,\frac{\partial\E\left[\varphi_{a}\left\{ Z_{i}^{*}(\beta_{t}),\gamma_{0}\right\} \right]}{\partial\beta_{t,t-1}^{\T}}\right).
\]
From formula \eqref{eq:impute_value}, each component of the derivative
$\partial\E\left[\varphi_{a}\left\{ Z_{i}^{*}(\beta_{t}),\gamma_{0}\right\} \right]/\partial\beta_{t}^{\T}$
can be obtained by the Chain Rule as
\begin{align*}
\frac{\partial\E\left[\varphi_{a}\left\{ Z_{i}^{*}(\beta_{t}),\gamma_{0}\right\} \right]}{\partial\beta_{t,s-1}^{\T}} & =\E\left\{ R_{s-1}(1-R_{s})\frac{\partial\mu(A,X\mid\gamma_{0})}{\partial\gamma^{\T}}\frac{\partial\psi(e)}{\partial e}H_{s-1}^{\T}\right\} ,
\end{align*}
for $s=1,\cdots,t$. Then we can apply the linearization form stated
in Lemma \ref{lemma:beta_norm}, under the regularity condition C13
by Theorem 5.27 in \citet{boos2013essential}, we have $n^{-1}\sum_{i=1}^{n}\varphi_{a}\left\{ Z_{i}^{*}(\hat{\beta}_{t}),\gamma_{0}\right\} -n^{-1}\sum_{i=1}^{n}\varphi_{a}\left\{ Z_{i}^{*}(\beta_{t}),\gamma_{0}\right\} $
\[
=\frac{1}{n}\sum_{i=1}^{n}\sum_{s=1}^{t}\E\left\{ R_{s-1}(1-R_{s})\frac{\partial\mu(A,X\mid\gamma_{0})}{\partial\gamma^{\T}}\frac{\partial\psi(e)}{\partial e}H_{s-1}^{\T}\right\} U_{t,s-1,i}(\alpha_{0})+o_{\pr}(n^{-1/2}).
\]
Therefore, equation \eqref{eq:mean_imp} can be further expressed
as
\begin{align*}
\sqrt{n}(\hat{\gamma}-\gamma_{0}) & =n^{-1/2}\sum_{i=1}^{n}D_{\varphi}^{-1}\bigg[\varphi_{a}\left\{ Z_{i}^{*}(\beta_{t}),\gamma_{0}\right\} +\sum_{s=1}^{t}\E\left\{ R_{s-1}(1-R_{s})\frac{\partial\mu(A,X\mid\gamma_{0})}{\partial\gamma^{\T}}\frac{\partial\psi(e)}{\partial e}H_{s-1}^{\T}\right\} U_{t,s-1,i}(\alpha_{0})\bigg]\\
 & \qquad+o_{\pr}(1)\\
 & =n^{-1/2}\sum_{i=1}^{n}V_{\gamma,i}(\alpha_{0},\gamma_{0})+o_{\pr}(1).
\end{align*}

By the regularity condition C15, the ATE estimator $\hat{\tau}=n^{-1}\sum_{i=1}^{n}g(X_{i};\hat{\gamma}^{(0)})$
can be linearized as
\[
\hat{\tau}-\tau_{0}=\frac{1}{n}\sum_{i=1}^{n}\frac{\partial g(X_{i};\gamma_{0}^{(0)})}{\partial\gamma^{\T}}(\hat{\gamma}^{(0)}-\gamma_{0}^{(0)})+o_{\pr}(1).
\]
Since $\hat{\gamma}^{(0)}=c^{\T}\hat{\gamma}=(\mathbf{I}_{d_{0}},\mathbf{0}_{d_{0}\times d_{1}})\hat{\gamma}$,
by Theorem 1 and apply delta-method, we have the linearization
form of $\hat{\tau}$ as
\begin{align*}
\hat{\tau}-\tau_{0} & =\frac{1}{n}\sum_{i=1}^{n}\frac{\partial g(X_{i};\gamma_{0}^{(0)})}{\partial\gamma^{\T}}c^{\T}V_{\gamma,i}(\alpha_{0},\gamma_{0})+o_{\pr}(n^{-1/2})\\
 & =\frac{1}{n}\sum_{i=1}^{n}V_{\tau,i}(\alpha_{0},\gamma_{0})+o_{\pr}(n^{-1/2}).
\end{align*}
Under the regularity condition C16 and apply the central limit theorem,
we complete the proof. \end{proof}

\subsection{An example: using the interaction model for the ATE estimation \label{subsec:supp_interaction}}

The working model in the form of (2) in the main text covers a wide
range of analysis models in practice. We give an example of using
the interaction model for analysis, i.e., fit the regression model
with the interaction between the treatment variable and the baseline
covariates for the imputed data, as it is one of the most common models
in the clinical trials suggested in \citet{international2019addendum}.

\begin{example}\label{exmp:interaction} When using an interaction
model in the analysis step, the working model can be written as $\mu(A,X\mid\gamma)=AX^{\T}\gamma^{(0)}-X^{\T}\gamma^{(1)}$,
and the ATE estimator $\hat{\tau}$ can then be obtained by solving
the estimating equations 
\begin{align*}
\sum_{i=1}^{n}\begin{pmatrix}\psi(Y_{it}^{*}-A_{i}X_{i}^{\T}\gamma^{(0)}-X_{i}^{\T}\gamma^{(1)})(A_{i}X_{i}^{\T},X_{i}^{\T})^{\T}\\
{\color{black}{\color{black}{\color{red}{\color{black}X_{i}^{\T}\gamma^{(0)}-\tau}}}}
\end{pmatrix} & =0.
\end{align*}
Denote $V_{i}=(A_{i}X_{i}^{\T},X_{i}^{\T})^{\T}$ and $\gamma_{0}=(\gamma_{0}^{(0)\T},\gamma_{0}^{(1)\T})^{\T}$
such that $\E\big\{\psi(Y_{it}^{*}-A_{i}X_{i}^{\T}\gamma^{(0)}-X_{i}^{\T}\gamma^{(1)})\allowbreak (A_{i}X_{i}^{\T},X_{i}^{\T})^{\T}\big\}=0$.
Applying Theorems 1 and 2, the estimator
$\hat{\tau}\xrightarrow{\pr}\tau_{0}$ and $\sqrt{n}(\hat{\tau}-\tau_{0})\xrightarrow{d}\mathcal{N}\Big(0,\mathbb{V}\left\{ V_{\tau,i}(\alpha_{0},\gamma_{0},\mu_{X})\right\} \Big),$
where $V_{\tau,i}(\alpha_{0},\gamma_{0},\mu_{X})=(X_{i}-\mu_{X})^{\T}\gamma_{0}^{(0)}+\mu_{X}^{\T}V_{\gamma^{(0)},i}(\alpha_{0},\gamma_{0})$,
\begin{align*}
V_{\gamma^{(0)},i}(\alpha_{0},\gamma_{0}) & =c^{\T}D_{\varphi}^{-1}\bigg[\psi(e_{i})V_{i}+\sum_{s=1}^{t}\E\left\{ R_{is-1}(1-R_{is})V_{i}\frac{\partial\psi(e_{i})}{\partial e_{i}}H_{is-1}^{\T}\right\} U_{t,s-1,i}(\alpha_{0})\bigg],
\end{align*}
$U_{t,s-1,i}(\alpha_{0})=\left(\I_{p+s-2},\alpha_{s-1,0}\right)U_{t,s,i}(\alpha_{0})+\left(\mathbf{0}_{p+s-2}^{\T},1\right)\beta_{t,s}q(H_{is},\alpha_{s-1,0})$,
$U_{t,t-1,i}(\alpha_{0})=q(H_{it},\alpha_{t-1,0})$, and 
\[
q(H_{is},\alpha_{s-1,0})=\left[-\frac{\partial\E\left\{ \varphi(H_{is},\alpha_{s-1,0})H_{is-1}^{\T}\mid H_{is-1}\right\} }{\partial\alpha_{s-1}^{\T}}\right]^{-1}\varphi(H_{is},\alpha_{s-1,0}).
\]
Here, $c=(\I_{p},\mathbf{0}_{p\times p})$, where $\I_{p}$ is a $(p\times p)$-dimensional
identity matrix, $\mathbf{0}_{p\times p}$ is a $(p\times p)$-dimensional
zero matrix, $e_{i}=Y_{it}^{*}(\beta_{t})-V_{i}^{\T}\gamma^{*}$,
$D_{\varphi}=\partial\E\left\{ \psi(e_{i})V_{i}\right\} /\partial\gamma^{\T}$,
and 
\[
\begin{cases}
\beta_{t,t-1}=\alpha_{t-1,0} & \text{if \ensuremath{s=t}},\\
\beta_{t,s-1}=(\I_{p+s-2},\alpha_{s-1,0})(\I_{p+s-1},\alpha_{s,0})\cdots(\I_{p+t-3},\alpha_{t-2,0})\alpha_{t-1,0} & \text{if \ensuremath{s<t}},
\end{cases}
\]
for $s=1,\cdots,t$.

\end{example}

The asymptotic variance in Example \ref{exmp:interaction} motivates
us to obtain a linearization-based variance estimator by plugging
in the estimated values as
\[
\hat{\V}(\hat{\tau})=\frac{1}{n^{2}}\sum_{i=1}^{n}\left\{ V_{\tau,i}(\hat{\alpha}^{w},\hat{\gamma},\hat{\mu}_{X})-\bar{V}_{\tau}(\hat{\alpha}^{w},\hat{\gamma},\hat{\mu}_{X})\right\} ^{2},
\]
where $\bar{V}_{\tau}(\hat{\alpha}^{w},\hat{\gamma},\hat{\mu}_{X})=n^{-1}\sum_{i=1}^{n}V_{\tau,i}(\hat{\alpha}^{w},\hat{\gamma},\hat{\mu}_{X})$,
$V_{\tau,i}(\hat{\alpha}^{w},\hat{\gamma},\hat{\mu}_{X})=(X_{i}-\hat{\mu}_{X})^{\T}\hat{\gamma}^{(0)}+\hat{\mu}_{X}^{\T}V_{\gamma^{(0)},i}(\hat{\alpha}^{w},\hat{\gamma})$,
\begin{align*}
V_{\gamma^{(0)},i}(\hat{\alpha}^{w},\hat{\gamma}) & =c^{\T}\hat{D}_{\varphi}^{-1}\psi(\hat{e}_{i})V_{i}+\sum_{s=1}^{t}\left\{ \frac{1}{n}\sum_{i=1}^{n}R_{is-1}(1-R_{is})V_{i}\frac{\partial\psi(\hat{e}_{i})}{\partial e_{i}}H_{is-1}^{\T}\right\} U_{t,s-1,i}(\hat{\alpha}^{w}),
\end{align*}
$U_{t,s-1,i}(\hat{\alpha}^{w})=\left(\I_{p+s-2},\hat{\alpha}_{s-1}^{w}\right)U_{t,s,i}(\hat{\alpha}^{w})+\left(\mathbf{0}_{p+s-2}^{\T},1\right)\hat{\beta}_{t,s}\hat{q}(H_{is},\hat{\alpha}_{s-1}^{w})$,
$U_{t,t-1,i}(\hat{\alpha}^{w})=\hat{q}(H_{it},\hat{\alpha}_{t-1}^{w})$,
and 
\[
\hat{q}(H_{is},\hat{\alpha}_{s-1}^{w})=\bigg(-\frac{1}{n}\sum_{i=1}^{n}\frac{\partial\varphi(H_{is},\hat{\alpha}_{s-1}^{w})}{\partial\alpha_{s-1}^{\T}}H_{is-1}^{\T}\bigg)^{-1}\varphi(H_{is},\hat{\alpha}_{s-1}^{w}).
\]
Also, $\hat{e}_{i}=Y_{it}^{*}-V_{i}^{\T}\hat{\gamma}$, $\hat{D}_{\varphi}=n^{-1}\sum_{i=1}^{n}\partial\psi(\hat{e}_{i})V_{i}/\partial\gamma^{\T}$,
and 
\[
\begin{cases}
\hat{\beta}_{t,t-1}=\hat{\alpha}_{t-1}^{w} & \text{if \ensuremath{s=t}},\\
\hat{\beta}_{t,s-1}=(\I_{p+s-2},\hat{\alpha}_{s-1}^{w})(\I_{p+s-1},\hat{\alpha}_{s}^{w})\cdots(\I_{p+t-3},\hat{\alpha}_{t-2}^{w})\hat{\alpha}_{t-1}^{w} & \text{if \ensuremath{s<t}},
\end{cases}
\]
for $s=1,\cdots,t$. In practice, $\hat{\mu}_{X}$ is estimated by
the overall mean of the baseline covariates. We can also use the nonparametric
bootstrap to obtain a replication-based variance estimator. In the
simulation studies and real data application, we use the interaction
model for analysis.

\section{Illustration of the sequential regression procedure \label{sec:supp_seqreg}}

\subsection{Sequential linear regression}

In the main text, the sequential linear regression is mentioned multiple
times in Sections 2, 3, and 4,
under the assumed scenario where the current outcomes and the the
historical covariates have a linear relationship. Since the imputation
model under J2R focuses on the control group, we fit the current observed
outcomes $Y_{s}$ in the control group against the historical information
$H_{s-1}$ via a linear model to get the model parameter estimator
$\hat{\alpha}_{s-1}$ by solving the estimating equations $\sum_{i=1}^{n}(1-A_{i})R_{is}H_{is-1}(Y_{is}-H_{is-1}^{\T}\alpha_{s-1})=0$.
Our proposed weighted sequential robust regression model, whose robust
loss function is of the form (1), is motivated by this
sequential linear regression model. 

\subsection{Extension to general sequential regression}

In Section 3 in the main text, we mentioned a possible extension to
the nonlinear relationship between the current outcomes and the historical
covariates for the ATE identification. We now provide some insights
into it. The key for the ATE identification under the PMM framework
is to form the assumption of the pattern-specific expectation $\E(Y_{it}\mid R_{is-1}=1,R_{is}=0,A_{i}=a)$,
which can be identified via the iterated expectations $\E(Y_{it}\mid H_{is-1},A_{i}=0)=\E\big\{\cdots\E(Y_{it}\mid H_{it-1},R_{it}=1,A_{i}=0)\cdots\mid H_{is-1},R_{is}=1,A_{i}=0\big\}$
based on Assumptions 1 and 2
under J2R. If a nonlinear relationship is suspected, we can consider
adding nonlinear terms in the parametric models or turn to flexible
models such as semiparametric models or machine learning models for
model fitting. One natural way to estimate the iterated expectation
$\E(Y_{it}\mid H_{is-1},A_{i}=0)$ is to fit the sequential regressions
(via flexible models or parametric models with nonlinear terms) in
backward order. We again focus on the control group and give the detailed
implementation steps as follows.

\begin{enumerate}
\setlength{\itemindent}{1.5em}

\item[\textbf{Step 1}.] For the participants who are fully observed,
i.e., with the observed indicator $R_{it}=1$, fit the regression
model on $Y_{it}$ against the history $H_{it-1}$. Use the fitted
model to predict the outcomes for those who are observed until $(t-1)$th
visit time. Denote the predicted outcomes as $\hat{\E}(Y_{it}\mid H_{it-1},R_{it}=1,A_{i}=0)$.

\item[\textbf{Step 2}.] For the participants who are observed until
$(t-1)$th visit, fit the regression model on the predicted outcomes
$\hat{\E}(Y_{it}\mid H_{it-1},R_{it}=1,A_{i}=0)$ obtained in Step
1 against the history $H_{it-2}$. Use the fitted model to predict
the outcomes for those who are observed until $(t-2)$th visit time.
Denote the predicted outcomes as $\hat{\E}\big\{ \hat{\E}(Y_{it}\mid H_{it-1},R_{it}=1,A_{i}=0)\mid H_{it-2},R_{it-1}=1,\allowbreak A_{i}=0\big\} $.

\item[\textbf{Step 3}.] Follow the similar procedure $(t-s-2)$ times
by fitting the regression model in backward order. Obtain the predicted
outcomes $\hat{\E}(Y_{it}\mid H_{is-1},A_{i}=0)$ at last.

\end{enumerate}

\section{Additional notes on the simulation studies \label{sec:supp_sim}}

\subsection{Simulation setting \label{subsec:supp_simuset}}

In the simulation studies, the sample size is 500 for each group.
The baseline covariates $X=(X_{1},X_{2})^{\T}$ are generated independently
by $X_{1}\sim\mathcal{N}(0,1)$ and $X_{2}\sim\text{Bernoulli}(0.3)$.
The longitudinal outcomes are generated sequentially: 
\begin{enumerate}
\item at $t=1$, generate $Y_{1}=0.5+X_{1}-0.2X_{2}+\varepsilon_{1}$ for
both groups;
\item at $t=2$, generate
\[
\begin{cases}
Y_{2}=0.4+0.14X_{1}+0.52X_{2}+0.01Y_{1}+\varepsilon_{2} & \text{ if \ensuremath{A=0};}\\
Y_{2}=1.79+0.35X_{1}-0.05X_{2}+0.33Y_{1}+\varepsilon_{2} & \text{ if \ensuremath{A=1};}
\end{cases}
\]
\item at $t=3$, generate
\[
\begin{cases}
Y_{3}=0.77+0.02X_{1}+0.06X_{2}+0.71Y_{1}+0.84Y_{2}+\varepsilon_{3} & \text{ if \ensuremath{A=0};}\\
Y_{3}=2.52+1.16X_{1}-0.51X_{2}-1.53Y_{1}+0.46Y_{2}+\varepsilon_{3} & \text{ if \ensuremath{A=1};}
\end{cases}
\]
\item at $t=4$, generate
\[
\begin{cases}
Y_{4}=1.44-0.45X_{1}-0.24X_{2}-0.50Y_{1}-0.39Y_{2}+0.53Y_{3}+\varepsilon_{4} & \text{\text{ if \ensuremath{A=0};}}\\
Y_{4}=2.72-0.46X_{1}-0.06X_{2}+0.91Y_{1}+0.19Y_{2}+0.70Y_{3}+\varepsilon_{4} & \text{\text{ if \ensuremath{A=1};}}
\end{cases}
\]
\item at $t=5$, generate
\[
\begin{cases}
Y_{5}=4.37-0.84X_{1}-0.31X_{2}+0.01Y_{1}+0.35Y_{2}-0.32Y_{3}+0.81Y_{4}+\varepsilon_{5} & \text{\text{ if \ensuremath{A=0};}}\\
Y_{5}=4.21-0.02X_{1}-1.26X_{2}+0.24Y_{1}-0.18Y_{2}+0.65Y_{3}+0.13Y_{4}+\varepsilon_{5} & \text{\text{ if \ensuremath{A=1};}}
\end{cases}
\]
\end{enumerate}
where $\varepsilon_{k}$ is from a distribution with mean $0$ and
standard deviation $\sigma_{k}$, and $\sigma=(\sigma_{1},\cdots,\sigma_{5})^{\T}=(2.0,1.8,2.0,2.1,2.2)^{\T}.$
For the missing mechanisms, We set $\phi_{11}=-3.5,\phi_{12}=-3.6,\phi_{21}=\phi_{22}=0.2$.
The tuning parameter in the weighted robust regression is set as 10.
We also try to use cross-validation to obtain the tuning parameters,
which leads to very similar results. Therefore, to save computation
time, the tuning parameter is fixed in the MC simulation as $q_{s-1}=10$
for $s=1,\cdots,5$.

We consider two cases with the existence of extreme outliers or a
heavy-tailed distribution as follows.
\begin{enumerate}
\item Data with/without extreme outliers: The error terms are generated
by $\varepsilon_{k}\sim\mathcal{N}(0,\sigma_{k}^{2})$ to form the
multivariate normal distribution (MVN). To create the outliers, we
randomly select 10 individuals from the 30 completers with the maximum
outcomes at the last visit point per group and multiply the original
values by three for all post-baseline outcomes. 
\item Data from a heavy-tailed distribution: We choose a common heavy-tailed
distribution as t distribution. The error terms are generated by $\varepsilon_{k}\sim(3/5)^{1/2}\sigma_{k}t_{5}$
to get the same variation as the normal distribution, where $t_{5}$
is the standard t-distribution with the degrees of freedom as 5. 
\end{enumerate}

\subsection{Additional simulation results \label{subsec:supp_simtab}}

For the data with/without extreme outliers, apart from Table 1(b) in the main text,
we consider two more cases to incorporate the outliers only in one
specific group, with the same approach to generate the outliers as
presented in the main text. We again compare all the methods in terms
of point and variance estimation, type-1 error, power, and RMSE. 

Similar to the interpretation from Table 1 in the
main text, Table \ref{table:extreme-add} validates the superiority
of the proposed robust method, as it shows unbiased point estimates,
well-controlled type-1 errors under $H_{0}$, and high powers under
$H_{1}$. 

\begin{table}[!htbp]
\centering{}\centering \caption{Simulation results under the normal distribution with extreme points
at all post-baseline visit points. Here the true value $\tau=71.18\%$.}
\label{table:extreme-add} \scalebox{1}{ \resizebox{\textwidth}{!}{%
\begin{tabular}{>{\raggedright}p{0.1\textwidth}>{\centering}p{0.1\textwidth}cccccccccccccc}
\hline 
 &  & Point est & True var & \multicolumn{2}{c}{Var est} &  & \multicolumn{2}{c}{Relative bias} &  & \multicolumn{2}{c}{Coverage rate} &  & \multicolumn{2}{c}{Power} & RMSE\tabularnewline
Case & \multicolumn{1}{c}{Method} & ($\times10^{-2}$) & ($\times10^{-2}$) & \multicolumn{2}{c}{($\times10^{-2}$)} &  & \multicolumn{2}{c}{($\%$)} &  & \multicolumn{2}{c}{($\%$)} &  & \multicolumn{2}{c}{($\%$)} & ($\times10^{-2}$)\tabularnewline
 &  &  &  & $\hat{V}_{1}$ & $\hat{V}_{\text{Boot}}$ &  & $\hat{V}_{1}$ & $\hat{V}_{\text{Boot}}$ &  & $\hat{V}_{1}$ & $\hat{V}_{\text{Boot}}$ &  & $\hat{V}_{1}$ & $\hat{V}_{\text{Boot}}$ & \tabularnewline
\hline 
\multirow{3}{0.1\textwidth}{Outliers only in control} & \multicolumn{1}{c}{MI} & 43.07 & 4.29 & 13.10 & 6.23 &  & 205.40 & 45.37 &  & 98.00 & 84.80 &  & 8.80 & 40.80 & 34.90\tabularnewline
 & LSE & 51.44  & 3.66  & 4.71  & 4.39  &  & 28.68  & 19.84  &  & 88.70  & 86.90  &  & 68.60  & 70.70  & 27.48 \tabularnewline
 & Robust & 74.86 & 3.44 & 3.54 & 3.48 &  & 2.77 & 1.03 &  & 94.80 & 93.90 &  & 98.10 & 97.80 & 18.91\tabularnewline
\hline 
\multirow{3}{0.1\textwidth}{Outliers only in treatment} & \multicolumn{1}{c}{MI} & 116.55 & 3.43 & 8.29 & 6.40 &  & 141.36 & 86.31 &  & 74.40 & 58.80 &  & 100.00 & 100.00 & 49.00\tabularnewline
 & LSE & 94.11  & 3.63  & 4.73  & 4.69  &  & 30.24  & 28.98  &  & 84.90  & 86.20  &  & 99.70  & 99.70  & 29.82 \tabularnewline
 & Robust & 67.71 & 3.28 & 3.41 & 3.38 &  & 3.94 & 3.04 &  & 94.50 & 93.70 &  & 95.40 & 96.00 & 18.44\tabularnewline
\hline 
\end{tabular}} }
\end{table}

We also conduct the simulations under $H_{0}$ for each case. Under
$H_{0}$, we choose the same sequential regression coefficients for
both the control group and the treatment group. In addition, the tuning
parameter in the missing mechanism model is set as $\phi_{11}=\phi_{12}=-3.5$
and $\phi_{21}=\phi_{22}=0.2$. To achieve the accuracy of $0.01$,
we choose the Monte Carlo sample size as $10,000$.

Table \ref{table:extreme-h0} presents the simulation results under
MVN and $H_{0}$ without or with extreme outliers. Although the point
estimates seem to be unbiased when outliers exist (since we generate
the outliers in the same way for both groups, the bias for each group
cancels off), the type-1 error is extremely far away from the empirical
value, suggesting huge variabilities for the MI and LSE methods. The
proposed robust method outperforms as we observe a well-controlled
type-1 error, satisfying point and variance estimation results. The first two rows of Figure 3 in the main text visualizes the simulation results.

\begin{table}[!htbp]
\centering{}\centering \caption{Simulation results under the normal distribution and $H_{0}$ without
or with extreme outliers. Here the true value $\tau=0$.}
\vspace{1ex}
\label{table:extreme-h0} \scalebox{1}{ \resizebox{\textwidth}{!}{%
\begin{tabular}{>{\raggedright}p{0.1\textwidth}>{\centering}p{0.1\textwidth}cccccccccccc}
\hline 
 &  & Point est & True var & \multicolumn{2}{c}{Var est} &  & \multicolumn{2}{c}{Relative bias} &  & \multicolumn{2}{c}{Type-1 error} &  & RMSE\tabularnewline
Case & \multicolumn{1}{c}{Method} & ($\times10^{-2}$) & ($\times10^{-2}$) & \multicolumn{2}{c}{($\times10^{-2}$)} &  & \multicolumn{2}{c}{($\%$)} &  & \multicolumn{2}{c}{($\%$)} &  & ($\times10^{-2}$)\tabularnewline
 &  &  &  & $\hat{V}_{1}$ & $\hat{V}_{\text{Boot}}$ &  & $\hat{V}_{1}$ & $\hat{V}_{\text{Boot}}$ &  & $\hat{V}_{1}$ & $\hat{V}_{\text{Boot}}$ &  & \tabularnewline
\hline 
\multirow{3}{0.1\textwidth}{No outliers} & \multicolumn{1}{c}{MI} & 0.02  & 2.76  & 3.73  & 2.75  &  & 35.35  & -0.23  &  & 2.12  & 5.17  &  & 16.60 \tabularnewline
 & LSE & 0.03  & 2.72  & 2.71  & 2.71  &  & -0.25  & -0.21  &  & 4.86  & 5.16  &  & 16.49 \tabularnewline
 & Robust & -0.94  & 2.94  & 2.89  & 2.96  &  & -1.50  & 0.73  &  & 4.96  & 5.06  &  & 17.17 \tabularnewline
\hline 
\multirow{3}{0.1\textwidth}{Outliers in both groups} & \multicolumn{1}{c}{MI} & 0.13  & 3.61  & 11.55  & 8.77  &  & 219.61  & 142.52  &  & 0.07  & 0.36  &  & 19.01 \tabularnewline
 & LSE & 0.01  & 3.90  & 5.49  & 5.53  &  & 40.92  & 41.93  &  & 1.89  & 2.17  &  & 19.74 \tabularnewline
 & Robust & -1.05  & 3.03  & 2.90  & 3.00  &  & -4.36  & -1.05  &  & 5.26  & 5.29  &  & 17.45 \tabularnewline
\hline 
\end{tabular}} }
\end{table}

Table \ref{table:mvt h0} presents the simulation results under MVT
and $H_{0}$. Although all the methods have unbiased point estimates,
the proposed robust method is more efficient as the MC variance and
RMSE are small. The last row of Figure 3 also visualizes the simulation
results.

\begin{table}[!htbp]
\centering \caption{Simulation results under the t-distribution and $H_{0}$. Here the
true value $\tau=0$.}
 \scalebox{1}{ \resizebox{\textwidth}{!}{%
\begin{tabular}{>{\centering}p{0.1\textwidth}cccccccccccc}
\toprule 
 & Point est & True var & \multicolumn{2}{c}{Var est} &  & \multicolumn{2}{c}{Relative bias} &  & \multicolumn{2}{c}{Type-1 error} &  & RMSE\tabularnewline
Method & ($\times10^{-2}$) & ($\times10^{-2}$) & \multicolumn{2}{c}{($\times10^{-2}$)} &  & \multicolumn{2}{c}{($\%$)} &  & \multicolumn{2}{c}{($\%$)} &  & ($\times10^{-2}$)\tabularnewline
 &  &  & $\hat{V}_{1}$ & $\hat{V}_{\text{Boot}}$ &  & $\hat{V}_{1}$ & $\hat{V}_{\text{Boot}}$ &  & $\hat{V}_{1}$ & $\hat{V}_{\text{Boot}}$ &  & \tabularnewline
\midrule
\multicolumn{1}{c}{MI} & -0.14 & 2.78 & 3.75 & 2.76 &  & 34.87 & -0.77 &  & 2.33 & 5.35 &  & 16.68\tabularnewline
LSE & -0.13 & 2.76 & 2.73 & 2.73 &  & -1.05 & -0.95 &  & 5.09 & 5.39 &  & 16.60\tabularnewline
Robust & -1.05 & 2.46 & 2.41 & 2.47 &  & -2.18 & 0.52 &  & 5.38 & 5.47 &  & 15.72\tabularnewline
\bottomrule
\end{tabular}} } \label{table:mvt h0}
\end{table}

\section{Additional notes on the real-data application \label{sec:supp_real}}

The repeated CD4 count data is available at \url{https://content.sph.harvard.edu/fitzmaur/ala/cd4.txt}.
It keeps track of the longitudinal CD4 counts during the first 40
weeks of the clinical trial. Since the original CD4 counts are highly
skewed, we conduct a log transformation to get the transformed CD4
count as $\log(\text{CD4}+1)$ and use it as the outcome of interest.
As the longitudinal outcomes are collected at 8-week intervals, we
factorize the continuous-time variable into the intervals $(0,12]$,
$(12,20]$, $(20,28]$, $(28,36]$ and $(36,40]$. To ensure that
only one outcome is involved in a time interval for each individual,
only the outcome that is nearest to week $8k$ in the $k$th visit
interval is preserved for $k=1,\cdots,5$. Since our proposed method
is only valid for a monotone missingness pattern, we delete the observations
after the first occurrence of missingness for each individual to create
a monotone missingness dataset and use it for further analysis. The
fully-observed baseline covariates consist of age, gender, and the
baseline log CD4 counts. The created data suffers from severe missingness.
In arm 1, only 34 participants complete the study, while 94 drop out
before week 12, 52 drop out before week 20, 47 drop out before week
28, 17 drop out before week 36, and 76 drop out before week 40; in
arm 2, only 46 participants complete the study, while 94 drop out
before week 12, 48 drop out before week 20, 51 drop out before week
28, 20 drop out before week 36, and 71 drop out before week 40. 

We first conduct a scrutiny of the data to check the existence of
extreme outliers and/or a violation of normality. Figure 1 in the main text
presents the spaghetti plots of the repeated CD4 counts separated
by each treatment. From the figure, there are no outstanding outliers
in the data. Arm 2 has a higher average of the CD4 counts than arm
1. 

Then we check for normality by fitting sequential linear regressions
on the current outcomes against all historical information in arm
1 and examining the conditional residuals at each visit point for
model diagnosis. Figure 2 in the main text presents the QQ normal
plots for the conditional residuals. Note that we only focus on the
data in arm 1 since the imputation model under J2R relies solely on
the data in the reference group. From the figure, heavier tails are
detected at each visit point beyond the confidence region. We further
conduct the Shapiro-Wilk normality test for the conditional residuals.
All the tests return p-values that are much smaller than $0.05$,
therefore we reject the null hypothesis and conclude that the data
does not follow a normal distribution. Moreover, we conduct a symmetry
test proposed by \citet{miao2006new} on the conditional residuals.
All the resulting p-values are larger than $0.05$ and suggests that
the residuals are symmetric around 0, which allows us to obtain valid
inferences of the ATE via the proposed robust methods. All the test
results are presented in Figure 2 at visit $s$ for $s=1,\cdots,5$.

In the implementation of the weighted robust method, the tuning parameters
in formula (1) are selected via cross-validation. Specifically,
to mitigate the impact of outliers that are existed in the covariates
in the imputation model, we first conduct the cross-validation to
select the tuning parameter at each visit point in the sequential
robust regression that returns the smallest MSE, then insert the chosen
tuning parameters in the imputation model and further select the tuning
parameter for the analysis model in each group by cross-validation.
The resulting tuning parameters for the imputation model are $(20,19.5,17.5,15,8)$.
The choice of tuning parameters is not sensitive to the final estimation.

\end{document}